\newcommand{\de}{\ensuremath{\partial}}
\newcommand{\dee}{\ensuremath{\textrm{d}}}
\newcommand{\pd}[2]{\ensuremath{ \frac{\de #1}{\de #2}}}
\newcommand{\fdf}[1]{\ensuremath{ \frac{\dee}{\dee #1}}}
\newcommand{\inty}[4]{\ensuremath{ \int_{#1}^{#2} \! #3 \, \dee#4 }}
\newcommand{\field}[1]{\mathbb{#1}}
\newcommand{\ip}[2]{\ensuremath{ \left< #1 | #2 \right> } }
\newcommand{\ipy}[2]{\ensuremath{ \left< #1 | #2 \right>_{L^2_y(\field{R}^d)} } }
\newcommand{\ipz}[2]{\ensuremath{ \left< #1 | #2 \right>_{L^2_z(\Omega)}}}
\newtheorem{definition}{Definition}
\newtheorem{assumption}{Assumption}
\newtheorem{remark}{Remark}
\newtheorem{theorem}{Theorem}
\newtheorem{lemma}{Lemma}
\newtheorem{corollary}{Corollary}
\newtheorem{proposition}{Proposition}
\let\oldhat\hat
\renewcommand{\hat}[1]{\oldhat{\boldsymbol{#1}}}
\numberwithin{lemma}{section}
\numberwithin{proposition}{section}
\numberwithin{equation}{section}
\numberwithin{theorem}{section}
\numberwithin{corollary}{section}
\numberwithin{remark}{section}
\numberwithin{definition}{section}
\numberwithin{assumption}{section}
\title[Effective particle-field dynamics and Berry curvature]{Wavepackets in inhomogeneous periodic media: effective particle-field dynamics and Berry curvature}
\author{Alexander Watson, Jianfeng Lu, and Michael I. Weinstein}
\begin{document}

\captionsetup[subfigure]{labelformat=empty}

\maketitle

\begin{abstract}
We consider a model of an electron in a crystal moving under the influence of an external electric field: Schr\"{o}dinger's equation with a potential which is the sum of a periodic function and a general smooth function. We identify two dimensionless parameters: (re-scaled) Planck's constant and the ratio of the lattice spacing to the scale of variation of the external potential. We consider the special case where both parameters are equal and denote this parameter $\epsilon$. In the limit $\epsilon \downarrow 0$, we prove the existence of solutions known as semiclassical wavepackets which are asymptotic up to `Ehrenfest time' $t \sim \ln 1/\epsilon$. To leading order, the center of mass and average quasi-momentum of these solutions evolve along trajectories generated by the classical Hamiltonian given by the sum of the Bloch band energy and the external potential. We then derive all corrections to the evolution of these observables proportional to $\epsilon$. The corrections depend on the gauge-invariant Berry curvature of the Bloch band, and a coupling to the evolution of the wave-packet envelope which satisfies Schr\"{o}dinger's equation with a time-dependent harmonic oscillator Hamiltonian. This infinite dimensional coupled `particle-field' system may be derived from an `extended' $\epsilon$-dependent Hamiltonian.
 It is known that such coupling of observables (discrete particle-like degrees of freedom) to the wave-envelope (continuum field-like degrees of freedom) can have a significant impact on the overall dynamics. \end{abstract}

\newpage

\tableofcontents

\newpage

\subsection{Notation and conventions} \label{notation_and_conventions}
\begin{itemize}
\item Where necessary to avoid ambiguity we will use index notation, making the standard convention that repeated indices are summed over from $1$ to $d$. Thus, in the expression: 
\begin{equation} \label{eq:long_term}
	\de_{p_\alpha} \de_{p_\beta} f(p) ( - i \de_{y_\alpha}) (- i \de_{y_\beta}) g(y), 
\end{equation}
it is understood that we are summing over $\alpha \in \{1,...,d\}, \beta \in \{1,...,d\}$. 
\item Where there is no danger of confusion we will use the standard conventions: 
\begin{equation}
	v_\beta w_\beta = v \cdot w, v_\beta v_\beta = v \cdot v
\end{equation}
\item $\Delta_x = \nabla_x^2$ denotes the $d-$dimensional Laplacian
\item $D^2_x$ denotes the $d-$dimensional Hessian matrix with respect to $x$
\item We will adopt multi-index notation where appropriate so that:
\begin{equation}
	\sum_{|\alpha| = l} | \de_x^\alpha f(x) | \in L^\infty(\field{R}^d) 
\end{equation}  
means all derivatives of order $l$ of $f(x)$ are uniformly bounded.
\item It will be useful to introduce the energy spaces for every $l \in \field{N}$:
\begin{equation} \label{eq:sigma_l_spaces}
	\Sigma^l(\field{R}^d) := \left\{ f \in L^2(\field{R}^d) : \| f \|_{\Sigma^l} := \sum_{|\alpha| + |\beta| \leq l} \| y^\alpha (- i \de_y)^\beta f(y) \|_{L^2_y} < \infty, \right\}  
\end{equation}
\item The space of Schwartz functions $\mathcal{S}(\field{R}^d)$ is the space of functions defined as:
\begin{equation}
	\mathcal{S}(\field{R}^d) := \cap_{l \in \field{N}} \Sigma^l(\field{R}^d). 
\end{equation}
\item We will refer throughout to the space of $L^2$-integrable functions which are periodic on the lattice $\Lambda$:
\begin{equation}
	L^2_{per} := \left\{ f \in L^2_{loc}(\field{R}^d) : \text{ for all } z \in \field{R}^d, v \in \Lambda, f(z+v) = f(z) \right\}.
\end{equation}
\item We will write a fundamental period cell in $\field{R}^d$ of the lattice $\Lambda$ as $\Omega$.
\item We will make use of the Sobolev norms on a fundamental period cell $\Omega$ for integers $s \geq 0$:
\begin{equation}
	\| f(z) \|_{H^s_z} := \sum_{|j| \leq s} \| (\de_z)^j f(z) \|_{L^2_z}.
\end{equation}
\item It will also be useful to introduce the `shifted' Sobolev norms, for arbitrary $p \in \field{R}^d$:
\begin{equation}
	\| f(z) \|_{H^s_{z,p}} := \sum_{|j| \leq s} \| (p - i \de_z)^j f(z) \|_{L^2_z}
\end{equation}
\item Define the dual lattice to $\Lambda$:
\begin{equation}
	\Lambda^* := \left\{ b \in \field{R}^d : \exists v \in \Lambda : b \cdot v = 2 \pi n, n \in \field{Z} \right\}
\end{equation}
\item We will refer to a fundamental cell in $\field{R}^d_k$ of the dual lattice $\Lambda^*$ as the Brillouin zone, or $\mathcal{B}$
\item We make the standard convention for the $L^2$-inner product:
\begin{equation}
	\ip{ f }{ g }_{L^2(\mathcal{D})} := \inty{\mathcal{D}}{}{ \overline{f(x)} g(x) }{x} 
\end{equation}
\item We will make the conventions:
\begin{equation}
\begin{split}
	&f^\epsilon(x,t) = O(\epsilon^K e^{c t}) \iff \exists c > 0, C > 0, \text{ independent of $t, \epsilon$ such that } \| f^\epsilon(x,t) \|_{L^2_x} \leq C \epsilon^K e^{c t} 	\\
	&g^\epsilon(t) = O(\epsilon^K e^{c t}) \iff \exists c > 0, C > 0, \text{ independent of $t, \epsilon$ such that } | g^\epsilon(t) | \leq C \epsilon^K e^{c t} 	\\
\end{split}
\end{equation} 
\item Let $A$ be a complex matrix. Then we will write $A^T$ for its transpose, $\overline{A}$ for its complex conjugate, and $\text{Tr} A$ for its trace. Using index notation:
\begin{equation}
	(A_{\alpha \beta})^T := A_{\beta \alpha}, \quad (\overline{A})_{\alpha \beta} := \overline{ A_{\alpha \beta} }, \quad \text{Tr} A := A_{\alpha \alpha}
\end{equation}
\item The Kronecker delta $\delta_{\alpha \beta}$ is defined:
\begin{equation} \label{eq:kronecker_delta}
	\delta_{\alpha \beta} = \begin{cases} +1 &\text{when } \alpha = \beta \\ 0 &\text{when } \alpha \neq \beta \end{cases} 
\end{equation}
\item In dimension $d = 3$, the Levi-Civita symbol $\varepsilon_{\alpha \beta \gamma}$ is defined:
\begin{equation} \label{eq:levi_civita_symbol}
	\varepsilon_{\alpha \beta \gamma} := \begin{cases} +1 &\text{when }(\alpha, \beta, \gamma) \in \{ (1,2,3), (2,3,1), (3,1,2) \} \\ -1 &\text{when } (\alpha, \beta, \gamma) \in \{ (1,3,2), (2,3,1), (3,2,1) \}  \\ 0 &\text{when } \alpha = \beta, \beta = \gamma, \text{ or } \gamma = \alpha \end{cases}  	
\end{equation} 
and satisfies the identities:
\begin{equation} 
\begin{split}
	&\varepsilon_{\alpha \beta \gamma} = \varepsilon_{\beta \gamma \alpha} = \varepsilon_{\gamma \alpha \beta}	\\
	&\varepsilon_{\phi \alpha \beta} \varepsilon_{\phi \gamma \phi} = \delta_{\alpha \gamma} \delta_{\beta \phi} - \delta_{\beta \gamma} \delta_{\alpha \phi}.
\end{split}
\end{equation}
The cross product of 3-vectors $v, w$ may then be written:
\begin{equation} \label{eq:cross_product}
	( v \times w )_{\alpha} = \epsilon_{\alpha \beta \gamma} v_\beta w_\gamma
\end{equation}
\end{itemize}

\newpage

\section{Introduction}
\noindent In this work we study the non-dimensionalized time-dependent Schr\"{o}dinger equation for $\psi^\epsilon (x,t): \field{R}^d \times [0,\infty) \rightarrow \field{C}$:
\begin{equation} \label{eq:original_equation}
\begin{split}
	&i \epsilon \de_t \psi^\epsilon = - \frac{1}{2} \epsilon^2 \Delta_x \psi^\epsilon + V\left(\frac{x}{\epsilon}\right) \psi^\epsilon + W(x) \psi^\epsilon \\
	&\psi^\epsilon(x,0) = \psi^\epsilon_0(x).
\end{split}
\end{equation} 
Here, $\epsilon$ is a positive real parameter which we assume to be small: $\epsilon \ll 1$. We assume throughout that the function $V$ is smooth and periodic with respect to a $d$-dimensional lattice $\Lambda$ so that:  
\begin{equation}
	V(z + v) = V(z) \text{ for all $v \in \Lambda, z \in \field{R}^d$},
\end{equation}
and that $W$ is smooth with all derivatives uniformly bounded. Equation (\ref{eq:original_equation}) is a well-studied model in condensed matter physics of the dynamics of an electron in a crystal under the independent-particle approximation \cite{ashcroft_mermin}, whose periodic effective potential due to the atomic nuclei is specified by $V$, under the influence of a `slowly varying' external electric field generated by $W$.

In this work we rigorously derive a family of explicit asymptotic solutions of (\ref{eq:original_equation}) known as \emph{semiclassical wavepackets}. We then derive the equations of motion of the center of mass and average quasi-momentum of these solutions, including corrections proportional to $\epsilon$. 

At order $\epsilon^0$, the mean position and momentum of the semi-classical wavepacket evolve along the classical trajectories associated with the `Bloch band' Hamiltonian $\mathcal{H}_n := E_n(p)+W(q)$, where $p\mapsto E_n(p)$ is the dispersion relation associated with the $n^{th}$ spectral (Bloch) band of the periodic Schr\"{o}dinger operator $-\frac12\Delta_z+V(z)$.
The order $\epsilon$ corrections to the leading order equations of motion depend on the gauge-invariant Berry curvature of the Bloch band and the wavepacket envelope. Through order $\epsilon$, the system governing appropriately defined mean position $\mathcal{Q}^\epsilon(t)$, mean momentum $\mathcal{P}^\epsilon(t)$ and wave-amplitude profile $a^\epsilon(y,t)$ is a closed system of Hamiltonian type (Theorem \ref{prop:center_of_mass_proposition}). When $d = 3$, this system takes the form:  
\begin{equation} \label{system}
\begin{split}
	&\begin{matrix} \dot{{\mathcal{Q}^\epsilon}}(t) \\ \dot{\mathcal{P}^\epsilon}(t) \end{matrix} \begin{matrix} = \\ = \end{matrix} \underbrace{\begin{matrix} \nabla_{\mathcal{P}^\epsilon} \mathcal{H}_n(\mathcal{Q}^\epsilon(t),{\mathcal{P}^\epsilon}(t)) \\ - \nabla_{\mathcal{Q}^\epsilon} \mathcal{H}_n(\mathcal{Q}^\epsilon(t),{\mathcal{P}^\epsilon}(t)) \end{matrix}}_{\substack{\text{Dynamics generated by } \\ \text{`Bloch band' Hamiltonian} \\ \mathcal{H}_n :=  E_n(\mathcal{P}^\epsilon) + W({\mathcal{Q}^\epsilon})}} \begin{matrix} + \\ + \end{matrix} \begin{matrix} \epsilon \left\{ \vphantom{\frac{1}{2}} \right. \\ \epsilon \left\{\vphantom{\frac{1}{2}} \right. \end{matrix} \underbrace{\begin{matrix} - \dot{\mathcal{P}^\epsilon}(t) \times \nabla_{\mathcal{P}^\epsilon} \times \mathcal{A}_n(\mathcal{P}^\epsilon(t)) \\ \vphantom{\frac{1}{2}} \end{matrix}}_{\substack{\text{Anomalous velocity due to } \\ \text{Berry curvature } }} \underbrace{\begin{matrix} + C_1[a^\epsilon](t) \\ + C_2[a^\epsilon](t) \end{matrix}}_{\substack{\text{`Particle-field'} \\ \text{coupling to} \\ \text{envelope $a^\epsilon$}}} \begin{matrix} \left.\vphantom{\frac{1}{2}} \right\} \\ \left. \vphantom{\frac{1}{2}} \right\} \end{matrix},	\\
	&i \de_t a^\epsilon = \mathscr{H}^\epsilon_n(t) a^\epsilon; \quad \mathscr{H}^\epsilon_n(t) := \underbrace{- \frac{1}{2} \nabla_y \cdot D^2_{\mathcal{P}^\epsilon} E_n(\mathcal{P}^\epsilon(t)) \nabla_y + \frac{1}{2} y \cdot D^2_{\mathcal{Q}^\epsilon} W(\mathcal{Q}^\epsilon(t)) y}_{\substack{\text{Quantum harmonic oscillator Hamiltonian} \\ \text{with parametric forcing through $\mathcal{Q}^\epsilon(t),\mathcal{P}^\epsilon(t)$}}}.
\end{split}
\end{equation}
Here, $D^2_{\mathcal{P}^\epsilon} E_n, D^2_{\mathcal{Q}^\epsilon} W$ denote Hessian matrices and $\mathcal{A}_n$ is the Berry connection (\ref{eq:berry_connection}). For the explicit forms of $C_1[a], C_2[a]$ and the generalization of (\ref{system}) to arbitrary dimensions $d \geq 1$, see (\ref{eq:eom_for_center_of_mass}). The derivation of the form of the anomalous velocity displayed in (\ref{system}) is given in Remark \ref{rem:anomalous_velocity_cross_products}. 

The `particle-field' dynamical system \eqref{system} appears to be new, and contains terms which are not accounted for in the works of Niu et al. \cite{xiao_chang_niu}. The system reduces, in the case of Gaussian initial data and zero periodic background $V = 0$, to that presented in Proposition 4.4 of \cite{ohsawa} (see also \cite{ohsawa_leok}). 

The asymptotic solutions and effective Hamiltonian system \eqref{system} provide an approximate description of the dynamics of the full PDE (\ref{eq:original_equation}) up to `Ehrenfest time' $t \sim \ln 1/\epsilon$, known to be the general limit of applicability of wavepacket, or coherent state, approximations \cite{schubert_vallejos_toscano}. The validity of the approximation relies on an extension of the result of Carles and Sparber \cite{carles_sparber} (Theorem \ref{th:periodic_background_nonseparable_isolated_band_theorem})

Our methods are applicable when the wavepacket is spectrally localized in a Bloch band which has crossings (degeneracies), as long as the distance in phase space between the average quasi-momentum of the wavepacket and any crossing is uniformly bounded below independent of $\epsilon$ (see Assumption \ref{isolated_band_assumption_nonseparable_case}). We do not attempt a description of wavepacket dynamics when this distance $\downarrow 0$ (propagation through a band crossing), or at an avoided crossing where the separation between bands is proportional to $\epsilon$. We believe that both of these cases may be studied by adapting the work of Hagedorn and Joye \cite{hagedorn}, \cite{hagedorn_joye} on wavepacket dynamics in the Born-Oppenheimer approximation of molecular dynamics to the model (\ref{eq:original_equation}). 

Our methods are also applicable, with some modifications (see Section \ref{sec:discussion_of_results}), to potentials with the general two-scale form $U\left(\frac{x}{\epsilon},x\right)$ where $U$ is periodic in its first argument: 
\begin{equation}
	U(z + v,x) = U(z,x) \text{ for all } z, x \in \field{R}^d, v \in \Lambda
\end{equation}
and $U(z,x)$ is `nonseparable', i.e., cannot be written as the sum of a periodic potential $V(z)$ and an `external' potential $W(x)$. For details, see \cite{thesis}. For ease of presentation we consider in this work only the `separable' case (\ref{eq:original_equation}).  

The semiclassical wavepacket ansatz was introduced by Heller \cite{heller} and Hagedorn \cite{hagedorn_coherent_states_1} to study the uniform background case ($V = 0$) of (\ref{eq:original_equation}). See also related work on Gaussian beams \cite{ralston}. Hagedorn then extended this theory to the case where the potential $W(x)$ is replaced by an $x$-dependent operator in his study of the Born-Oppenheimer approximation of molecular dynamics \cite{hagedorn}. Semiclassical wavepacket solutions of (\ref{eq:original_equation}) in the periodic background case ($V \neq 0$) were then constructed by Carles and Sparber \cite{carles_sparber}. 

The anomalous velocity term in \eqref{system} was first derived by Karplus and Luttinger \cite{1954KarplusLuttinger}. For a derivation in terms of Berry curvature of the Bloch band, see Chang and Niu \cite{chang_niu}. It was then derived rigorously by Panati, Spohn, and Teufel \cite{panati_spohn_teufel_1} (see also \cite{e_lu_yang}\cite{xiao_chang_niu}). This term is responsible for the `intrinsic contribution' to the anomalous Hall effect which occurs in solids with broken time-reversal symmetry (see \cite{nagaosa_sinova_onoda_macdonald_ong} and references therein). The anomalous velocity due to Berry curvature is better known in optics as the spin Hall effect of light and was experimentally observed in \cite{bliokh_niv_kleiner_hasman}.

\subsection{Dimensional analysis, derivation of equation (\ref{eq:original_equation})} 
In this section we derive the non-dimensionalized equation (\ref{eq:original_equation}) starting with the Schr\"{o}dinger equation in physical units: 
\begin{equation} \label{eq:equation_in_physical_units}
	i \hbar \de_t \psi = - \frac{ \hbar^2 }{ 2 m } \Delta_x \psi + V(x) \psi + W(x) \psi 
\end{equation}
where $\hbar$ is the reduced Planck constant, and $m$ is the mass of an electron. This analysis is based on those given in \cite{e_lu_yang} \cite{bechouche_mauser_poupaud}. Define $l$ as the lattice constant, and let $\tau$ denote the quantum time scale:
\begin{equation}
	\tau = \frac{ m l^2 }{ \hbar }. 
\end{equation}
Let $L, T$ denote macroscopic length and time-scales. We assume that the periodic potential $V$ acts on the `fast quantum scale' and the $W$ acts on the `slow macroscopic scale':
\begin{equation}
	V(x) = \frac{m l^2}{\tau^2} \tilde{V}\left(\frac{x}{l}\right), W(x) = \frac{ m L^2 }{ T^2 } \tilde{W}\left(\frac{x}{L}\right).
\end{equation} 
After re-scaling $x, t$ by the macroscopic length and time-scales:
\begin{equation}
 	\tilde{x} := \frac{x}{L}, \tilde{t} := \frac{t}{T}, \tilde{\psi}( \tilde{x},\tilde{t} ) := \psi(x,t),
\end{equation}
(\ref{eq:equation_in_physical_units}) becomes:
\begin{equation} \label{eq:rescaled_equation}
	\frac{ i \hbar }{ T } \de_{\tilde{t}} \tilde{\psi} = - \frac{\hbar^2}{2 m L^2} \Delta_{\tilde{x}} \tilde{\psi} + \frac{ m l^2 }{ \tau^2 } \tilde{V}\left(\frac{L \tilde{x}}{l}\right) \psi + \frac{m L^2 }{T^2}\tilde{W}\left(\tilde{x}\right) \psi.
\end{equation}
We now identify two dimensionless parameters. Let $h$ denote a scaled Planck's constant, and $\epsilon$ the ratio of the lattice constant to the macroscopic scale:
\begin{equation}
	h := \frac{ \hbar T }{ m L^2 }, \epsilon := \frac{ l }{ L }.
\end{equation}
Writing (\ref{eq:rescaled_equation}) in terms of $h, \epsilon$ and dropping the tildes we arrive at:
\begin{equation}
	i h \de_t \psi^{h,\epsilon} = - \frac{ h^2 }{ 2 } \Delta_{x} \psi^{h,\epsilon}  + \frac{h^2}{\epsilon^2} V\left(\frac{x}{\epsilon}\right) \psi^{h,\epsilon} + W(x) \psi^{h,\epsilon} 
\end{equation}
where we have written $\psi^{h,\epsilon}(x,t)$ to emphasize the dependence of the solution on both parameters. We obtain the problem depending only on $\epsilon$ (\ref{eq:original_equation}) by setting $h = \epsilon$. Therefore, the limit $\epsilon \downarrow 0$ in (\ref{eq:original_equation}) corresponds to sending to zero the ratio of the lattice spacing $l$ to the scale of inhomogeneity $L$ and Planck's constant (appropriately re-scaled) to zero at the same rate. 

\subsection{Statement of results} \label{sec:statement_of_results}
In order to state our results we require some background on the spectral theory of the Schr\"{o}dinger operator:
\begin{equation} \label{eq:periodic_operator}
	H := - \frac{1}{2} \Delta_z + V(z)
\end{equation}
where $V$ is periodic with respect to a $d$-dimensional lattice $\Lambda$ \cite{kuchment} \cite{reed_simon_4}. Let $\Lambda^*$ denote the dual lattice to $\Lambda$, and define the first Brillouin zone $\mathcal{B}$ to be a fundamental period cell. Consider the family of self-adjoint eigenvalue problems parameterized by $p \in \mathcal{B}$:
\begin{equation} \label{eq:reduced_eigenvalue_problem}
\begin{split}
	&H(p) \chi(z;p) = E(p) \chi(z;p)	\\	
	&\chi(z + v;p) = \chi(z;p) \text{ for all } z \in \field{R}^d, v \in \Lambda	\\
	&H(p) := \frac{1}{2} ( p - i \nabla_z )^2 + V(z).
\end{split}
\end{equation}
For fixed $p$, known as the quasi-momentum, the spectrum of the operator (\ref{eq:reduced_eigenvalue_problem}) is real and discrete and the eigenvalues can be ordered with multiplicity: 
\begin{equation}
	E_1(p) \leq E_2(p) \leq ... \leq E_n(p) \leq ...
\end{equation}
For fixed $p$, the associated normalized eigenfunctions $\chi_n(z;p)$ are a basis of the space:
\begin{equation}
	L^2_{per} := \left\{ f \in L^2_{loc} : f(z + v) = f(z) \text{ for all } v \in \Lambda, z \in \field{R}^d \right\}
\end{equation} 
Varying $p$ over the Brillouin zone, the maps $p \mapsto E_n(p)$ are known as the spectral band functions. Their graphs are called the dispersion surfaces of $H$. The set of all dispersion surfaces as $p$ varies over $\mathcal{B}$ is called the band structure of $H$ (\ref{eq:periodic_operator}). Any function in $L^2(\field{R}^d)$ can be written as a superposition of Bloch waves:
\begin{equation}
	\left\{ \Phi_n(z;p) = e^{i p z} \chi_n(z;p) : n \in \field{N}, p \in \mathcal{B} \right\};
\end{equation}
see (\ref{eq:linear_comb_of_bloch_waves}). Moreover, the $L^2$-spectrum of the operator (\ref{eq:periodic_operator}) is the union of the real intervals swept out by the spectral band functions $E_n(p)$:
\begin{equation}
	\sigma(H)_{L^2(\field{R}^d)} = \cup_{n \in \field{R}} \left\{ E_n(p) : p \in \mathcal{B} \right\}.
\end{equation}
The map $p \mapsto E_n(p)$ extends to a map on $\field{R}^d$ which is periodic with respect to the reciprocal lattice $\Lambda^*$:
\begin{equation} \label{eq:extension_of_bloch_functions}
	\text{for any $b \in \Lambda^*$, } E_n(p + b) = E_n(p).
\end{equation}
If the eigenvalue $E_n(p)$ is simple, then: (up to a constant phase shift) $\chi_n(z;p+b) = e^{- i b \cdot z} \chi_n(z;p)$. A more detailed account of the Floquet-Bloch theory which we require, in particular results on the regularity of the maps $p \rightarrow E_n(p), \chi_n(z;p)$, can be found in Section \ref{sec:floquet_bloch_theory}. 

We will make the following assumptions throughout:
\begin{assumption}[Uniformly isolated band assumption]\label{isolated_band_assumption_nonseparable_case}
Let $E_n(p)$ be an eigenvalue band function of the periodic Schr\"{o}dinger operator (\ref{eq:periodic_operator}). Assume that $(q_0, p_0) \in \field{R}^d \times \field{R}^d$ are such that the flow generated by the classical Hamiltonian $\mathcal{H}_n(q,p) := E_n(p) + W(q)$:
\begin{equation} \label{eq:classical_system}
\begin{split}
	&\dot{q}(t) = \nabla_p E_n(p(t)) 	\\
	&\dot{p}(t) = - \nabla_q W(q(t))	\\
	&q(0), p(0) = q_0, p_0
\end{split}
\end{equation} 
has a unique smooth solution $(q(t),p(t)) \in \field{R}^d \times \field{R}^d, \forall t \geq 0$, and that there exists a constant $M > 0$ such that:
\begin{equation} \label{eq:band_remains_isolated}
	\inf_{m \neq n} |E_m(p(t)) - E_n(p(t))| \geq M \text{ for all } t \geq 0. 
\end{equation} 
That is, the $n$th spectral band is uniformly isolated along the classical trajectory $(q(t),p(t))$. 
\end{assumption}
\begin{assumption} \label{W_assumption} $\sum_{|\alpha| = 1,2,3,4} | \de_x^\alpha W(x) | \in L^\infty(\field{R}^d)$
\end{assumption}
\subsection{Dynamics of semiclassical wavepackets in a periodic background}
Our first result is an extension of Theorem 1.7 of Carles and Sparber \cite{carles_sparber}: 
\begin{theorem} \label{th:periodic_background_nonseparable_isolated_band_theorem}
Let Assumptions \ref{isolated_band_assumption_nonseparable_case} and \ref{W_assumption} hold. Let $a_0(y), b_0(y) \in \mathcal{S}(\field{R}^d)$. Let $S(t)$ denote the classical action along the path $(q(t),p(t))$:
\begin{equation} \label{eq:action_integral}
	S(t) = \inty{0}{t}{ p(t') \cdot \nabla_{p} E_n(p(t')) - E_n(p(t')) - W(q(t')) }{t'}.
\end{equation} 
Let $a(y,t)$ satisfy:
\begin{equation} 
\begin{split} \label{eq:envelope_equation} 
	&i \de_t a(y,t) = \mathscr{H}(t) a(y,t) 	\\
	&a(y,0) = a_0(y),
\end{split}
\end{equation}
where:
\begin{equation} \label{eq:def_of_H}
	\mathscr{H}(t) := - \frac{1}{2} \nabla_y \cdot D^2_p E_n(p(t)) \nabla_y + \frac{1}{2} y \cdot D^2_q W(q(t)) y. 
\end{equation}
And let $b(y,t)$ satisfy:
\begin{equation}
\begin{split} \label{eq:first_order_envelope_equation} 
	&i \de_t b(y,t) = \mathscr{H}(t) b(y,t) + \mathscr{I}(t) a(y,t) \\
	&b(y,0) = b_0(y),
\end{split}
\end{equation} 
where $\mathscr{H}(t)$ is as in (\ref{eq:def_of_H}) and: 
\begin{equation} \label{eq:def_of_I}
\begin{split}
	\mathscr{I}(t) := &- \frac{1}{6} \nabla_p \left[ \nabla_y \cdot D^2_p E_n(p(t)) \nabla_y \right] \cdot (- i \nabla_y) + \frac{1}{6} \nabla_q \left[ y \cdot D^2_q W(q(t)) y \right] \cdot y 	\\
	&+ \nabla_p \left[ \nabla_q W(q(t)) \cdot \mathcal{A}_n(p(t)) \right] \cdot (- i \nabla_y) + \nabla_q \left[ \nabla_q W(q(t)) \cdot \mathcal{A}_n(p(t)) \right] \cdot y. 	
\end{split}
\end{equation}
Here:
\begin{equation} \label{eq:berry_connection}
	\mathcal{A}_{n}(p(t)) := i \ip{\chi_n(\cdot;p(t)))}{\nabla_{p}\chi_n(\cdot;p(t))}_{L^2(\Omega)}
\end{equation}
is the $n$-th band Berry connection. Let $\phi_B(t)$ be the Berry phase associated with transport of $\chi_n$ along the path $p(t) \in \mathcal{B}$ given by:
\begin{equation} \label{eq:phase_equation}
	\phi_B(t) = \inty{0}{t}{ \dot{p}(t') \cdot \mathcal{A}_{n}(p(t'))  }{t'} = \inty{p_0}{p(t)}{ \mathcal{A}_n(p) \cdot}{p}.
\end{equation} 
Then, there exists a constant $\epsilon_0 > 0$ such that for all $0 < \epsilon \leq \epsilon_0$ the following holds. Let $\psi^\epsilon(x,t)$ be the unique solution of the initial value problem (\ref{eq:original_equation}) with `Bloch wavepacket' initial data:
\begin{equation} \label{eq:equation_and_initial_data}
\begin{split}
	&i \epsilon \de_t \psi^\epsilon = - \frac{1}{2} \epsilon^2 \Delta_x \psi^\epsilon + V\left(\frac{x}{\epsilon}\right) \psi^\epsilon + W(x) \psi^\epsilon \\
	&\psi^\epsilon(x,0) = \epsilon^{-d/4} e^{i p_{0} \cdot (x - q_0)/ \epsilon} \left\{ a_0 \left( \frac{x - q_0}{\epsilon^{1/2}} \right) \chi_n\left(\frac{x}{\epsilon};p_0\right) \right. \\
	&\left. + \epsilon^{1/2} \left[ (- i \nabla_{y}) a_0\left(\frac{x - q_0}{\epsilon^{1/2}}\right) \cdot \nabla_{p} \chi_n\left(\frac{x}{\epsilon};p_0\right) + b_0\left(\frac{x - q_0}{\epsilon^{1/2}}\right) \chi_n\left(\frac{x}{\epsilon};p_0\right) \right] \right\}.
\end{split}
\end{equation} 
Then, for all $t \geq 0$ the solution evolves as a modulated `Bloch wavepacket' plus a corrector $\eta^\epsilon(x,t)$:
\begin{equation} \label{eq:nonseparable_asymptotic_solution}
\begin{split}
	&\psi^\epsilon(x,t) = \epsilon^{-d/4} e^{i S(t) / \epsilon} e^{i p(t) \cdot (x - q(t)) / \epsilon} e^{i \phi_B(t)} \left\{ a\left(\frac{x - q(t)}{\epsilon^{1/2}},t\right) \chi_n\left(\frac{x}{\epsilon};p(t)\right) \right.  \\
	&\left. + \epsilon^{1/2} \left[ (- i \nabla_{y}) a\left(\frac{x - q(t)}{\epsilon^{1/2}},t\right) \cdot \nabla_{p} \chi_n\left(\frac{x}{\epsilon};p(t)\right) + b\left(\frac{x - q(t)}{\epsilon^{1/2}},t\right) \chi_n\left(\frac{x}{\epsilon};p(t)\right) \right] \right\} \\
	&+ \eta^\epsilon(x,t) 
\end{split}
\end{equation}
where the corrector $\eta^\epsilon$ satisfies the following estimate: 
\begin{equation} \label{eq:bound_on_eta}
	\| \eta^\epsilon(\cdot,t) \|_{L^2(\field{R}^d)} \leq C \epsilon e^{c t}.
\end{equation}
Here, $c > 0, C > 0$ are constants independent of $\epsilon, t$. It follows that:
\begin{equation}
	\sup_{t \in [0,\tilde{C} \ln 1/\epsilon]} \| \eta^\epsilon(\cdot,t) \|_{L^2(\field{R}^d)} = o(\epsilon^{1/2}) 
\end{equation}
where $\tilde{C}$ is any constant satisfying $\tilde{C} < \frac{1}{2 c}$. 
\end{theorem}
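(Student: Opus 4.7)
The plan is to write $\psi^\epsilon = \psi^\epsilon_{\mathrm{app}} + \eta^\epsilon$, where $\psi^\epsilon_{\mathrm{app}}$ is the explicit ansatz on the right of (\ref{eq:nonseparable_asymptotic_solution}) without the corrector term. By construction $\eta^\epsilon(\cdot,0) = 0$ and $\eta^\epsilon$ solves
\begin{equation*}
    i\epsilon\de_t\eta^\epsilon = H^\epsilon\eta^\epsilon - r^\epsilon,\qquad H^\epsilon := -\tfrac{1}{2}\epsilon^2\Delta_x + V(x/\epsilon) + W(x),\qquad r^\epsilon := (i\epsilon\de_t - H^\epsilon)\psi^\epsilon_{\mathrm{app}}.
\end{equation*}
Since $H^\epsilon$ is self-adjoint on $L^2(\field{R}^d)$ its propagator is unitary, and Duhamel's formula yields $\|\eta^\epsilon(\cdot,t)\|_{L^2} \le \epsilon^{-1}\int_0^t\|r^\epsilon(\cdot,s)\|_{L^2}\,ds$. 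The theorem therefore reduces to the residual bound $\|r^\epsilon(\cdot,t)\|_{L^2}=O(\epsilon^2 e^{ct})$; the $o(\epsilon^{1/2})$ conclusion on the Ehrenfest interval is then immediate from $\epsilon \cdot e^{c\tilde C\ln(1/\epsilon)} = \epsilon^{1-c\tilde C}$ whenever $c\tilde C < 1/2$.

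To compute $r^\epsilon$ I would introduce the two-scale variables $z = x/\epsilon$ and $y = (x-q(t))/\epsilon^{1/2}$, so that after pulling out the common prefactor $\epsilon^{-d/4}e^{iS(t)/\epsilon}e^{ip(t)\cdot(x-q(t))/\epsilon}e^{i\phi_B(t)}$ the action of $-i\epsilon\nabla_x$ becomes $p(t) - i\nabla_z - i\epsilon^{1/2}\nabla_y$, and $i\epsilon\de_t$ produces explicit contributions from the four phase factors, from $\de_t y = -\dot q/\epsilon^{1/2}$, and from the $p(t)$-dependence of $\chi_n,\nabla_p\chi_n,a,b$. Taylor-expanding $W(q(t)+\epsilon^{1/2}y)$ to fourth order (legitimate by Assumption \ref{W_assumption}) and organizing in powers of $\epsilon^{1/2}$, one must verify vanishing of the coefficients at orders $\epsilon^0,\epsilon^{1/2},\epsilon,\epsilon^{3/2}$. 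The $O(1)$ coefficient vanishes via the Bloch eigenvalue equation for $\chi_n(\cdot;p(t))$ combined with the definition (\ref{eq:action_integral}) of $\dot S$; the $O(\epsilon^{1/2})$ coefficient, projected onto $\chi_n$ in $L^2_z(\Omega)$, gives the classical system (\ref{eq:classical_system}), while its $\chi_n^\perp$-projection is cancelled by the $\nabla_p\chi_n$-piece of the ansatz (using that $(H(p(t))-E_n(p(t)))^{-1}$ is bounded on $\chi_n^\perp$ by Assumption \ref{isolated_band_assumption_nonseparable_case}); the $O(\epsilon)$ resonant projection is the envelope equation (\ref{eq:envelope_equation}) once $\phi_B$ is chosen as in (\ref{eq:phase_equation}) to absorb the diagonal Berry contribution; and the $O(\epsilon^{3/2})$ resonant projection yields (\ref{eq:first_order_envelope_equation}) after the Feynman--Hellmann identity for $\nabla_p\chi_n$ reassembles the surviving terms into the operator $\mathscr{I}(t)$ of (\ref{eq:def_of_I}).

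The leftover residual then consists only of (i) cubic and quartic Taylor remainders of $W$ multiplying $\epsilon^2 y^\alpha a$ and $\epsilon^{5/2}y^\alpha b$, (ii) $O(\epsilon^2)$ cross terms from $\nabla_y$ acting on the first-order piece of the ansatz, and (iii) non-resonant residuals resolved using the spectral gap. After the change of variables $x\mapsto y$ (whose Jacobian $\epsilon^{d/2}$ cancels the squared prefactor) each term is bounded in $L^2_x$ by a constant times $\epsilon^2\|a(\cdot,t)\|_{\Sigma^l}$ or $\epsilon^2\|b(\cdot,t)\|_{\Sigma^l}$ for some fixed $l$. The time-growth enters here: $\mathscr{H}(t)$ is a time-dependent quadratic Hamiltonian whose metaplectic propagator preserves every $\Sigma^l(\field{R}^d)$ with operator norm at most $e^{c_l t}$, the symplectic matrix on phase space growing at most exponentially in the $L^\infty_t$-norms of $D^2_p E_n(p(t))$ and $D^2_q W(q(t))$; a second Duhamel estimate for (\ref{eq:first_order_envelope_equation}) transfers the bound to $b$, using that $\mathscr{I}(t):\Sigma^{l+1}\to\Sigma^l$ uniformly in $t$. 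The hardest step will be the order-$\epsilon^{3/2}$ bookkeeping of the previous paragraph: showing that the non-resonant residuals produced by $(H(p(t))-E_n(p(t)))$ acting on $(-i\nabla_y a)\cdot\nabla_p\chi_n$, combined with the $\nabla_q W$--Berry-connection cross terms and the $p(t)$-derivatives of $\nabla_p\chi_n$, collapse precisely into the clean form (\ref{eq:def_of_I}). This is a substantial extension of the calculation in \cite{carles_sparber}, where the ansatz was truncated at leading order and no $b$-envelope appeared.
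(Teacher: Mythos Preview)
Your overall strategy is the paper's: multiscale ansatz, hierarchy in powers of $\epsilon^{1/2}$, solvability conditions at each order yielding $\dot S$, the classical system, the Berry phase, and the two envelope equations, then a residual bound fed into Duhamel. Your remarks about the metaplectic $\Sigma^l$ growth of $a$ and $b$ and about the $\epsilon^{3/2}$ bookkeeping being the delicate step are also on target.

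There is, however, a real gap in the residual accounting. You define $\psi^\epsilon_{\mathrm{app}}$ to be exactly the two-term expression displayed in (\ref{eq:nonseparable_asymptotic_solution}) and assert $\eta^\epsilon(\cdot,0)=0$. With only those two terms the residual is \emph{not} $O(\epsilon^2)$: at orders $\epsilon$ and $\epsilon^{3/2}$ the equations have non-resonant ($\chi_n^\perp$) components that do not vanish, and your item~(iii) cannot simply ``resolve'' them by invoking the spectral gap. Bounding them is not enough---they sit in $r^\epsilon$ at size $O(\epsilon)$ and $O(\epsilon^{3/2})$, so after dividing by $\epsilon$ in Duhamel you would only get $\|\eta^\epsilon\|_{L^2}=O(1)$. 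What the spectral gap actually buys you is the ability to \emph{add} further correctors $f^2,f^3$ to the ansatz (built from $\partial_p^2\chi_n$, $\partial_p^3\chi_n$, and $(H(p)-E_n(p))^{-1}P_n^\perp$ acting on explicit sources) so that those non-resonant terms are cancelled and the residual is pushed to $O(\epsilon^2 e^{ct})$.

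This is exactly what the paper does: it works with a four-term approximation $f^\epsilon_3=f^0+\epsilon^{1/2}f^1+\epsilon f^2+\epsilon^{3/2}f^3$, accepts a nonzero initial error $\eta^\epsilon_3(\cdot,0)=O(\epsilon)$ coming from $f^2,f^3$ at $t=0$, obtains $\|\eta^\epsilon_3(\cdot,t)\|_{L^2}\le C\epsilon e^{ct}$, and then observes that the extra piece $\epsilon f^2+\epsilon^{3/2}f^3$ is itself $O(\epsilon e^{ct})$ in $L^2$ and can be absorbed back into the $\eta^\epsilon$ of the theorem statement. Once you insert this missing step, your outline becomes the paper's proof.
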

\begin{remark}
We include the pre-factor $\epsilon^{-d/4}$ throughout so that the $L^2_x(\field{R}^d)$ norm of $\psi^\epsilon(x,t)$ is of order $1$ as $\epsilon \downarrow 0$. 
\end{remark}
\begin{remark}
We have improved the error bound of Carles and Sparber \cite{carles_sparber} from $C \epsilon^{1/2} e^{c t}$ to $C \epsilon e^{c t}$ by including correction terms in the asymptotic expansion proportional to $\epsilon^{1/2}$. Note that we must also assume that the initial data is well-prepared up to terms proportional to $\epsilon^{1/2}$ (\ref{eq:equation_and_initial_data}). By keeping more terms in the expansion we may produce approximations where the corrector can be bounded by $C \epsilon^{N/2} e^{c t}$ for any positive integer $N$. The only changes in the proof are that we include corrections to the initial data proportional to $\epsilon^{N/2}$, and that Assumption \ref{W_assumption} is replaced by $\sum_{|\alpha| = 1,2,...,N + 2} |\de_x^\alpha W(x)| \in L^\infty(\field{R}^d)$. 
\end{remark}
\begin{remark}
Keeping terms proportional to $\epsilon^{1/2}$ in the expansion will allow us to calculate corrections to the dynamics of physical observables proportional to $\epsilon$; see Theorem \ref{prop:center_of_mass_proposition} and Section \ref{sec:center_of_mass_calculation}. 
\end{remark}
\begin{remark}
The time-scale of validity of the approximation (\ref{eq:nonseparable_asymptotic_solution}), $t \sim \ln 1/\epsilon$, is known as `Ehrenfest time', and is the general limit of validity of wavepacket, or coherent state, approximations; see \cite{robert}\cite{schubert_vallejos_toscano}. Note that including higher order terms (proportional to powers of $\epsilon^{1/2}$) in the approximation does not extend the time-scale of validity.
\end{remark}
There exists a family of time-dependent Gaussian explicit solutions of the envelope equation (\ref{eq:envelope_equation}). Consider (\ref{eq:envelope_equation}) with initial data:
 \begin{equation} \label{eq:gaussian_initial_data}
	a_0(y) = \frac{N}{ [\mathrm{det} A_0]^{1/2} } \exp \left( i \frac{1}{2} y \cdot B_0 A^{-1}_0 y \right)\ .
\end{equation} 
Here, $N \in \field{C}$ is an arbitrary non-zero constant, and $A_0, B_0$ are $d \times d$ complex matrices satisfying:
\begin{equation} \label{eq:conditions_on_A_B_again}
\begin{split}
	&A_0^T B_0 - B_0^T A_0 = 0	\\
	&\overline{A_0^T} B_0 - \overline{B^T_0} A_0 = 2 i I.
\end{split}
\end{equation}
\begin{remark}\label{matrices}
The conditions (\ref{eq:conditions_on_A_B_again}) imply: 
\begin{enumerate}
\item The matrices $B_0, A_0$ are invertible
\item The matrix $B_0 A_0^{-1}$ is complex symmetric: $( B_0 A_0^{-1} )^T = B_0 A_0^{-1}$
\item The imaginary part of the matrix $B_0 A_0^{-1}$ is symmetric, positive definite, and satisfies: $\text{Im }B_0 A_0^{-1} = (A_0 \overline{A_0^T})^{-1}$
\end{enumerate}
and are equivalent to the condition that the matrix:
\begin{equation}
	Y := \begin{pmatrix} \text{Re } A_0 & \text{Im } A_0 \\ \text{Re } B_0 & \text{Im } B_0 \end{pmatrix} \text{ is symplectic: } Y^T J Y = J \text{ where } J := \begin{pmatrix} 0 & - I \\ I & 0 \end{pmatrix}.
\end{equation}
The proofs of (1)-(3) are given in \cite{hagedorn_raising_and_lowering}\cite{hagedorn_coherent_states_1}\cite{faou_gradinaru_lubich}. 
\end{remark}
Note that it follows from assertion (3) of Remark \ref{matrices} that $a_0(y)$ (\ref{eq:gaussian_initial_data}) satisfies $| a_0(y) | \leq C e^{- c |y|^2}$ for constants $C > 0, c > 0$. We have then that: 
\begin{proposition} [Gaussian wavepackets] \label{prop:gaussian_wavepackets}
The initial value problem (\ref{eq:envelope_equation}) with initial data $a_0(y)$ given by (\ref{eq:gaussian_initial_data}) has the unique solution for all $t \ge 0$: 
\begin{equation} \label{eq:a(y,t)_periodic}
	a(y,t) = \frac{N}{ [ \mathrm{det} A(t) ]^{1/2} } \exp \left( i \frac{1}{2} y \cdot B(t) A^{-1}(t) y \right).
\end{equation}
Here, the complex matrices $A(t), B(t)$ satisfy: 
\begin{equation} \label{eq:equations_for_A_B_again}
\begin{split}
	&\dot{A}(t) = D^2_p E_n(p(t)) B(t), \quad \dot{B}(t) = - D^2_q W(q(t)) A(t), \\
	&A(0) = A_0, B(0) = B_0.
\end{split}
\end{equation}
Moreover, for all $t \ge 0$, the matrices $A(t), B(t)$ satisfy (\ref{eq:conditions_on_A_B_again}) with $A_0$ replaced by $A(t)$ and $B_0$ replaced by $B(t)$. Thus (see Remark \ref{matrices}), $|a(y,t)| \leq C(t) e^{- c(t) |y|^2}$ where $C(t) > 0, c(t) > 0$ for all $t \geq 0$. More generally, we may construct a basis of $L^2(\field{R}^d)$ of solutions of the envelope equation (\ref{eq:envelope_equation}), consisting of products of Gaussians with Hermite polynomials, known as the `semiclassical wavepacket' basis \cite{hagedorn_raising_and_lowering}\cite{hagedorn_coherent_states_1}\cite{faou_gradinaru_lubich}.
\end{proposition}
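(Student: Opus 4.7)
The plan is to substitute the Gaussian ansatz (\ref{eq:a(y,t)_periodic}) directly into the envelope equation (\ref{eq:envelope_equation}) and match coefficients of the constant and the quadratic monomials in $y$, extracting necessary and sufficient conditions on the matrices $A(t)$ and $B(t)$. Because $\mathscr{H}(t)$ is a time-dependent quantum harmonic oscillator (quadratic in $y$ and in $-i\nabla_y$), it preserves the class of complex Gaussians and the ansatz is consistent; the task reduces to identifying the ODE system governing $A,B$ and to verifying that the algebraic conditions (\ref{eq:conditions_on_A_B_again}) are propagated by that flow.

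I would begin by setting $M(t) := B(t) A(t)^{-1}$, so that $a(y,t) = N [\det A(t)]^{-1/2} \exp(\tfrac{i}{2} y \cdot M(t) y)$, and use symmetry of $D^2_p E_n$ to expand the action of $-\tfrac{1}{2}\nabla_y \cdot D^2_p E_n \nabla_y$ on a quadratic Gaussian. Imposing $i \partial_t a = \mathscr{H}(t) a$ and matching $y$-quadratic terms produces the matrix Riccati equation
\begin{equation*}
\dot M(t) = - M(t)\, D^2_p E_n(p(t))\, M(t) - D^2_q W(q(t)),
\end{equation*}
while the $y$-independent terms give $\text{Tr}(\dot A A^{-1}) = \text{Tr}(D^2_p E_n M)$. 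Writing $\dot M = \dot B A^{-1} - M \dot A A^{-1}$ and declaring $\dot A = D^2_p E_n B$ linearizes the Riccati equation to $\dot B = - D^2_q W\, A$, which is precisely (\ref{eq:equations_for_A_B_again}); the trace identity then holds automatically. Standard ODE theory applied to this linear system with smooth coefficients along the classical trajectory $(q(t), p(t))$ yields unique global solutions $A(t), B(t)$ on $[0,\infty)$.

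Next I would verify that the conditions (\ref{eq:conditions_on_A_B_again}) are preserved by the flow (\ref{eq:equations_for_A_B_again}). Differentiating $A^T B - B^T A$ and $\overline{A^T} B - \overline{B^T} A$ in $t$ and using that $D^2_p E_n$ and $D^2_q W$ are real symmetric, the four resulting terms cancel pairwise in each case, so both quantities are constant in $t$ and equal their initial values $0$ and $2iI$. By Remark \ref{matrices}(3), this forces $\text{Im}\, B(t) A(t)^{-1}$ to be symmetric and positive definite for all $t$, so $a(y,t)$ has Gaussian decay in $y$ and lies in $L^2(\field{R}^d)$. Uniqueness as an $L^2$-solution of (\ref{eq:envelope_equation}) follows from the classical fact that the time-dependent quadratic Hamiltonian $\mathscr{H}(t)$ generates a unique strongly continuous unitary propagator on $L^2(\field{R}^d)$.

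The one delicate point will be the global invertibility of $A(t)$, which is required in order to define $M(t) = B(t) A(t)^{-1}$ and make the Riccati derivation legitimate. I would handle this by a short bootstrap: the second identity in (\ref{eq:conditions_on_A_B_again}) forces $A(t) \overline{A(t)^T} = (\text{Im}\, M(t))^{-1}$ wherever $M$ is defined, so the set of $t \in [0,\infty)$ on which $A(t)$ is invertible is both open and closed and contains $t=0$. Alternatively, as noted in Remark \ref{matrices}, the $2d \times 2d$ block matrix $Y(t)$ evolves under a linear Hamiltonian flow and therefore remains in the symplectic group for all $t$, ruling out any degeneration of $A(t)$. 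The generalization to the full semiclassical wavepacket basis consisting of Gaussians multiplied by Hermite polynomials is then standard and follows directly from \cite{hagedorn_raising_and_lowering, hagedorn_coherent_states_1, faou_gradinaru_lubich}.
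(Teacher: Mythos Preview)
The paper does not give its own proof of this proposition; it is stated as a known result with references to Hagedorn and to Faou--Gradinaru--Lubich. Your direct-substitution approach (reduce to a matrix Riccati equation for $M=BA^{-1}$, linearize via the declared $\dot A = D^2_pE_n\,B$, and then verify that the flow preserves the identities (\ref{eq:conditions_on_A_B_again})) is correct and is precisely the standard argument found in those references.

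One small clean-up: your first bootstrap argument for the global invertibility of $A(t)$ is slightly circular, since the identity $A\overline{A^T} = (\mathrm{Im}\,M)^{-1}$ from Remark~\ref{matrices}(3) already presupposes that $A$ is invertible, so it cannot be used to close the set $\{t:\det A(t)\neq 0\}$. The cleaner route is the one you already give as an alternative: the preservation of the two identities in (\ref{eq:conditions_on_A_B_again}) under the linear flow (\ref{eq:equations_for_A_B_again}) is a purely algebraic computation that nowhere uses invertibility of $A$, and once those identities hold at time $t$, Remark~\ref{matrices}(1) gives invertibility of $A(t)$ directly. No open--closed bootstrap is needed.
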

\begin{remark}
Our convention for the complex matrices $A, B$ follows that introduced in \cite{faou_gradinaru_lubich}, with $A, B$ standing for $Q, P$ in \cite{faou_gradinaru_lubich} respectively. Note that our convention is not to be confused with that introduced in \cite{hagedorn_coherent_states_1}; our choice of $B$ corresponds to $i B$ in \cite{hagedorn_coherent_states_1}. 
\end{remark}
\subsection{Dynamics of observables associated with the asymptotic solution}
We now deduce consequences for the physical observables associated with the solution $\psi^\epsilon(x,t)$ of (\ref{eq:equation_and_initial_data}), using the asymptotic form (\ref{eq:nonseparable_asymptotic_solution}). Denote the solution of (\ref{eq:equation_and_initial_data}) through order $\epsilon^{1/2}$ by:  
\begin{equation} \label{eq:tilde_psi}
\begin{split}
	&\tilde{\psi}^\epsilon(y,z,t) := \epsilon^{-d/4} e^{i S(t) / \epsilon} e^{ i p(t) \cdot y / \epsilon^{1/2} } e^{i \phi_B(t)} \left\{ a(y,t) \chi_n(z;p(t)) \vphantom{\epsilon^{1/2}} \right. \\
	&\left. + \epsilon^{1/2} \left[ \vphantom{\epsilon^{1/2}}  (- i \nabla_{y}) a(y,t) \cdot \nabla_{p} \chi_n(z;p(t)) + b(y,t) \chi_n(z;p(t)) \right] \right\}.
\end{split}
\end{equation}
Thus: 
\begin{equation}
	\psi^\epsilon(x,t) = \left. \tilde{\psi}^\epsilon(y,z,t) \right|_{z = \frac{x}{\epsilon},y = \frac{x - q(t)}{\epsilon^{1/2}}} + \eta^\epsilon(x,t). 
\end{equation}
where $\eta^\epsilon$ is the corrector which satisfies the bound (\ref{eq:bound_on_eta}). Define the physical observables: 
\begin{equation} \label{eq:bloch_observables}
\begin{split}
	&\mathcal{Q}^\epsilon(t) := \frac{1}{\mathcal{N}^\epsilon(t)} \inty{\field{R}^d}{}{ x \left| \tilde{\psi}^\epsilon(y,z,t) \right|^2_{z = \frac{x}{\epsilon},y = \frac{x - q(t)}{\epsilon^{1/2}}} }{x}  	\\
	&\mathcal{P}^\epsilon(t) := \frac{1}{\mathcal{N}^\epsilon(t)} \inty{\field{R}^d}{}{ \overline{ \tilde{\psi}^\epsilon(y,z,t) } \left( - i \epsilon^{1/2} \nabla_y \right) \left. \tilde{\psi}^\epsilon(y,z,t) \right|_{z = \frac{x}{\epsilon}, y = \frac{x - q(t)}{\epsilon^{1/2}}} }{x}.	\\
\end{split}
\end{equation}
where $\mathcal{N}^\epsilon(t)$ is the normalization factor:
\begin{equation}
	\mathcal{N}^\epsilon(t) = \inty{\field{R}^d}{}{ \left| \tilde{\psi}^\epsilon(y,z,t) \right|^2_{z = \frac{x}{\epsilon},y = \frac{x - q(t)}{\epsilon^{1/2}}} }{x}.
\end{equation}
We will refer to $\mathcal{Q}^\epsilon(t), \mathcal{P}^\epsilon(t)$ as the center of mass and average quasi-momentum of the wavepacket. We will see (Theorem \ref{prop:center_of_mass_proposition}): $\mathcal{Q}^\epsilon(t) = q(t) + o(1), \mathcal{P}^\epsilon(t) = p(t) + o(1)$ up to `Ehrenfest time' $t \sim \ln 1/\epsilon$.
\begin{remark} \label{rem:remark_on_momentum_observable}
In the uniform background case $V = 0$, solutions of (\ref{eq:reduced_eigenvalue_problem}) are independent of $z$: $\chi_n(z;p) = 1$ for all $p \in \mathcal{B}$. The asymptotic solution (\ref{eq:tilde_psi}) obtained in this case is therefore independent of $z$, and our definition of $\mathcal{P}^\epsilon(t)$ reduces to the usually defined momentum observable:
\begin{equation}
	\inty{\field{R}^d}{}{ \overline{\psi^\epsilon(x,t)} ( - i \epsilon \nabla_x ) \psi^\epsilon(x,t) }{x}.
\end{equation}
In the periodic background case $V \neq 0$, $\mathcal{P}^\epsilon$ (\ref{eq:bloch_observables}) corresponds to the quasi-momentum and may be measured in experiments \cite{damascelli}.
\end{remark}
\noindent Let $a^\epsilon(y,t)$ satisfy the equation of a quantum harmonic oscillator, with parametric forcing defined by $(\mathcal{Q}^\epsilon(t),\mathcal{P}^\epsilon(t))$:
\begin{equation} \label{eq:a_epsilon_equation}
\begin{split}
	&i \de_t a^\epsilon = \mathscr{H}^\epsilon(t) a^\epsilon; \quad \mathscr{H}^\epsilon(t) := - \frac{1}{2} \nabla_y \cdot D^2_{\mathcal{P}^\epsilon} E_n(\mathcal{P}^\epsilon(t)) \nabla_y + \frac{1}{2} y \cdot D^2_{\mathcal{Q}^\epsilon} W(\mathcal{Q}^\epsilon(t)) y	\\
	&a^\epsilon(y,0) = a_0(y).
\end{split}
\end{equation}
Note that we have replaced dependence on $(q(t),p(t))$ in equation (\ref{eq:envelope_equation}) with dependence on $\mathcal{Q}^\epsilon(t), \mathcal{P}^\epsilon(t)$. 

For simplicity of presentation of the following theorem we assume that:
\begin{equation} \label{eq:well_prepared_envelope}
\begin{split}
	&\ipy{a_0(y)}{y a_0(y)} = \ipy{a_0(y)}{(- i \nabla_y) a_0(y)} = 0 \\
	&\|a_0(y)\|_{L^2_y(\field{R}^d)} = 1.	
\end{split}
\end{equation}
The result holds for general $a_0(y) \in \mathcal{S}(\field{R}^d)$; see Section \ref{sec:center_of_mass_calculation}. 
\begin{theorem} \label{prop:center_of_mass_proposition}
Let $\tilde{\psi}^\epsilon(y,z,t)$ denote the asymptotic solution (\ref{eq:tilde_psi}) including corrections proportional to $\epsilon^{1/2}$. Let $\mathcal{Q}^\epsilon(t), \mathcal{P}^\epsilon(t)$ denote the observables (\ref{eq:bloch_observables}). Then, there exists an $\epsilon_0 > 0$ such that for all $0 < \epsilon \leq \epsilon_0$, and for all $t \in [0,\tilde{C}' \ln 1/\epsilon]$ where $\tilde{C}' > 0$ is a constant independent of $t, \epsilon$:
\begin{enumerate}
\item $\mathcal{Q}^\epsilon(t), \mathcal{P}^\epsilon(t)$ satisfy: 
\begin{equation} \label{eq:observables_expanded}
\begin{split}
	&\mathcal{Q}^\epsilon(t) = {q}(t) + \epsilon \left[ \ipy{b(y,t)}{y a(y,t)} + \ipy{a(y,t)}{y b(y,t)} \right] + \epsilon \mathcal{A}_{n}({p}(t)) + o(\epsilon) 	\\
	&\mathcal{P}^\epsilon(t) = {p}(t) + \epsilon \left[ \ipy{b(y,t)}{(- i \nabla_{y}) a(y,t)} + \ipy{a(y,t)}{(- i \nabla_{y}) b(y,t)} \right] + o(\epsilon) 
\end{split}
\end{equation} 
where $\mathcal{A}_{n}(p)$ is the $n$th band Berry connection (\ref{eq:berry_connection}), and $a(y,t), b(y,t)$ satisfy (\ref{eq:envelope_equation}) and (\ref{eq:first_order_envelope_equation}) respectively. 
\item Let $a^\epsilon(y,t)$ satisfy (\ref{eq:a_epsilon_equation}). Then: 
\begin{equation} \label{eq:eom_for_center_of_mass}
\begin{split}
	&\dot{\mathcal{Q}}^\epsilon_\alpha(t) = \de_{\mathcal{P}_\alpha^\epsilon} E_n(\mathcal{P}^\epsilon(t)) - \epsilon \dot{\mathcal{P}}^\epsilon_\beta(t) \mathcal{F}_{n, \alpha \beta}(\mathcal{P}^\epsilon(t)) 	\\ 
	&+ \epsilon \frac{1}{2} \de_{\mathcal{P}_\alpha^\epsilon} \ip{\nabla_y a^\epsilon(y,t)}{ \cdot D^2_{\mathcal{P}^\epsilon} E_n({\mathcal{P}}^\epsilon(t)) \nabla_y a^\epsilon(y,t)}_{L^2_y(\field{R}^d)} + o(\epsilon)  \\
	&\dot{\mathcal{P}}_\alpha^\epsilon(t) = - \de_{\mathcal{Q}_\alpha^\epsilon} W(\mathcal{Q}^\epsilon(t)) 	\\
	&- \epsilon \frac{1}{2} \de_{\mathcal{Q}_\alpha^\epsilon} \ipy{y a^\epsilon(y,t)}{ \cdot D^2_{\mathcal{Q}^\epsilon} W(\mathcal{Q}^\epsilon(t)) y a^\epsilon(y,t) } + o(\epsilon).
\end{split}
\end{equation}
Here, $\mathcal{F}_{n,\alpha \beta}(\mathcal{P}^\epsilon(t))$ denotes the Berry curvature of the $n$th band: 
\begin{equation} \label{eq:berry_curvature}
	\mathcal{F}_{n,\alpha \beta}(\mathcal{P}^\epsilon) := \de_{\mathcal{P}_\alpha^\epsilon} \mathcal{A}_{n,\beta}(\mathcal{P}^\epsilon) - \de_{\mathcal{P}^\epsilon_\beta} \mathcal{A}_{n,\alpha} (\mathcal{P}^\epsilon)
\end{equation}
where $\mathcal{A}_n(\mathcal{P}^\epsilon)$ is the $n$th band Berry connection (\ref{eq:berry_connection}). When $d = 3$ the anomalous velocity $- \dot{\mathcal{P}}^\epsilon_\beta(t) \mathcal{F}_{n, \alpha \beta}(\mathcal{P}^\epsilon(t))$ may be re-written using the cross product as in (\ref{system}); see Remark \ref{rem:anomalous_velocity_cross_products}. 
\item After dropping the terms of $o(\epsilon)$ in (\ref{eq:eom_for_center_of_mass}), equations (\ref{eq:eom_for_center_of_mass}), (\ref{eq:a_epsilon_equation}) form a closed, coupled `particle-field' system for $\mathcal{Q}^\epsilon(t), \mathcal{P}^\epsilon(t), a^\epsilon(y,t)$. 
\item Let 
\begin{equation} \label{eq:change_to_canonicals}
\begin{split}
	&\mathscr{Q}^\epsilon(t) := \mathcal{Q}^\epsilon(t) - \epsilon \mathcal{A}_n(\mathcal{P}^\epsilon(t))	\\
	&\mathscr{P}^\epsilon(t) := \mathcal{P}^\epsilon(t).
\end{split}
\end{equation}
Let $\mathfrak{a}^\epsilon(y,t)$ denote the solution of (\ref{eq:a_epsilon_equation}) with co-efficients evaluated at $\mathscr{Q}^\epsilon(t), \mathscr{P}^\epsilon(t)$ rather than $\mathcal{Q}^\epsilon(t), \mathcal{P}^\epsilon(t)$. 

Then, after dropping terms of $o(\epsilon)$, $\mathscr{Q}^\epsilon(t), \mathscr{P}^\epsilon(t), \mathfrak{a}^\epsilon(y,t)$ satisfy a closed, coupled `particle-field' system which is expressible as an $\epsilon$-dependent Hamiltonian system:
\begin{equation} \label{eq:periodic_hamiltonian_system}
\begin{split}
	&\dot{\mathscr{Q}}^\epsilon = \nabla_{\mathscr{P}^\epsilon} \mathcal{H}^\epsilon, \dot{\mathscr{P}}^\epsilon = - \nabla_{\mathscr{Q}^\epsilon} \mathcal{H}^\epsilon	\\
	&i \de_t \mathfrak{a}^\epsilon = \frac{ \delta \mathcal{H}^\epsilon }{ \delta \overline{\mathfrak{a}^\epsilon}  }
\end{split}
\end{equation}
with Hamiltonian:
\begin{equation} \label{eq:periodic_hamiltonian}
\begin{split}
	&\mathcal{H}^\epsilon(\mathscr{P}^\epsilon,\mathscr{Q}^\epsilon,\overline{\mathfrak{a}^\epsilon},\mathfrak{a}^\epsilon) := E_n(\mathscr{P}^\epsilon) + W(\mathscr{Q}^\epsilon) + \epsilon \nabla_{\mathscr{Q}^\epsilon} W(\mathscr{Q}^\epsilon) \cdot \mathcal{A}_n(\mathscr{P}^\epsilon)	\\
	&+ \epsilon \frac{1}{2} \ipy{\nabla_{y} \mathfrak{a}^\epsilon}{ \cdot D^2_{\mathscr{P}^\epsilon} E_n(\mathscr{P}^\epsilon) \nabla_{y} \mathfrak{a}^\epsilon} + \epsilon \frac{1}{2} \ipy{y \mathfrak{a}^\epsilon}{ \cdot D^2_{\mathscr{Q}^\epsilon} W({\mathscr{Q}}^\epsilon) y \mathfrak{a}^\epsilon}   
\end{split}
\end{equation}
\end{enumerate}
\end{theorem}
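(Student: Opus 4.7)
The strategy is to substitute the two-scale asymptotic form \eqref{eq:tilde_psi} into the observable definitions \eqref{eq:bloch_observables} and expand in powers of $\epsilon^{1/2}$ by rescaled integration together with cell-averaging of the $\Lambda$-periodic Bloch factors, then differentiate in $t$ using the envelope equations \eqref{eq:envelope_equation} and \eqref{eq:first_order_envelope_equation}. Because the observables in \eqref{eq:bloch_observables} are defined directly on $\tilde\psi^\epsilon$ rather than on $\psi^\epsilon$, the corrector $\eta^\epsilon$ of Theorem \ref{th:periodic_background_nonseparable_isolated_band_theorem} does not enter: the proof is a self-contained calculation on the explicit ansatz, with the Ehrenfest-time restriction inherited only through well-posedness and Gr\"onwall estimates for the limiting envelope flow \eqref{eq:a_epsilon_equation} on $[0,\tilde C'\ln 1/\epsilon]$.

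For part (1), I would change variables $u=(x-q(t))/\epsilon^{1/2}$ so that the Jacobian $\epsilon^{d/2}$ cancels the $\epsilon^{-d/2}$ prefactor in $|\tilde\psi^\epsilon|^2$ and the fast coordinate becomes $z=q(t)/\epsilon + u/\epsilon^{1/2}$. The integrands are then products of Schwartz functions of $u$ with $\Lambda$-periodic functions of $z$, to which a standard two-scale averaging lemma applies. The leading-order contribution to the numerator of $\mathcal Q^\epsilon$ comes from the prefactor $x=q(t)+\epsilon^{1/2}u$ and yields $q(t)\mathcal N^\epsilon(t)$; the $O(\epsilon^{1/2})$ correction vanishes by \eqref{eq:well_prepared_envelope}; the $O(\epsilon)$ correction then collects the $ab$-cross inner products together with the cross term between $a(y,t)\chi_n(z;p(t))$ and $(-i\nabla_y a)\cdot\nabla_p\chi_n(z;p(t))$, which produces $\mathcal A_n(p(t))$ via \eqref{eq:berry_connection}. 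The computation for $\mathcal P^\epsilon$ is analogous, using that $-i\epsilon^{1/2}\nabla_y$ acts as multiplication by $p(t)$ on the plane-wave factor $e^{ip(t)\cdot y/\epsilon^{1/2}}$.

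For part (2), I would differentiate the expansions in (1) in $t$. At leading order $\dot{\mathcal Q}^\epsilon=\nabla_p E_n(p(t)),\ \dot{\mathcal P}^\epsilon=-\nabla_q W(q(t))$ just reproduce \eqref{eq:classical_system}. At order $\epsilon$ three contributions must be combined: (i) Taylor-expanding $\nabla_p E_n$ and $\nabla_q W$ about $(q(t),p(t))$ versus $(\mathcal Q^\epsilon,\mathcal P^\epsilon)$ converts part of the leading-order right-hand side into the $D^2 E_n, D^2 W$ particle-field couplings of \eqref{eq:eom_for_center_of_mass}; (ii) the $t$-derivatives $\partial_t\langle a,yb\rangle_{L^2_y}$, $\partial_t\langle a,(-i\nabla_y)b\rangle_{L^2_y}$, etc., evaluated via \eqref{eq:envelope_equation}--\eqref{eq:first_order_envelope_equation} and integration by parts, yield the remaining bilinear-in-$a$ terms; (iii) $\partial_t[\epsilon\mathcal A_n(p(t))]=\epsilon\dot p\cdot\nabla_p\mathcal A_n(p(t))$. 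The gauge-dependent symmetric part of $\nabla_p\mathcal A_n$ arising in (iii) must then cancel exactly against the contributions from the $\nabla_p[\nabla_q W\cdot\mathcal A_n]$ and $\nabla_q[\nabla_q W\cdot \mathcal A_n]$ terms of $\mathscr I(t)$ in \eqref{eq:def_of_I}, leaving only the antisymmetric combination $\dot p_\beta\,\mathcal F_{n,\alpha\beta}$, which is the anomalous velocity. Finally, I would replace $(q(t),p(t))$ by $(\mathcal Q^\epsilon(t),\mathcal P^\epsilon(t))$ using (1), and $a(y,t)$ by $a^\epsilon(y,t)$ via a Duhamel/Gr\"onwall bound on the difference of \eqref{eq:envelope_equation} and \eqref{eq:a_epsilon_equation}, whose error is $o(\epsilon)$ on $[0,\tilde C'\ln 1/\epsilon]$.

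Part (3) is then immediate from (2). For (4), I would substitute \eqref{eq:change_to_canonicals} into \eqref{eq:eom_for_center_of_mass} and verify the Hamiltonian structure by direct computation: the time derivative $-\epsilon\partial_t\mathcal A_n(\mathcal P^\epsilon)$ inside $\dot{\mathscr Q}^\epsilon$ absorbs precisely the antisymmetric Berry-curvature term, and the remaining symmetric piece matches $\nabla_{\mathscr P^\epsilon}[\epsilon\nabla_{\mathscr Q^\epsilon} W\cdot\mathcal A_n(\mathscr P^\epsilon)]$; together with $\nabla_{\mathscr P^\epsilon}E_n(\mathscr P^\epsilon)$ and the derivative of the kinetic bilinear in $\mathfrak a^\epsilon$ this reproduces $\nabla_{\mathscr P^\epsilon}\mathcal H^\epsilon$ for the Hamiltonian \eqref{eq:periodic_hamiltonian}. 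The equation $\dot{\mathscr P}^\epsilon=-\nabla_{\mathscr Q^\epsilon}\mathcal H^\epsilon$ is verified analogously, and the variational identity $\delta\mathcal H^\epsilon/\delta\overline{\mathfrak a^\epsilon}=\mathscr H^\epsilon(t)\mathfrak a^\epsilon$ recovers the field equation. The hardest step is the Berry-curvature cancellation in (ii)--(iii) of the previous paragraph: the gauge-dependent symmetric parts of $\nabla_p\mathcal A_n$ coming from $\dot{\mathcal A}_n$ and from $\mathscr I(t)$ must conspire exactly, and verifying this cancellation is what forces both the specific coefficients appearing in \eqref{eq:def_of_I} and the explicit form of the Hamiltonian \eqref{eq:periodic_hamiltonian}.
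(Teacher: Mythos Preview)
Your proposal is correct and follows essentially the same route as the paper: change variables and apply a two-scale averaging lemma (the paper's Lemma~\ref{lem:basic_homogenization_lemma}) to expand $\mathcal N^\epsilon,\mathcal Q^\epsilon,\mathcal P^\epsilon$; differentiate in time using the commutator identities for $\mathscr H(t),\mathscr I(t)$ (the paper collects these in Appendix~\ref{app:computation_of_dynamics_of_physical_observables}); re-expand about $(\mathcal Q^\epsilon,\mathcal P^\epsilon)$; replace $a$ by $a^\epsilon$ via a Duhamel/Gr\"onwall estimate; and finally pass to the canonical variables \eqref{eq:change_to_canonicals}. One small point of phrasing: the Berry curvature does not arise from a cancellation of symmetric parts but simply from the \emph{combination} of $\epsilon\dot p_\beta\,\de_{p_\beta}\mathcal A_{n,\alpha}$ (from $\partial_t\mathcal A_n$) with $\epsilon\,\de_{q_\beta}W\,\de_{p_\alpha}\mathcal A_{n,\beta}$ (from the $\mathscr I(t)$ contribution to $\partial_t\langle b,ya\rangle$), which together give $-\epsilon\dot p_\beta\mathcal F_{n,\alpha\beta}$ directly; the analogous $\mathcal A_n$-term in the $\dot{\mathcal P}^\epsilon$ equation is absorbed by the Taylor expansion of $\nabla_q W$ about $\mathcal Q^\epsilon=q+\epsilon\mathcal A_n+\cdots$.
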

\begin{remark} \label{rem:anomalous_velocity_cross_products}
In three spatial dimensions ($d = 3$) the anomalous velocity may be re-written using the cross product:
\begin{align}
	& &- \dot{\mathcal{P}}^\epsilon_{\beta}(t) \mathcal{F}_{n,\alpha\beta}(\mathcal{P}^\epsilon(t)) &= - \dot{\mathcal{P}}^\epsilon_\beta(t) \left( \de_{\mathcal{P}_\alpha^\epsilon} \mathcal{A}_{n,\beta}(\mathcal{P}^\epsilon(t)) - \de_{\mathcal{P}^\epsilon_\beta} \mathcal{A}_{n,\alpha}(\mathcal{P}^\epsilon(t)) \right) \nonumber \\
	& & &= - ( \delta_{\alpha \gamma} \delta_{\beta \phi} - \delta_{\alpha \phi} \delta_{\beta \gamma} ) \dot{\mathcal{P}}^\epsilon_\beta(t) \de_{\mathcal{P}^\epsilon_\gamma} \mathcal{A}_{n,\phi}(\mathcal{P}^\epsilon(t)) \nonumber \\
	& & &= - \varepsilon_{\eta \alpha \beta} \varepsilon_{\eta \gamma \phi} \dot{\mathcal{P}}^\epsilon_\beta(t) \de_{\mathcal{P}^\epsilon_\gamma} \mathcal{A}_{n,\phi}(\mathcal{P}^\epsilon(t))  \\ 
	& & &= - \left( \dot{\mathcal{P}}^\epsilon(t) \times \nabla_{\mathcal{P}^\epsilon} \times \mathcal{A}_n(\mathcal{P}^\epsilon(t)) \right)_\alpha. \nonumber
\end{align}
Here, $\varepsilon$ and $\delta$ are the Levi-Civita and Kronecker delta symbols respectively and each equality follows from well-known properties of these symbols; see Section \ref{notation_and_conventions} (\ref{eq:kronecker_delta})-(\ref{eq:cross_product}). In this case the curl of the Berry connection: $\nabla_{\mathcal{P}^\epsilon} \times \mathcal{A}_n(\mathcal{P}^\epsilon)$ is often referred to as the Berry curvature, see for example \cite{e_lu_yang}. 
\end{remark}
\begin{remark} \label{rem:change_to_canonicals}
Equations (\ref{eq:eom_for_center_of_mass}) agree with those derived elsewhere (for example (3.5)-(3.6) of \cite{xiao_chang_niu}) up to the terms which depend on the wavepacket envelope $a^\epsilon$. The change of variables (\ref{eq:change_to_canonicals}) was introduced in \cite{e_lu_yang} to transform between the Hamiltonian system for the characteristics of a `corrected' eikonal ((4.9)-(4.10) in that work) and a gauge-invariant system ((4.11)-(4.12) in that work).  
\end{remark}
\begin{samepage}
\begin{corollary} \label{cor:gaussian_case}
{\ }
\begin{enumerate}
\item Choose initial data $a_0(y)$ of the form \eqref{eq:gaussian_initial_data}-\eqref{eq:conditions_on_A_B_again} with $N = \pi^{-d/4}$ so that $\| a_0(y) \|_{L^2_y} = 1$. 
 Then, $a^\epsilon(y,t)$, the solution of the initial value problem \eqref{eq:a_epsilon_equation}, is given by:
\begin{equation} \label{eq:a_epsilon_gaussian}
	a^\epsilon(y,t) = \frac{N}{ [ \mathrm{det} A^\epsilon(t) ]^{1/2} } \exp \left( i \frac{1}{2} y \cdot B^\epsilon(t) {A^\epsilon}^{-1}(t) y \right)\ ,
\end{equation}
where $A^\epsilon(t), B^\epsilon(t)$ satisfy:
\begin{equation} \label{eq:equations_for_A_B}
\begin{split}
	&\dot{A}^\epsilon(t) = D^2_{\mathcal{P}^\epsilon} E_n(\mathcal{P}^\epsilon(t)) B^\epsilon(t), \quad \dot{B}^\epsilon(t) = - D^2_{\mathcal{Q}^\epsilon} W(\mathcal{Q}^\epsilon(t)) A^\epsilon(t), \\
	&A^\epsilon(0) = A_0, B^\epsilon(0) = B_0.
\end{split}
\end{equation}
\item After the change of variables (\ref{eq:change_to_canonicals}), the full coupled system governing $(\mathscr{Q}^\epsilon,\mathscr{P}^\epsilon,A^\epsilon(t), B^\epsilon(t))$ governed by (\ref{eq:eom_for_center_of_mass})\ (with $o(\epsilon)$ terms dropped) and (\ref{eq:equations_for_A_B}), is expressible as a Hamiltonian system: 
\begin{equation} \label{eq:gaussian_eom}
\begin{split}
	&\dot{\mathscr{Q}}^\epsilon = \nabla_{\mathscr{P}^\epsilon} \mathcal{H}^\epsilon, \dot{\mathscr{P}}^\epsilon = - \nabla_{\mathscr{Q}^\epsilon} \mathcal{H}^\epsilon	\\
	&\dot{A}^\epsilon(t) = 4 \pd{\mathcal{H}^\epsilon}{\overline{B^\epsilon}}, \dot{B}^\epsilon(t) = - 4 \pd{\mathcal{H}^\epsilon}{\overline{A^\epsilon}}
\end{split}
\end{equation}
with Hamiltonian:
\begin{equation} \label{eq:periodic_hamiltonian_gaussian}
\begin{split}
	&\mathcal{H}^\epsilon(\mathscr{P}^\epsilon,\mathscr{Q}^\epsilon,\overline{A^\epsilon},A^\epsilon,\overline{B^\epsilon},B^\epsilon) := E_n(\mathscr{P}^\epsilon) + W(\mathscr{Q}^\epsilon) + \epsilon \nabla_{\mathscr{Q}^\epsilon} W(\mathscr{Q}^\epsilon) \cdot \mathcal{A}_{n}(\mathscr{P}^\epsilon)	\\
	&+ \epsilon \frac{1}{4} \text{Tr} \left[ ( \overline{ B^{\epsilon} } )^T D^2_{\mathcal{P}^\epsilon} E_n(\mathscr{P}^\epsilon) B^\epsilon \right] + \epsilon \frac{1}{4} \text{Tr} \left[ ( \overline{ A^{\epsilon} } )^T D^2_{\mathscr{Q}^\epsilon} W(\mathscr{Q}^\epsilon) A^\epsilon \right].
\end{split}
\end{equation}
\end{enumerate}
\end{corollary}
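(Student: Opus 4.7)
The plan is to prove the two assertions essentially separately, with Part (2) building on Part (1). Both are verification-style results that follow from material established earlier in the paper, so the argument should be concrete and computational.

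\textbf{Part (1).} The operator $\mathscr{H}^\epsilon(t)$ of \eqref{eq:a_epsilon_equation} has the exact same form as $\mathscr{H}(t)$ of \eqref{eq:def_of_H}, a time-dependent quantum harmonic oscillator, except that the Hessians are evaluated at $(\mathcal{Q}^\epsilon(t),\mathcal{P}^\epsilon(t))$ rather than $(q(t),p(t))$. Since the proof of Proposition \ref{prop:gaussian_wavepackets} uses only the structural form of the Hamiltonian and not the particular trajectory, I would reproduce it verbatim with this substitution: insert the Gaussian ansatz \eqref{eq:a_epsilon_gaussian} into \eqref{eq:a_epsilon_equation}, compute $i\partial_t a^\epsilon$ and $\mathscr{H}^\epsilon(t) a^\epsilon$, and match coefficients of $1$, $y_i$, and $y_iy_j$ in the exponent. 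The quadratic matching yields a Riccati equation for $B^\epsilon(A^\epsilon)^{-1}$ that is equivalent to the coupled linear system \eqref{eq:equations_for_A_B}; the prefactor matches via Jacobi's formula $\partial_t \log \det A^\epsilon = \mathrm{Tr}((A^\epsilon)^{-1}\dot A^\epsilon)$, which is consistent since $\mathrm{Tr}((A^\epsilon)^{-1} D^2_{\mathcal{P}^\epsilon}E_n B^\epsilon) = \mathrm{Tr}(D^2_{\mathcal{P}^\epsilon}E_n B^\epsilon (A^\epsilon)^{-1})$. Preservation of the symplectic conditions \eqref{eq:conditions_on_A_B_again} for $A^\epsilon(t), B^\epsilon(t)$ is the same quasi-conservation argument as in Proposition \ref{prop:gaussian_wavepackets}, using that $D^2 E_n$ and $D^2 W$ are symmetric.

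\textbf{Part (2).} The $(\mathscr{Q}^\epsilon,\mathscr{P}^\epsilon)$ half of Hamilton's equations follows from Theorem \ref{prop:center_of_mass_proposition}(4) once I verify that, for Gaussian $\mathfrak{a}^\epsilon$, the envelope functionals in \eqref{eq:periodic_hamiltonian} reduce to the trace expressions in \eqref{eq:periodic_hamiltonian_gaussian}. Using $\nabla_y \mathfrak{a}^\epsilon = iB^\epsilon(A^\epsilon)^{-1} y\, \mathfrak{a}^\epsilon$ and the Gaussian second-moment identity $\int y_i y_j |\mathfrak{a}^\epsilon|^2\,dy = \tfrac{1}{2}(A^\epsilon \overline{(A^\epsilon)^T})_{ij}$ (which follows from assertion (3) of Remark \ref{matrices} and the normalization $\|a_0\|_{L^2_y}=1$), one computes $\tfrac12\langle y\mathfrak{a}^\epsilon| \cdot D^2_{\mathscr{Q}}W\,y\mathfrak{a}^\epsilon\rangle = \tfrac14\,\mathrm{Tr}[(\overline{A^\epsilon})^T D^2_{\mathscr{Q}}W\,A^\epsilon]$, and the analogous identity for the $D^2_{\mathscr{P}}E_n$ term via $\overline{B^\epsilon}$ and $B^\epsilon$. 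The matrix half of Hamilton's equations requires computing $\partial\mathcal{H}^\epsilon/\partial\overline{A^\epsilon_{\mu\nu}}$ and $\partial\mathcal{H}^\epsilon/\partial\overline{B^\epsilon_{\mu\nu}}$, treating $A^\epsilon, \overline{A^\epsilon}, B^\epsilon, \overline{B^\epsilon}$ as independent Wirtinger variables. The key identity is $\partial/\partial \overline{B_{\mu\nu}}\,\mathrm{Tr}[(\overline B)^T M B] = (MB)_{\mu\nu}$ for any matrix $M$, and analogously for $A$; substituting $M = D^2_{\mathscr{P}^\epsilon}E_n$ and $M = D^2_{\mathscr{Q}^\epsilon}W$ respectively and checking the prefactors against \eqref{eq:gaussian_eom} yields the ODEs \eqref{eq:equations_for_A_B}.

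\textbf{Main obstacle.} I expect the main technical step to be the Gaussian reduction of the envelope functionals, specifically establishing the identities relating $\langle y\mathfrak{a}^\epsilon|\cdot M y\mathfrak{a}^\epsilon\rangle$ and $\langle \nabla_y\mathfrak{a}^\epsilon|\cdot M\nabla_y\mathfrak{a}^\epsilon\rangle$ to traces involving $A^\epsilon,B^\epsilon$ and their conjugates. This requires care with the complex structure, Hermitian (but not real) character of $A^\epsilon\overline{(A^\epsilon)^T}$, and symmetry of the Hessians, together with repeated use of trace cyclicity and index renaming. Once these reductions are in hand, the computation of Wirtinger derivatives and comparison with \eqref{eq:equations_for_A_B} is mechanical, and Part (2) reduces to verifying that \eqref{eq:periodic_hamiltonian_gaussian} is simply \eqref{eq:periodic_hamiltonian} evaluated on Gaussians of the form \eqref{eq:a_epsilon_gaussian}.
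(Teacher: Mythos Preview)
Your proposal is correct and follows the natural route. The paper does not provide a separate proof of this corollary; it is stated as an immediate consequence of Proposition~\ref{prop:gaussian_wavepackets} (for Part~(1)) and Theorem~\ref{prop:center_of_mass_proposition}(4) (for Part~(2)), and your plan carries out precisely the verification those references implicitly require: reapplying the Gaussian ansatz of Proposition~\ref{prop:gaussian_wavepackets} with coefficients evaluated along $(\mathcal{Q}^\epsilon,\mathcal{P}^\epsilon)$, then evaluating the envelope functionals in \eqref{eq:periodic_hamiltonian} on the Gaussian to obtain the trace expressions in \eqref{eq:periodic_hamiltonian_gaussian}, and finally checking the Wirtinger derivatives.

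One small point worth making explicit in your write-up: Part~(2) invokes Theorem~\ref{prop:center_of_mass_proposition}(4), whose envelope $\mathfrak{a}^\epsilon$ is driven by $(\mathscr{Q}^\epsilon,\mathscr{P}^\epsilon)$, whereas the matrices $A^\epsilon,B^\epsilon$ in \eqref{eq:equations_for_A_B} are driven by $(\mathcal{Q}^\epsilon,\mathcal{P}^\epsilon)$. Since $\mathscr{P}^\epsilon=\mathcal{P}^\epsilon$ and $\mathscr{Q}^\epsilon-\mathcal{Q}^\epsilon=O(\epsilon)$, the resulting discrepancy in the Hessian coefficients is $O(\epsilon)$ and, because these coefficients already appear multiplied by $\epsilon$ in $\mathcal{H}^\epsilon$, the effect is $o(\epsilon)$ and absorbed into the dropped terms. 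This is consistent with how the paper handles the passage from $a^\epsilon$ to $\mathfrak{a}^\epsilon$ in \eqref{eq:can_replace_a}, but it is worth stating so the reader sees why \eqref{eq:equations_for_A_B} and the Hamiltonian form \eqref{eq:gaussian_eom}--\eqref{eq:periodic_hamiltonian_gaussian} are compatible.
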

\end{samepage}
\begin{remark}
In the special case where the periodic background potential $V = 0$ the Bloch band dispersion function $E_n(\mathscr{P}^\epsilon)$ reduces to the `free' dispersion relation $\frac{1}{2} (\mathscr{P}^\epsilon)^2$ and the Hamiltonian (\ref{eq:periodic_hamiltonian_gaussian}) takes on the simple form: 
\begin{equation} \label{eq:V_0_Hamiltonian}
\begin{split}
	&\mathcal{H}^\epsilon(\mathscr{P}^\epsilon,\mathscr{Q}^\epsilon,\overline{A^\epsilon},A^\epsilon,\overline{B^\epsilon},B^\epsilon) := \frac{1}{2} (\mathscr{P}^\epsilon)^2 + W(\mathscr{Q}^\epsilon) 	\\
	&+ \epsilon \frac{1}{4} \text{Tr} \left[ ( \overline{ A^{\epsilon} } )^T D^2_{\mathscr{Q}^\epsilon} W(\mathscr{Q}^\epsilon) A^\epsilon \right] + \epsilon \frac{1}{4} \text{Tr} \left[ ( \overline{ B^\epsilon} )^T B^\epsilon \right]. 
\end{split}
\end{equation}
The system (\ref{eq:gaussian_eom}), with Hamiltonian $\mathcal{H}^\epsilon$ given by (\ref{eq:V_0_Hamiltonian}) has been derived by other methods: see Proposition 4.4 and equations (32b)-(32c) of \cite{ohsawa}. It was shown furthermore in \cite{ohsawa_leok} that corrections to the dynamics of $\mathscr{Q}^\epsilon, \mathscr{P}^\epsilon$ proportional to $\epsilon$ due to `field-particle' coupling to the $A^\epsilon, B^\epsilon$ system can lead to qualitatively different dynamical behavior. In particular, the coupling may destabilize periodic orbits of the unperturbed ($\epsilon = 0$) system; see Section 9 of \cite{ohsawa_leok}. 
\end{remark}

\subsection{Discussion of results, relation to previous work} \label{sec:discussion_of_results}

The $\epsilon \downarrow 0$ limit of (\ref{eq:original_equation}) has been studied by other methods. For example, by space-adiabatic perturbation theory \cite{panati_spohn_teufel_1}\cite{panati_spohn_teufel_2}\cite{teufel_book}\cite{2013StiepanTeufel}, and by studying the propagation of Wigner functions associated to the solution of (\ref{eq:original_equation}) \cite{markowich_mauser_poupaud}\cite{bechouche_mauser_poupaud}\cite{carles_fermaniankammerer_mauser_stimming}. The Wigner function approach is notable in that it has been used to study the propagation of wavepacket solutions of (\ref{eq:original_equation}) through band crossings \cite{lasser_swart_teufel}\cite{fermaniankammerer_gerard_lasser}. It was shown in \cite{e_lu_yang} that the anomalous velocity due to Berry curvature can be derived by a multiscale WKB-like ansatz by studying the characteristic equations of a corrected eikonal equation. The Hamiltonian structure of equations (\ref{eq:eom_for_center_of_mass}) without field-particle coupling terms was studied in \cite{2006DuvalHorvathHorvathyMartinaStichel} 

The effective system (\ref{eq:eom_for_center_of_mass}), in particular the `particle-field' coupling that we derive, is original to this work. Such coupled `particle-field' models arise naturally in many settings where a coherent structure interacts with a linear or nonlinear wave-field; see, for example \cite{wayne_weinstein} and references therein.  

The results detailed in Section \ref{sec:statement_of_results} generalize to the case where the potential has the more general form $U\left(\frac{x}{\epsilon},x\right)$ where $U$ is periodic in its first argument: 
\begin{equation}
	U(z + v,x) = U(z,x) \text{ for all } z, x \in \field{R}^d, v \in \Lambda.
\end{equation}
If $U(z,x)$ is not expressible as the sum of a periodic potential $V(z)$ and a smooth potential $W(x)$ we will say that $U$ is `non-separable'. In this case we must work with an $x$-dependent Bloch eigenvalue problem:
\begin{equation} \label{eq:reduced_eigenvalue_problem_again}
\begin{split}
	&H(p,x) \chi_n(z;p,x) = E_n(p,x) \chi_n(z;p,x)	\\	
	&\chi_n(z + v;p,x) = \chi_n(z;p,x) \text{ for all } z, x \in \field{R}^d, v \in \Lambda	\\
	&H(p,x) := \frac{1}{2} ( p - i \nabla_z )^2 + U(z,x).
\end{split}
\end{equation}
For details, see \cite{thesis}. Related problems were considered in \cite{2002JohnsonBienstmanSkorobogatiyIbanescuLidorikisJoannopoulos}\cite{2016Schnitzer}. An interesting example of a potential of this type is that of a domain wall modulated honeycomb lattice potential, which was shown to support `topologically protected' edge states in \cite{fefferman_leethorp_weinstein}.
\bigskip

\noindent{\bf Acknowledgements:} The authors wish to thank George Hagedorn, Christof Sparber, and Tomoki Ohsawa for stimulating discussions. This research was supported in part by National Science Foundation Grants  DMS-1412560 (AW \& MIW) and DMS-1454939 (JL),
and Simons Foundation Math + X Investigator Award \#376319 (MIW).

\section{Summary of relevant Floquet-Bloch theory} \label{sec:floquet_bloch_theory}
In this section we recall the spectral theory of the operator:
\begin{equation} \label{eq:original_operator}
	H := - \frac{1}{2} \Delta_z + V(z).
\end{equation}
where $V$ is periodic with respect to the lattice $\Lambda$ \cite{kuchment}\cite{reed_simon_4}. For $p \in \field{R}^d$, define the spaces of $p-$pseudo-periodic $L^2$ functions as follows:
\begin{equation}
	L^2_p := \left\{ f \in L^2_{loc} : f(z+v) = e^{i p \cdot v} f(z) \text{ for all } z \in \field{R}^d, v \in \Lambda \right\}. 
\end{equation}
Let $\Lambda^*$ denote the lattice dual to $\Lambda$:
\begin{equation}
	\Lambda^* := \left\{ b \in \field{R}^d : \exists v \in \Lambda : v \cdot b = 2 \pi n, n \in \field{Z} \right\}
\end{equation}
since the $p-$pseudo-periodic boundary condition is invariant under $p \rightarrow p + b$ where $b \in \Lambda^*$, the dual lattice to $\Lambda$, it is natural to restrict to a fundamental cell, $\mathcal{B}$. 

We now consider the family of eigenvalue problems depending on the parameter $p \in \mathcal{B}$:
\begin{equation} \label{eq:bloch_eigenvalue_problem}
\begin{split}
	&H \Phi(z;p) = E(p) \Phi(z;p)	\\
	&\Phi(z + v;p) = e^{i p \cdot v} \Phi(z;p) \text{ for all } z \in \field{R}^d, v \in \Lambda	\\
\end{split}
\end{equation}
We can also define the space $L^2_{loc}$ functions which are periodic with respect to the lattice: 
\begin{equation}
	L^2_{per} := \left\{ f(z) \in L^2_{loc}(\field{R}^d) : \forall v \in \Lambda, f(z+v) = f(z) \right\}.
\end{equation}
Then solving the eigenvalue problem (\ref{eq:bloch_eigenvalue_problem}) is equivalent via $\Phi(z;p) = e^{i p \cdot z} \chi(z;p)$ to solving the family of eigenvalue problems: 
\begin{equation} \label{eq:eigenvalue_problem}
\begin{split}
	&H(p) \chi(z;p) = E(p) \chi(z;p), \\
	&\chi(z + v;p) = \chi(z;p) \text{ for all } z \in \field{R}^d, v \in \Lambda	\\
	&H(p) = \frac{1}{2} \left( p - i \nabla_{z} \right)^2 + V(z) 
\end{split}	
\end{equation}

For fixed $p$, the operator $H(p)$ with periodic boundary conditions is self-adjoint and has compact resolvent. So, for each $n \in \field{N}$, there exists an eigenpair $E_n(p), \chi_n(z;p)$. The eigenvalues are real and can be ordered with multiplicity: 
\begin{equation}
	E_1(p) \leq E_2(p) \leq ... \leq E_{n-1}(p) \leq E_n(p) \leq E_{n+1}(p) \leq ...
\end{equation}  
and the set of normalized eigenfunctions $\left\{ \chi_n(z;p) : n \in \field{N} \right\}$ is complete in $L^2_{per}$. The set of Floquet-Bloch waves $\left\{ \Phi_n(z;p) = e^{i p z} \chi_n(z;p) : n \in \field{N}, p \in \mathcal{B} \right\}$ are complete in $L^2(\field{R}^d)$: 
\begin{equation} \label{eq:linear_comb_of_bloch_waves}
	g \in L^2(\field{R}^d) \implies g(x) = \sum_{n \geq 1} \inty{\field{B}}{}{\tilde{g}_n(p) \Phi_n(x;p)}{p}, \text{ where } \tilde{g}_n(p) := \ip{\Phi_n(\cdot;p)}{g(\cdot)}_{L^2(\field{R}^d)}
\end{equation}
where the sum converges in $L^2$. The $L^2(\field{R}^d)$ spectrum of the operator (\ref{eq:original_operator}) is obtained by taking the union of the closed real intervals swept out as $p$ varies over the Brillouin zone $\mathcal{B}$:
\begin{equation}
	\sigma(H)_{L^2(\field{R}^d)} = \cup_{n \in \field{R}} \overline{ \left\{ E_n(p) : p \in \mathcal{B} \right\} }
\end{equation} 

Our results require sufficient regularity of the maps: 
\begin{align} \label{eigenvalue_band_maps}
	&E_n:\mathcal{B} \rightarrow \field{R}, &	&p \mapsto E_n(p)	\\
	&\chi_n:\mathcal{B} \rightarrow L^2_{per}, &	&p \mapsto \chi_n(z;p) 
\end{align}
\begin{definition} \label{def:isolated_band}
We will call an eigenvalue band $E_n(p)$ of the problem (\ref{eq:eigenvalue_problem}) \emph{isolated} at a point $p \in \mathcal{B}$ if: 
\begin{equation} \label{eq:isolated_band_at_a_point}
	\inf_{m \neq n} | E_m(p) - E_n(p) | > 0.
\end{equation} 
\end{definition}
We have in this case:
\begin{theorem}[Smoothness of isolated bands]\label{th:analyticity_of_isolated_bands}
Let $E_n(p), \chi_n(z;p)$ satisfy the eigenvalue problem (\ref{eq:eigenvalue_problem}). Let the band $E_n(p)$ be isolated at a point $p_0$ in the sense of Definition \ref{def:isolated_band}. Then the maps (\ref{eigenvalue_band_maps}) are smooth in a neighborhood of the point $p_0$. 
\end{theorem}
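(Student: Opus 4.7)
The plan is to invoke the analytic perturbation theory of self-adjoint operator families (Kato, Reed--Simon IV). First I would observe that, writing
\[
H(p) = -\tfrac12 \Delta_z - i p\cdot \nabla_z + \tfrac12 |p|^2 + V(z),
\]
the family $\{H(p)\}$ is polynomial (quadratic) in $p$ with operator coefficients. The zeroth-order operator $H(0) = -\tfrac12\Delta_z + V$ is self-adjoint on $H^2_{per}(\Omega)$, and the first and second order terms in $p$ are bounded relative to $H(0)$ with arbitrarily small relative bound. Thus $H(p)$ is self-adjoint on the $p$-independent domain $H^2_{per}(\Omega)$, and $\{H(p)\}_{p\in\mathbb{C}^d}$ is an analytic family of type (A) in the sense of Kato.

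Next, I would exploit the isolation hypothesis to build the Riesz spectral projector. By Definition \ref{def:isolated_band} there is $\delta > 0$ with $\inf_{m\neq n}|E_m(p_0) - E_n(p_0)| > 2\delta$, and since the eigenvalues are ordered with multiplicity, the hypothesis also forces $E_n(p_0)$ to be a simple eigenvalue. Let $\Gamma\subset\mathbb{C}$ be the positively-oriented circle of radius $\delta$ centered at $E_n(p_0)$. By continuity of $p\mapsto H(p)$ in the norm-resolvent topology, there is a neighborhood $\mathcal{U}$ of $p_0$ such that for each $p\in\mathcal{U}$, the contour $\Gamma$ still encircles exactly one eigenvalue $E_n(p)$ of $H(p)$ (simple, since rank is preserved), with the rest of $\sigma(H(p))$ lying outside $\Gamma$. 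Set
\[
P_n(p) := -\frac{1}{2\pi i}\oint_\Gamma (H(p) - \zeta)^{-1}\, d\zeta .
\]

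Then I would show that $(H(p) - \zeta)^{-1}$ depends smoothly (in fact analytically) on $p\in\mathcal{U}$, uniformly for $\zeta\in\Gamma$. This follows from the type (A) resolvent identity
\[
(H(p) - \zeta)^{-1} = (H(p_0) - \zeta)^{-1} \bigl[ I + (H(p) - H(p_0))(H(p_0) - \zeta)^{-1}\bigr]^{-1},
\]
combined with the fact that $H(p) - H(p_0)$ is a polynomial in $p - p_0$ with operator coefficients bounded from $H^2_{per}(\Omega)$ to $L^2_{per}(\Omega)$, and the Neumann series for the bracket converges on $\mathcal{U}$ (possibly shrinking it). Integration over $\Gamma$ preserves smoothness, so $p\mapsto P_n(p)$ is smooth (indeed analytic) into the bounded operators on $L^2_{per}(\Omega)$, and $P_n(p)$ remains a rank-one orthogonal projector. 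The eigenvalue is recovered as
\[
E_n(p) = \mathrm{tr}\bigl(H(p) P_n(p)\bigr),
\]
which is then smooth in $p$.

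For the eigenfunction I must fix a phase. Locally near $p_0$ define
\[
\tilde{\chi}_n(z;p) := P_n(p)\chi_n(z;p_0), \qquad \chi_n(z;p) := \frac{\tilde{\chi}_n(z;p)}{\|\tilde{\chi}_n(\cdot;p)\|_{L^2_{per}}} .
\]
Because $P_n(p_0)\chi_n(z;p_0) = \chi_n(z;p_0)\neq 0$, continuity ensures the denominator is bounded away from zero on a (possibly smaller) neighborhood of $p_0$, and smoothness of $P_n(p)$ transfers to smoothness of $p\mapsto \chi_n(\cdot;p)\in L^2_{per}(\Omega)$. The main obstacle is really packaged into the analytic perturbation theory; the subtle point in that step is controlling the relative bounds of the $p$-dependent perturbations on $H^2_{per}(\Omega)$ uniformly in $\zeta\in\Gamma$, but this is standard. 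Global phase choices, which are the truly delicate issue for Bloch bundles, do not arise because the conclusion is only local in $p$.
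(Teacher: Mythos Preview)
Your argument is correct and is precisely the standard Kato/Reed--Simon analytic-perturbation proof. Note, however, that the paper does not actually supply its own proof of this theorem: it is stated in Section~\ref{sec:floquet_bloch_theory} as background, with the implicit appeal to \cite{kuchment} and \cite{reed_simon_4}. Your write-up therefore fills in exactly the argument those references contain (type~(A) family, Riesz projector, resolvent smoothness, trace formula for the eigenvalue, and local phase-fixing for the eigenfunction), so there is nothing to compare against beyond confirming that your approach is the canonical one the paper has in mind.
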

When bands are not isolated we have the following situation:
\begin{definition} \label{def:band_crossing} Let $E_m(p), E_n(p) : m, n \in \field{N}, m \neq n$ be eigenvalue bands of the eigenvalue problem (\ref{eq:eigenvalue_problem}). If $p^* \in \mathcal{B}$ is such that:
\begin{equation}
	E_m(p^*) = E_n(p^*),
\end{equation}
we will say that the bands $E_n(p)$ and $E_m(p)$ have a \emph{band crossing} at $p^*$.
\end{definition}
In a neighborhood of a crossing, the band functions $E_n(p), E_m(p)$ are  only Lipschitz continuous, and the eigenfunction maps $p \mapsto \chi_n(z;p), \chi_m(z;p)$ may be discontinuous \cite{kuchment}. This loss of regularity occurs at conical degeneracies, which appear, for example, in the band structure of honeycomb lattice potentials \cite{fefferman_weinstein_diracpoints} \cite{fefferman_weinstein}, and in the dispersion surfaces of plane waves for homogeneous anisotropic media \cite{berry_jeffrey}. An in depth study of conical crossings which appear in the study of the Born-Oppenheimer approximation of molecular dynamics was given by \cite{hagedorn}.  

It will be convenient to extend the maps $p \mapsto E_n(p), \chi_n(z;p)$ to maps on all of $\field{R}^d$. Let $p \in \mathcal{B}$, and let $b \in \Lambda^*$ denote a reciprocal lattice vector. Then we have that:
\begin{equation}
\begin{split}
	&H(p + b) \left( e^{- i b \cdot z} \chi_n(z;p) \right) = e^{- i b \cdot z} H(p) \chi_n(z;p) 	\\
	&\quad = e^{- i b \cdot z} E_n(p) \chi_n(z;p) = E_n(p) \left( e^{- i b \cdot z} \chi_n(z;p) \right)	\\
	&\text{for all $v \in \Lambda$, } e^{- i b \cdot (z + v)} \chi_n(z + v;p) 	\\
	&\quad = e^{- i b \cdot v} e^{- i b \cdot z} \chi_n(z;p) = e^{- i b \cdot z} \chi_n(z;p),
\end{split}
\end{equation}
so that if $\chi_n(z;p)$ satisfies (\ref{eq:eigenvalue_problem}) with eigenvalue $E_n(p)$, then $ e^{- i b \cdot z} \chi_n(z;p)$ satisfies (\ref{eq:eigenvalue_problem}) with $p$ replaced by $p + b$, with the same eigenvalue. It then follows that the map $p \mapsto E_n(p)$ extends to a periodic function with respect to the reciprocal lattice $\Lambda^*$. If the eigenvalue $E_n(p)$ is simple, then: (up to a constant phase shift) $\chi_n(z;p + b) = e^{- i b \cdot z} \chi_n(z;p)$. 

\section{Proof of Theorem \ref{th:periodic_background_nonseparable_isolated_band_theorem} by multiscale analysis} \label{sec:proof_of_first_theorem}
\subsection{Derivation of asymptotic solution (\ref{eq:nonseparable_asymptotic_solution}) via multiscale expansion} \label{isolated_band_theory_nonseparable_case}
Following Hagedorn \cite{hagedorn_coherent_states_1}, \cite{hagedorn}, and Carles and Sparber \cite{carles_sparber}, we seek a solution of (\ref{eq:original_equation}) of the form:
\begin{equation} \label{eq:basic_ansatz}
	\psi^\epsilon(x,t) = \left. \epsilon^{-d/4} e^{i S(t) / \epsilon} e^{ i p(t) \cdot y / \epsilon^{1/2} } f^\epsilon(y,z,t) \right|_{z = \frac{x}{\epsilon}, y = \frac{x - q(t)}{\epsilon^{1/2}}} + \eta^\epsilon(x,t). 
\end{equation}
Substituting (\ref{eq:basic_ansatz}) into (\ref{eq:original_equation}) gives an inhomogeneous time-dependent Schr\"{o}dinger equation for $\eta^\epsilon(x,t)$, with a source term $r^\epsilon(x,t)$ which depends on $S(t), q(t), p(t)$, and $f^\epsilon(y,z,t)$: 
\begin{equation} \label{eq:equation_for_eta}
\begin{split}
	&i \epsilon \de_t \eta^\epsilon(x,t) =\left[ - \frac{\epsilon^2}{2} \Delta_x + V\left(\frac{x}{\epsilon}\right) + W(x) \right] \eta^\epsilon(x,t) + r^\epsilon[S,q,p,f^\epsilon](x,t)	\\
	&\eta^\epsilon(x,0) = \eta^\epsilon_0[S,q,p,f^\epsilon](x) = \psi^\epsilon(x,0) - \left. \epsilon^{-d/4} e^{i S(0) / \epsilon} e^{ i p(0) \cdot y / \epsilon^{1/2} } f^\epsilon(z,y,0) \right|_{z = \frac{x}{\epsilon}, y = \frac{x - q(0)}{\epsilon^{1/2}}} 
\end{split}
\end{equation}
The idea behind the proof of Theorem \ref{th:periodic_background_nonseparable_isolated_band_theorem} is to choose the functions $S(t), q(t), p(t)$, and $f^\epsilon(y,z,t)$ so that:
\begin{equation}
\begin{split}
	&r^\epsilon[S,q,p,f^\epsilon](x,t) = O(\epsilon^2) 	\\
	&\eta_0^\epsilon[S,q,p,f^\epsilon](x) = O(\epsilon)
\end{split}
\end{equation}
We will derive $S(t), q(t), p(t), f^\epsilon(y,z,t)$ by a systematic formal analysis. This is the content of Sections \ref{sec:leading_order_terms}, \ref{sec:first_order_terms}, \ref{sec:second_and_third_order_terms}. Proving rigorous bounds on the residual will be the content of Section \ref{sec:estimation_of_residual}. The bound (\ref{eq:bound_on_eta}) on $\eta^\epsilon(x,t)$ will then follow from applying the standard a priori $L^2$ bound for solutions of the time-dependent inhomogeneous Schr\"{o}dinger equation. 

Before starting on the formal asymptotic analysis, we note some exact manipulations which will ease calculations below. The residual $r^\epsilon(x,t)$ has the explicit form:
\begin{equation}
\begin{split}
	&r^\epsilon(x,t) = \epsilon^{-d/4} e^{i S(t) / \epsilon} e^{i p(t) \cdot y / \epsilon^{1/2}} \left\{ \epsilon \left[ \frac{1}{2} (- i \nabla_{y})^2 - i \de_t \right] \right. \\
	&+ \epsilon^{1/2} \left[ \vphantom{\frac{1}{2}} (p(t) - i \nabla_z) \cdot ( - i \nabla_{y}) - \dot{q}(t) \cdot (- i \nabla_{y}) + \dot{p}(t) \cdot y \right] \\
	&\left. \left.+ \left[ \dot{S}(t) - \dot{q}(t) p(t) + \frac{1}{2}(p(t) - i \nabla_z)^2 + V(z) + W(q(t) + \epsilon^{1/2} y) \right] \right\} f^\epsilon(y,z,t) \right|_{z = \frac{x}{\epsilon}, y = \frac{x - q(t)}{\epsilon^{1/2}}}	\\
\end{split}
\end{equation}
Since $W$ is assumed smooth, we can replace $W(q(t) + \epsilon^{1/2}y)$ by its Taylor series expansion in $\epsilon^{1/2} y$:
\begin{equation} \label{eq:taylor_expansion_of_W}
\begin{split}
	&W(q(t) + \epsilon^{1/2}y) = W(q(t)) + \epsilon^{1/2} \nabla_{q} W(q(t)) \cdot y + \epsilon \frac{1}{2} \de_{q_\alpha} \de_{q_\beta} W(q(t)) y_\alpha y_\beta   \\	
	&+ \epsilon^{3/2} \frac{1}{6} \de_{q_\alpha}\de_{q_\beta} \de_{q_\gamma} W(q(t)) y_\alpha y_\beta y_\gamma + \epsilon^2 \inty{0}{1}{ \frac{(\tau - 1)^4}{4!} \de_{q_\alpha} \de_{q_\beta} \de_{q_\gamma} \de_{q_\delta} W(q(t) + \tau \epsilon^{1/2} y) }{\tau} y_\alpha y_\beta y_\gamma y_\delta.
\end{split}
\end{equation}
We expand $f^\epsilon(y,z,t)$ as a formal power series:
\begin{equation} \label{eq:expanded_f}
	f^\epsilon(y,z,t) = f^0(y,z,t) + \epsilon^{1/2} f^1(y,z,t) + ... 
\end{equation}
and assume that for all $j \in \{0,1,2,...\}$ the $f^j(y,z,t)$ are periodic with respect to the lattice in $z$ and have sufficient smoothness and decay in $y$:
\begin{equation} \label{eq:assumptions_on_f_j}
\begin{split}
	&\text{ for all } v \in \Lambda, f^j(y,z + v,t) = f^j(y,z,t)	\\
	&f^j(y,z,t) \in \Sigma_y^{R - j}(\field{R}^d).
\end{split}
\end{equation}
The $\Sigma^l$-spaces are defined in (\ref{eq:sigma_l_spaces}). $R > 0$ is a fixed positive integer which we will take as large as required. Recall the notation:
\begin{equation} \label{eq:notation_for_operator}
	H(p) := \frac{1}{2} (p - i \nabla_z)^2 + V(z).
\end{equation}
Substituting (\ref{eq:taylor_expansion_of_W}) and (\ref{eq:expanded_f}) then gives:
\begin{equation} \label{eq:everything_expanded}
\begin{split}
	&r^\epsilon(x,t) = \epsilon^{-d/4} e^{i S(t) / \epsilon} e^{i p(t) \cdot y / \epsilon^{1/2}} \left\{ \epsilon^{2} \left[ \inty{0}{1}{ \frac{(\tau - 1)^4}{4!} \de_{q_\alpha} \de_{q_\beta} \de_{q_\gamma} \de_{q_\delta} W(q(t) + \tau \epsilon^{1/2} y) }{\tau} y_\alpha y_\beta y_\gamma y_\delta \right] \right. \\
	&+ \epsilon^{3/2} \left[ \frac{1}{6} \de_{q_\alpha} \de_{q_\beta} \de_{q_\gamma} W(q(t)) y_\alpha y_\beta y_\gamma \right] + \epsilon \left[ \frac{1}{2} (- i \nabla_{y})^2 + \frac{1}{2} \de_{q_\alpha} \de_{q_\beta} W(q(t)) y_\alpha y_\beta - i \de_t \right] \\
	&+ \epsilon^{1/2} \left[ \vphantom{\frac{1}{2}} (p(t) - i \nabla_z) \cdot ( - i \nabla_{y}) + \nabla_{q} W(q(t)) \cdot y - \dot{q}(t) \cdot (- i \nabla_{y}) + \dot{p}(t) \cdot y \right] \\
	&\left. \left.+ \left[ \dot{S}(t) - \dot{q}(t) \cdot p(t) + H(p(t)) \right] \right\} \left\{ f^0(y,z,t) + \epsilon^{1/2} f^1(y,z,t) + ... \right\} \right|_{z = \frac{x}{\epsilon}, y = \frac{x - q(t)}{\epsilon^{1/2}}}	\\
\end{split}
\end{equation}
In order to prove Theorem \ref{th:periodic_background_nonseparable_isolated_band_theorem} it will be sufficient to choose the $f^j(y,z,t), j \in \{0,...,3\}$ so that terms of orders $\epsilon^{j/2}, j \in \{0,...,3\}$ vanish. With this choice of $f^j, j \in \{0,...,3\}$ we will then prove rigorously in Section \ref{sec:estimation_of_residual} that $r^\epsilon(x,t)$ can be bounded by $C \epsilon^2 e^{c t}$ for constants $c > 0, C > 0$ independent of $\epsilon, t$. There will then be no loss of accuracy in the approximation by taking $f^j(y,z,t) = 0, j \geq 4$.

\subsubsection{Analysis of leading order terms} \label{sec:leading_order_terms}
Recall that we assume each $f^j, j \in \{0,1,2,...\}$ to be periodic with respect to the lattice $\Lambda$ in $z$ (\ref{eq:assumptions_on_f_j}). Collecting terms of order $1$ in (\ref{eq:everything_expanded}) and setting equal to zero therefore gives the following self-adjoint elliptic eigenvalue problem in $z$:
\begin{equation} \label{eq:equation_for_f_0}
\begin{split}
	&H(p(t)) f^0(y,z,t) = \left[ - \dot{S}(t) + \dot{q}(t) \cdot p(t) \right] f^0(y,z,t)	\\
	&\text{for all } v \in \Lambda, f^0(y,z + v,t) = f^0(y,z,t) 	\\
	&f^0(y,z,t) \in \Sigma^R_y(\field{R}^d)
\end{split}
\end{equation}
Under Assumption \ref{isolated_band_assumption_nonseparable_case}, $E_n(p(t))$ is a simple eigenvalue with eigenfunction $\chi_n(z;p(t))$ for all $t \geq 0$. Projecting equation (\ref{eq:equation_for_f_0}) onto the subspace of:
\begin{equation}
	L^2_{per} := \left\{ f \in L^2_{loc}(\field{R}^d) : \forall v \in \Lambda, f(z+v) = f(z) \right\}.
\end{equation}
spanned by $\chi_n(z;p(t))$ implies:
\begin{equation} \label{eq:dot_S}
	\dot{S}(t) = \dot{q}(t) \cdot p(t) - E_n(p(t)) 	
\end{equation}
which, after matching with the initial data (\ref{eq:equation_and_initial_data}), implies (\ref{eq:action_integral}). Equation (\ref{eq:equation_for_f_0}) then becomes:
\begin{equation}
\begin{split}
	&\left[ H(p(t)) - E_n(p(t)) \right] f^0(y,z,t) = 0		\\
	&\text{for all } v \in \Lambda, f^0(y,z + v,t) = f^0(y,z,t)	\\
	&f^0(y,z,t) \in \Sigma^R_y(\field{R}^d)
\end{split}
\end{equation}
which has the general solution: 
\begin{equation} \label{eq:leading_order_solution}
	f^0(y,z,t) = a^0(y,t) \chi_n(z;p(t)).
\end{equation}
where $a^0(y,t)$ is an arbitrary function in $\Sigma^R_y(\field{R}^d)$, to be fixed at higher order in the expansion. 

\subsubsection{Analysis of order $\epsilon^{1/2}$ terms} \label{sec:first_order_terms}
Collecting terms of order $\epsilon^{1/2}$ in (\ref{eq:everything_expanded}), substituting the form of $\dot{S}(t)$ (\ref{eq:dot_S}), and setting equal to zero gives the following inhomogeneous self-adjoint elliptic equation in $z$ for $f^1(y,z,t)$:
\begin{equation} \label{eq:equation_for_f_1}
\begin{split}
	&\left[ H(p(t)) - E_n(p(t)) \right] f^1(y,z,t) = \xi^1(y,z,t) 	\\
	&\text{for all } z \in \Lambda, f^1(y,z + v,t) = f^1(y,z,t); \; f^1(y,z,t) \in \Sigma^{R-1}_y(\field{R}^d)	\\
	&\xi^1 := - \left[ (p(t) - i \nabla_z) \cdot ( - i \nabla_{y}) + \nabla_{q} W(q(t)) \cdot y - \dot{q}(t) \cdot (- i \nabla_{y}) + \dot{p}(t) \cdot y \right] f^0(y,z,t).
\end{split}
\end{equation}
Before solving (\ref{eq:equation_for_f_1}) we remark on our general strategy for solving equations of this type.  
\begin{remark} \label{rem:general_strategy}
Collecting terms of orders $\epsilon^{j/2}$ for each $j \in \{1,2,...\}$ and setting equal to zero, we obtain inhomogeneous self-adjoint elliptic equations of the form:
\begin{equation} \label{eq:equation_for_f_j}
\begin{split}
	&\left[ H(p(t)) - E_n(p(t)) \right] f^j(y,z,t) = \xi^j[f^0,f^1,...,f^{j-1}](y,z,t) 	\\
	&\text{for all } z \in \Lambda, f^j(y,z + v,t) = f^j(y,z,t); \; f^j(y,z,t) \in \Sigma^{R-j}_y(\field{R}^d)
\end{split}
\end{equation}
Our strategy for solving (\ref{eq:equation_for_f_j}) will be the same for each $j$. Under Assumption \ref{isolated_band_assumption_nonseparable_case}, the eigenvalue $E_n(p(t))$ is simple with eigenfunction $\chi_n(z;p(t))$ for all $t \geq 0$. By the Fredholm alternative, equation (\ref{eq:equation_for_f_j}) is solvable if and only if:
\begin{equation} \label{eq:first_solvability_condition}
	\text{for all } t \geq 0, \ipz{\chi_n(z;p(t))}{\xi^j(y,z,t)} = 0. 
\end{equation}
We will first use identities derived in Appendix \ref{ch:useful_identities} from the eigenvalue equation $[H(p) - E_n(p)] \chi_n(z;p) = 0$ to write $\xi^j(y,z,t)$ as a sum:
\begin{equation}
	\xi^j(y,z,t) = \tilde{\xi}^j(y,z,t) + \left[ H(p(t)) - E_n(p(t)) \right] u^j(y,z,t)
\end{equation}
Note that by self-adjointness of $H(p(t)) - E_n(p(t))$, condition (\ref{eq:first_solvability_condition}) is equivalent to the same condition with $\xi^j(y,z,t)$ replaced by $\tilde{\xi}^j(y,z,t)$:
\begin{equation} \label{eq:general_solvability_condition}
	\text{for all } t \geq 0, \ipz{\chi_n(z;p(t))}{\tilde{\xi}^j(y,z,t)} = 0. 
\end{equation} 
For $f \in L^2_{per}$, define: 
\begin{equation} \label{eq:def_of_P_perp}
	P^\perp_n(p) f(z) := f(z) - \ipz{\chi_n(z;p)}{f(z)} \chi_n(z;p)
\end{equation}
to be the projection onto the orthogonal complement of the subspace of $L^2_{per}$ spanned by $\chi_n(z;p(t))$. Then, assuming (\ref{eq:general_solvability_condition}) is satisfied, the general solution of (\ref{eq:equation_for_f_j}) is:
\begin{equation} \label{eq:general_solution_f_j}
	f^j(y,z,t) = a^j(y,t) \chi_n(z;p(t)) + u^j(y,z,t) + \left[ H(p(t)) - E_n(p(t)) \right]^{-1} P^\perp_n(p(t)) \tilde{\xi}^j(y,z,t).
\end{equation}
Note that we have again made use of Assumption \ref{isolated_band_assumption_nonseparable_case} to ensure that the operator $[H(p(t)) - E_n(p(t))]^{-1} P^\perp_n(p(t)) : L^2_{per} \rightarrow L^2_{per}$ is bounded for all $t \geq 0$. When $j = 1$, condition (\ref{eq:general_solvability_condition}) may be enforced by choosing $\dot{q}(t), \dot{p}(t)$ to satisfy (\ref{eq:classical_system}). For $j \geq 2$, enforcing the constraint (\ref{eq:general_solvability_condition}) leads to evolution equations for $a^{j-2}(y,t)$. 
\end{remark}
We will give the proof of the following Lemma at the end of this section:
\begin{lemma} \label{lem:xi_1}
$\xi^1(y,z,t)$, defined in (\ref{eq:equation_for_f_1}), satisfies:
\begin{equation}
	\xi^1(y,z,t) = \tilde{\xi}^1(y,z,t) + \left[ H(p(t)) - E_n(p(t)) \right] u^1(y,z,t) 
\end{equation}
where:
\begin{equation} \label{eq:tilde_xi_and_u}
\begin{split}
	&\tilde{\xi}^1(y,z,t) := 	\\
	&- \left[ \left( \nabla_{p} E_n(p(t)) - \dot{q}(t) \right) \cdot (- i \nabla_{y}) a^0(y,t) + \left( \nabla_{q} W(q(t)) + \dot{p}(t) \right) \cdot y a^0(y,t) \right] \chi_n(z;p(t)) 	\\
	&u^1(y,z,t) := (- i \nabla_{y}) a^0(y,t) \cdot \nabla_{p} \chi_n(z;p(t))  		
\end{split}
\end{equation}
\end{lemma}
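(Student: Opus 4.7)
The plan is to compute $\xi^1(y,z,t)$ explicitly by substituting the leading order solution $f^0(y,z,t) = a^0(y,t)\chi_n(z;p(t))$ from (\ref{eq:leading_order_solution}) into its definition in (\ref{eq:equation_for_f_1}). Since $a^0(y,t)$ depends only on $y,t$ and $\chi_n(z;p(t))$ depends on $z$ (with $t$ entering only through the parameter $p(t)$), three of the four operators inside the bracket act trivially on $\chi_n$: the operators $\dot{q}(t)\cdot(-i\nabla_y)$, $\nabla_q W(q(t))\cdot y$, and $\dot{p}(t)\cdot y$ each multiply $\chi_n(z;p(t))$ by something purely $y$-dependent. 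Only the first summand, $(p(t) - i\nabla_z)\cdot(-i\nabla_y)$, couples a $y$-derivative of $a^0$ with a nontrivial $z$-action on $\chi_n$, and this is the single term that needs to be reorganized.

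The key algebraic input for reorganizing that mixed term is the identity obtained by differentiating the Bloch eigenvalue equation $[H(p) - E_n(p)]\chi_n(z;p) = 0$ with respect to $p_\alpha$. Using $\de_{p_\alpha} H(p) = p_\alpha - i\de_{z_\alpha}$, this yields
\begin{equation*}
(p_\alpha - i\de_{z_\alpha})\chi_n(z;p) = \de_{p_\alpha} E_n(p)\,\chi_n(z;p) - \bigl[H(p) - E_n(p)\bigr]\de_{p_\alpha}\chi_n(z;p),
\end{equation*}
which I expect to be among the identities collected in Appendix \ref{ch:useful_identities}. Contracting both sides with $(-i\de_{y_\alpha})a^0(y,t)$ and summing on $\alpha$ gives
\begin{equation*}
(-i\nabla_y a^0(y,t))\cdot (p(t) - i\nabla_z)\chi_n(z;p(t)) = \nabla_p E_n(p(t))\cdot(-i\nabla_y a^0(y,t))\,\chi_n(z;p(t)) - \bigl[H(p(t)) - E_n(p(t))\bigr]u^1(y,z,t),
\end{equation*}
where $u^1(y,z,t) := (-i\nabla_y)a^0(y,t)\cdot\nabla_p\chi_n(z;p(t))$ is precisely the function appearing in (\ref{eq:tilde_xi_and_u}).

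Substituting this identity back into the expression for $\xi^1$ and gathering the three remaining $\chi_n$-multiple terms with the resonant piece $\nabla_p E_n(p(t))\cdot(-i\nabla_y a^0)\chi_n$ recovers exactly $\xi^1 = \tilde{\xi}^1 + [H(p(t)) - E_n(p(t))]u^1$ with $\tilde{\xi}^1$ in the stated form. The main obstacle here is bookkeeping rather than substance; once the differentiated eigenvalue identity is in hand, each manipulation is a direct substitution. The purpose of the decomposition, as anticipated in Remark \ref{rem:general_strategy}, is to reduce the Fredholm solvability condition for (\ref{eq:equation_for_f_1}) to $\ipz{\chi_n(z;p(t))}{\tilde{\xi}^1(y,z,t)} = 0$; since $\|\chi_n(\cdot;p(t))\|_{L^2_z} = 1$, this condition collapses term by term to $\dot{q}(t) = \nabla_p E_n(p(t))$ and $\dot{p}(t) = -\nabla_q W(q(t))$, which is exactly the classical Hamiltonian system (\ref{eq:classical_system}).
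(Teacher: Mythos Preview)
Your proof is correct and follows essentially the same route as the paper: both substitute $f^0 = a^0\chi_n$ and invoke the identity obtained by differentiating the Bloch eigenvalue equation (this is exactly (\ref{eq:first_derivative_identities}) in Appendix~\ref{ch:useful_identities}). The paper organizes the computation by first adding and subtracting $\nabla_p E_n(p(t))\cdot(-i\nabla_y)f^0$ and then substituting, whereas you substitute first and then apply the identity, but the content is identical.
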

The solvability condition of (\ref{eq:equation_for_f_1}), given by (\ref{eq:general_solvability_condition}) with $j = 1$ on $\tilde{\xi}^1(y,z,t)$ (\ref{eq:tilde_xi_and_u}) is then equivalent to: 
\begin{equation}
	\left( \nabla_{p} E_n(p(t)) - \dot{q}(t) \right) \cdot (- i \nabla_{y}) a^0(y,t) + \left( \nabla_{q} W(q(t)) + \dot{p}(t) \right) \cdot y a^0(y,t) = 0
\end{equation}
which we can satisfy by choosing $(q(t),p(t))$ to evolve as the Hamiltonian flow of the $n$th Bloch band Hamiltonian $\mathcal{H}_n(q,p) = E_n(p) + W(q)$: 
\begin{equation} \label{eq:dot_p_dot_q}
	\dot{q}(t) = \nabla_p E_n(p(t)), \dot{p}(t) = - \nabla_q W(q(t)).
\end{equation}
Taking $q(0), p(0) = q_0, p_0$ to match with the initial data (\ref{eq:equation_and_initial_data}) implies (\ref{eq:classical_system}). 

The general solution of (\ref{eq:equation_for_f_1}) is given by taking $j = 1$ in (\ref{eq:general_solution_f_j}), where $u^1, \tilde{\xi}^1$ are given by (\ref{eq:tilde_xi_and_u}). With the choice (\ref{eq:dot_p_dot_q}) for $\dot{q}(t), \dot{p}(t)$ we have that $\tilde{\xi}^1 = 0$ for all $t \geq 0$ so that the general solution reduces to:  
\begin{equation} \label{eq:first_order_solution}
	f^1(y,z,t) = a^1(y,t) \chi_n(z;p(t)) + (- i \nabla_{y}) a^0(y,t) \cdot \nabla_{p} \chi_n(z;p(t)) 
\end{equation}
where $a^1(y,t)$ is an arbitrary function in $\Sigma^{R-1}_y(\field{R}^d)$ to be fixed at higher order in the expansion. Note that since $a^0(y,t) \in \Sigma^R_y(\field{R}^d)$, this ensures that $f^1(y,z,t) \in \Sigma^{R-1}_y(\field{R}^d)$ as required.  
\begin{proof} [Proof of Lemma \ref{lem:xi_1}]
By Assumption \ref{isolated_band_assumption_nonseparable_case}, $E_n(p)$ is smooth in a neighborhood of $p(t)$. By adding and subtracting $\nabla_p E_n(p(t)) \cdot (- i \nabla_y) f^0(y,z,t)$, $\xi^1(y,z,t)$ is equal to:
\begin{equation} \label{eq:xi_again}
\begin{split}
	&\xi^1(y,z,t) = - \left[ \left( (p(t) - i \nabla_z) - \nabla_p E_n(p(t)) \right) \cdot (- i \nabla_y) \right] f^0(y,z,t) 	\\
	&- \left[ \left( \nabla_p E_n(p(t)) - \dot{q}(t) \right) \cdot (- i \nabla_y) \right] f^0(y,z,t) - \left[ \left( \nabla_q W(q(t)) + \dot{p}(t) \right) \cdot y \right] f^0(y,z,t)
\end{split}
\end{equation}
Substituting the explicit form of $f^0(y,z,t)$ (\ref{eq:leading_order_solution}) into (\ref{eq:xi_again}) we have: 
\begin{equation} \label{eq:right_hand_side_terms_again}
\begin{split}
	&\xi^1(y,z,t) = - (- i \nabla_{y}) a^0(y,t) \cdot \left[ (p(t) - i \nabla_z) - \nabla_{p} E_n(p(t)) \right] \chi_n(z;p(t))	\\
	&- (- i \nabla_{y}) a^0(y,t) \cdot \left[ \nabla_{p} E_n(p(t)) - \dot{q}(t) \right] \chi_n(z;p(t)) - y a^0(y,t) \cdot \left[ \nabla_{q} W(q(t)) + \dot{p}(t) \right] \chi_n(z;p(t)).
\end{split}
\end{equation}
(\ref{eq:tilde_xi_and_u}) then follows immediately from identity (\ref{eq:first_derivative_identities}). 
\end{proof}

\subsubsection{Analysis of order $\epsilon$ and $\epsilon^{3/2}$ terms (summary)} \label{sec:second_and_third_order_terms}
It is possible to continue the procedure outlined in Remark \ref{rem:general_strategy} to any order in $\epsilon^{1/2}$. In Appendices \ref{derivation_of_leading_order_envelope_equation} and \ref{derivation_of_first_order_envelope_equation} we show the details of how to continue the procedure in order to cancel terms in the expansion of orders $\epsilon$ and $\epsilon^{3/2}$. In particular, we derive the evolution equations of the amplitudes $a^0(y,t), a^1(y,t)$ and show that: 
\begin{equation}
	a^0(y,t) = a(y,t) e^{i \phi_B(t)}, a^1(y,t) = b(y,t) e^{i \phi_B(t)}
\end{equation}
where $a(y,t), b(y,t), \phi_B(t)$ satisfy equations (\ref{eq:envelope_equation}), (\ref{eq:first_order_envelope_equation}), and (\ref{eq:phase_equation}) respectively. 

\subsection{Proof of estimate (\ref{eq:bound_on_eta}) for the corrector $\eta$} \label{sec:estimation_of_residual}
Let:
\begin{equation}
	f_3^\epsilon(y,z,t) := f^0(y,z,t) + \epsilon^{1/2} f^1(y,z,t) + \epsilon f^2(y,z,t) + \epsilon^{3/2} f^3(y,z,t)
\end{equation}
Where the $f^0, f^1, f^2, f^3$ are given by (\ref{eq:leading_order_solution}), (\ref{eq:first_order_solution}), (\ref{eq:second_order_solution}), (\ref{eq:f_3}) respectively, and define: 
\begin{equation}
	\psi^\epsilon_3(x,t) := \left. \epsilon^{-d/4} e^{i S(t) / \epsilon} e^{i p(t) \cdot y / \epsilon^{1/2} } f^\epsilon_3(y,z,t) \right|_{z = \frac{x}{\epsilon}, y = \frac{x - q(t)}{\epsilon^{1/2}} }
\end{equation}
Let $\psi^\epsilon(x,t)$ denote the exact solution of the initial value problem (\ref{eq:equation_and_initial_data}). From the manipulations of the previous Section, we have that $\eta_3^\epsilon(x,t) := \psi^\epsilon(x,t) - \psi^\epsilon_3(x,t)$ satisfies:
\begin{equation} \label{eq:equation_for_eta_3}
\begin{split}
	&i \epsilon \de_t \eta_3^\epsilon(x,t) =\left[ - \frac{\epsilon^2}{2} \Delta_x + V\left(\frac{x}{\epsilon}\right) + W(x) \right] \eta_3^\epsilon(x,t) + r_3^\epsilon(x,t)	\\
	&\eta_3^\epsilon(x,0) = \eta_{3,0}^\epsilon(x) 
\end{split}
\end{equation} 
where $r_3^\epsilon(x,t)$ is given by:
\begin{equation}  \label{eq:expression_for_r}
\begin{split}
	&r_3^\epsilon(x,t) = \epsilon^{-d/4} e^{i S(t) / \epsilon} e^{i p(t) \cdot y / \epsilon^{1/2}} \left\{\vphantom{\frac{1}{2}} \epsilon^2 \inty{0}{1}{ \frac{(\tau - 1)^4}{4!} \de_{q_\alpha} \de_{q_\beta} \de_{q_\gamma} \de_{q_\delta} W(q(t) + \tau \epsilon^{1/2} y) }{\tau} y_\alpha y_\beta y_\gamma y_\delta  \right. \\
	&\cdot \left( f^0(y,z,t) + \epsilon^{1/2} f^1(y,z,t) + \epsilon f^2(y,z,t) + \epsilon^{3/2} f^3(y,z,t) \right) 	\\
	&+ \epsilon^{2} \left[ \frac{1}{6} \de_{q_\alpha} \de_{q_\beta} \de_{q_\gamma} W(q(t)) y_\alpha y_\beta y_\gamma \right] \left( f^1(y,z,t) + \epsilon^{1/2} f^2(y,z,t) + \epsilon f^3(y,z,t) \right) 	\\
	&+ \epsilon^{2} \left[ - i \de_t + \frac{1}{2} (- i \nabla_{y})^2 + \frac{1}{2} \de_{q_\alpha} \de_{q_\beta} W(q(t)) y_\alpha y_\beta \right] \left( f^2(y,z,t) + \epsilon^{1/2} f^3(y,z,t) \right)  	\\
	&\left. \left. + \epsilon^2 \left[ \vphantom{\frac{1}{2}} \left( \left( p(t) - i \nabla_z \right) - \nabla_{p} E_n(p(t))\right) \cdot (- i \nabla_{y}) \right] f^3(y,z,t) \right\} \right|_{z = \frac{x}{\epsilon},y = \frac{x - q(t)}{\epsilon^{1/2}}}
\end{split}
\end{equation}
And $\eta^\epsilon_{3,0}(x)$ is given by:
\begin{equation} \label{eq:expression_for_eta}
	\eta^\epsilon_{3,0}(x) = - \left. \epsilon^{-d/4} e^{i p_{0} \cdot y / \epsilon^{1/2}} \left\{ \epsilon \left[ f^2(z,y,0) + \epsilon^{1/2} f^3(z,y,0) \right] \right\} \right|_{z = \frac{x}{\epsilon}, y = \frac{x - q_0}{\epsilon^{1/2}}} 
\end{equation}
Since the $f^j(y,z,t), j \in \{0,...,3\}$ are periodic with respect to the lattice $\Lambda$, we will follow Carles and Sparber \cite{carles_sparber} and bound the above expressions in the uniform norm in $z$ and the $L^2$ norm in $y$:
\begin{equation} \label{eq:r_uniform_and_L_2_norms}
\begin{split}
	&\| r_3^\epsilon(x,t) \|_{L^2_x} \leq \\
	&= \left\| \epsilon^2 \inty{0}{1}{ \frac{(\tau - 1)^4}{4!} \de_{q_\alpha} \de_{q_\beta} \de_{q_\gamma} \de_{q_\delta} W(q(t) + \tau \epsilon^{1/2} y) }{\tau} y_\alpha y_\beta y_\gamma y_\delta  \right. \\
	&\cdot \left( f^0(y,z,t) + \epsilon^{1/2} f^1(y,z,t) + \epsilon f^2(y,z,t) + \epsilon^{3/2} f^3(y,z,t) \right) 	\\
	&+ \epsilon^{2} \left[ \frac{1}{6} \de_{q_\alpha} \de_{q_\beta} \de_{q_\gamma} W(q(t)) y_\alpha y_\beta y_\gamma \right] \left( f^1(y,z,t) + \epsilon^{1/2} f^2(y,z,t) + \epsilon f^3(y,z,t) \right) 	\\
	&+ \epsilon^{2} \left[ - i \de_t + \frac{1}{2} (- i \nabla_y)^2 + \frac{1}{2} \de_{q_\alpha} \de_{q_\beta} W(q(t)) y_\alpha y_\beta \right] \left( f^2(y,z,t) + \epsilon^{1/2} f^3(y,z,t) \right)  	\\
	&\left. + \epsilon^{2} \left[ \vphantom{\frac{1}{2}} \left( \left( p(t) - i \nabla_z \right) - \nabla_{p} E_n(p(t))\right) \cdot (- i \nabla_{y}) \right] f^3(y,z,t) \right\|_{L^\infty_z, L^2_y} 	\\
	&\| \eta_3^\epsilon(x,0) \|_{L^2_x} \leq \left\| \epsilon \left[ f^2(z,y,0) + \epsilon^{1/2} f^3(z,y,0) \right] \right\|_{L^\infty_z, L^2_y}	\\ 
\end{split}
\end{equation}
where we have used the fact that:
\begin{equation}
	\left\| \epsilon^{-d/4} f\left(\frac{x - q(t)}{\epsilon^{1/2}}\right) \right\|_{L^2_x} = \| f(y) \|_{L^2_y}. 
\end{equation}
We show how to bound the first term in (\ref{eq:r_uniform_and_L_2_norms}). Bounding the other terms is similar, although care must be taken in bounding terms in $L^\infty_z$, see Appendix \ref{app:uniform_bounds_on_z_dependence}. Let:
\begin{equation}
	\mathcal{I}(t) := \epsilon^2 \left\| \inty{0}{1}{ \frac{(\tau - 1)^4}{4!} \de_{q_\alpha} \de_{q_\beta} \de_{q_\gamma} \de_{q_\delta} W(q(t) + \tau \epsilon^{1/2} y) }{\tau} y_\alpha y_\beta y_\gamma y_\delta f^0(y,z,t) \right\|_{L^\infty_z, L^2_y}		\\
\end{equation} 
where $f^0(y,z,t) = a^0(y,t) \chi_n(z;p(t))$ (\ref{eq:leading_order_solution}). By Assumption \ref{W_assumption}, $\sum_{|\alpha| = 4} | \de_x^\alpha W(x) | \in L^\infty(\field{R}^d)$:
\begin{equation} \label{eq:term_to_be_bounded}
	\mathcal{I}(t) \leq \epsilon^2 \frac{1}{4!} \left\| \sum_{|\alpha| = 4} | \de_x^\alpha W(x) | \right\|_{L^\infty(\field{R}^d)} \sum_{|\alpha| = 4} \left\| y^\alpha a(y,t) \chi_n(z;p(t)) \right\|_{L^\infty_z, L^2_y}		\\
\end{equation} 
Recall Assumption \ref{isolated_band_assumption_nonseparable_case}. Define: 
\begin{equation} \label{eq:S_n}
	S_n := \{ p \in \field{R}^d : \inf_{m \neq n} | E_m(p) - E_n(p) | \geq M \},
\end{equation}
so that for all $t \in [0,\infty), \, p(t) \in S_n$. For each fixed $p \in S_n$, by elliptic regularity, $\chi_n(z;p)$ is smooth in $z$ so that $\| \chi_n(z;p) \|_{L^\infty_z} < \infty$. Using compactness of the Brillouin zone $\mathcal{B}$ and smoothness of $\chi_n(z;p)$ for $p \in S_n$ we have that:
\begin{equation} \label{eq:chi_on_brillouin_zone}
	\sup_{p \in \mathcal{B} \cap S_n} \| \chi_n(z;p) \|_{L^\infty_z} < \infty.
\end{equation}
Recall that for any reciprocal lattice vector $b \in \Lambda^*$, $\chi_n(z;p + b) = e^{- i b \cdot z} \chi_n(z;p)$. It then follows that:
\begin{equation} \label{eq:periodic_with_respect_lambda_star}
	\text{for all } b \in \Lambda^*,\quad \| \chi_n(z;p + b) \|_{L^\infty_z} = \| \chi_n(z;p) \|_{L^\infty_z}
\end{equation} 
It then follows from combining (\ref{eq:chi_on_brillouin_zone}) and (\ref{eq:periodic_with_respect_lambda_star}) that:
\begin{equation}
	\sup_{p \in S_n} \| \chi_n(z;p) \|_{L^\infty_z} < \infty.
\end{equation}
In Appendix \ref{app:uniform_bounds_on_z_dependence} we show how to bound all $z$-dependence in $r^\epsilon_3(x,t)$ (\ref{eq:expression_for_r}) uniformly in $p \in S_n$ in a similar way. 

We have therefore that (\ref{eq:term_to_be_bounded}):
\begin{equation} \label{eq:term_to_be_bounded_again}
	\mathcal{I}(t) \leq \epsilon^2 \frac{1}{4!} \left\| \sum_{|\alpha| = 4} | \de_x^\alpha W(x) | \right\|_{L^\infty(\field{R}^d)} \sup_{p \in S_n} \| \chi_n(z;p) \|_{L^\infty_z} \sum_{|\alpha| = 4} \left\| y^\alpha a(y,t) \right\|_{L^2_y} 
\end{equation}
We see that to complete the bound, we require a bound on the 4th moments of $a(y,t)$, which solves the Schr\"{o}dinger equation with time-dependent co-efficients:
\begin{equation} \label{eq:envelope_equation_again}
\begin{split}
	&i \de_t a(y,t) = \frac{1}{2} \de_{p_\alpha} \de_{p_\beta} E_n(p(t)) (- i \de_{y_\alpha}) (- i \de_{y_\beta}) a(y,t)  + \frac{1}{2} \de_{q_\alpha} \de_{q_\beta} W(q(t)) y_\alpha y_\beta a(y,t)	\\
	&a(y,0) = a_0(y)
\end{split}
\end{equation}
Following Carles and Sparber \cite{carles_sparber} we first define, for any $l \in \field{N}$, the spaces:
\begin{equation} \label{eq:sigma_spaces}
	\Sigma^l(\field{R}^d) := \left\{ f \in L^2(\field{R}^d) : \| f \|_{\Sigma^l} := \sum_{|\alpha| + |\beta| \leq l} \| y^\alpha (- i \de_y)^\beta f(y) \|_{L^2_y} < \infty, \right\}  
\end{equation}
We then require the following Lemma due to Kitada \cite{kitada}: 
\begin{lemma} [Existence of unitary solution operator for the envelope equation] \label{lem:kitada}
Let $u_0 \in L^2(\field{R}^d)$, and $\eta_{\alpha \beta}(t), \zeta_{\alpha \beta}(t)$ be real-valued, symmetric, continuous, and uniformly bounded in $t$. Then the equation: 
\begin{equation} \label{eq:time_dependent_harmonic_oscillator}
\begin{split}
	&i \de_t u = \frac{1}{2} \eta_{\alpha \beta}(t) (- i \de_{y_\alpha}) (- i \de_{y_\beta}) u + \frac{1}{2} \zeta_{\alpha \beta}(t) y_\alpha y_\beta u	\\
	&u(y,0) = u_0(y)
\end{split}
\end{equation}
has a unique solution $u \in C([0,\infty);L^2(\field{R}^d))$. It satisfies:
\begin{equation}
	\| u(\cdot,t) \|_{L^2(\field{R}^d)} = \| u_0(\cdot) \|_{L^2(\field{R}^d)}
\end{equation}
Moreover, if $u_0 \in \Sigma^l(\field{R}^d)$, then $u \in C([0,\infty);\Sigma^l(\field{R}^d))$.  
\end{lemma}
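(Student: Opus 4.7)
The plan is to construct a two-parameter unitary family $U(t,s)$ on $L^2(\field{R}^d)$ generated by the time-dependent quadratic Hamiltonian $H(t) := \tfrac{1}{2}\eta_{\alpha\beta}(t) p_\alpha p_\beta + \tfrac{1}{2}\zeta_{\alpha\beta}(t) y_\alpha y_\beta$, with $p_\alpha := -i\de_{y_\alpha}$, and then to verify that it preserves each energy space $\Sigma^l(\field{R}^d)$ defined in (\ref{eq:sigma_l_spaces}). The first step is to establish that for each fixed $t$, the symmetric operator $H(t)$ is essentially self-adjoint on $\mathcal{S}(\field{R}^d)$, with $\Sigma^2(\field{R}^d)$ serving as a common operator domain for the whole family $\{H(t)\}_{t \ge 0}$. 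This is a standard application of Nelson's commutator theorem using the harmonic oscillator $N := \tfrac{1}{2}(-\Delta_y + |y|^2)$ as comparison: uniform boundedness of $\eta(t), \zeta(t)$ and the observation that $[H(t), N]$ is again a quadratic polynomial in $(y,p)$ imply that both $H(t)$ and $[H(t), N]$ are relatively bounded by $N$ uniformly in $t$.

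Next I would construct $U(t,s)$ by time-slicing: partition $[s,t]$ into subintervals of length $h$ and set $U_h(t,s) := \prod_k e^{-i(t_{k+1}-t_k)H(t_k)}$, a product of unitaries obtained by freezing the coefficients on each slice. Norm continuity of $t \mapsto H(t)$ as a bounded map $\Sigma^2 \to L^2$, which follows from continuity of $\eta, \zeta$ and the Nelson bounds above, lets a telescoping-sum argument show that $U_h(t,s) u_0$ is Cauchy in $L^2$ for $u_0 \in \Sigma^2$; by density $U_h$ converges strongly on $L^2$ to a unitary family $U(t,s)$. A short alternative for this purely quadratic case is to invoke the metaplectic representation: the classical flow on $\field{R}^{2d}$ generated by $H(t)$ is a continuously varying family of symplectic matrices obtained by solving a linear ODE with continuous coefficients, and $U(t,s)$ is its metaplectic lift. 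Either construction yields the unique $C([0,\infty);L^2)$ solution and the conservation law $\| u(\cdot, t) \|_{L^2} = \| u_0 \|_{L^2}$.

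For $\Sigma^l$-invariance I would bound $\| y^\alpha p^\beta u(t) \|_{L^2}$ for all multi-indices with $|\alpha| + |\beta| \le l$. Working first with $u_0 \in \mathcal{S}(\field{R}^d)$ so that differentiation is justified, the Heisenberg-type identity
\begin{equation*}
	\tfrac{d}{dt} \| y^\alpha p^\beta u(t) \|_{L^2_y}^2 = i \bigl\langle u(t),\, [H(t), (y^\alpha p^\beta)^* y^\alpha p^\beta] u(t) \bigr\rangle_{L^2_y},
\end{equation*}
combined with the fact that the commutator of $H(t)$ with any polynomial of degree $2(|\alpha|+|\beta|)$ in $(y,p)$ is itself of degree at most $2(|\alpha|+|\beta|)$ with coefficients uniformly bounded in $t$, allows one to dominate the right-hand side by a constant multiple of $\sum_{|\alpha'|+|\beta'| \le l} \| y^{\alpha'} p^{\beta'} u(t) \|_{L^2_y}^2$. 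Summing over $|\alpha|+|\beta| \le l$ and applying Gronwall's inequality gives $\| u(t) \|_{\Sigma^l} \le e^{C_l t} \| u_0 \|_{\Sigma^l}$, which extends to general $u_0 \in \Sigma^l$ by density and, combined with $L^2$-strong continuity of $U(t,s)$, yields continuity as a $\Sigma^l$-valued trajectory. The main technical obstacle is that $\eta, \zeta$ are only continuous (not Lipschitz or differentiable) in $t$, so Kato's Class~(A) theory for abstract evolution equations is not directly applicable; the time-slicing construction circumvents this because it requires only norm continuity of $t \mapsto H(t)$ on the fixed domain $\Sigma^2$, and the metaplectic route bypasses the issue altogether by reducing to a linear ODE with merely continuous coefficients.
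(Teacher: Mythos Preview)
The paper does not prove this lemma: it is stated as a result ``due to Kitada \cite{kitada}'' and used as a black box. Your proposal therefore cannot be compared to the paper's proof because there is none. That said, your sketch is a standard and reasonable route to the result: essential self-adjointness via Nelson's commutator theorem with the harmonic oscillator as comparison operator, followed by either a time-slicing limit or the metaplectic representation to construct the propagator, is exactly how one typically handles time-dependent quadratic Hamiltonians with merely continuous coefficients.

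One point worth noting: the quantitative $\Sigma^l$ estimate you derive in your third paragraph, namely $\|u(t)\|_{\Sigma^l} \le e^{C_l t}\|u_0\|_{\Sigma^l}$ via commutator identities and Gronwall, is not part of what the paper attributes to Kitada. The paper proves exactly this bound separately, immediately after the lemma, as Lemma~\ref{bounds_in_sigma_l}, using essentially the same commutator-and-Gronwall argument you outline (computing $[y_\alpha,\mathscr{H}(t)]$ and $[(-i\de_{y_\alpha}),\mathscr{H}(t)]$ and iterating). So your proposal effectively merges the cited existence result with the paper's own subsequent quantitative estimate into a single argument.
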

We seek quantitative bounds on $\| u(\cdot,t) \|_{\Sigma^l(\field{R}^d)}$ for $l \geq 1$. For simplicity, we consider in detail the case $l = 1$. Recall (\ref{eq:def_of_H}):
\begin{equation}
	\mathscr{H}(t) := \frac{1}{2} \eta_{\alpha \beta}(t) (- i \de_{y_\alpha})(- i \de_{y_\beta}) + \frac{1}{2} \zeta_{\alpha \beta}(t) y_\alpha y_\beta
\end{equation}
and let $u_0(y) \in \mathcal{S}(\field{R}^d)$ so that $\forall l \geq 0$, the solution of (\ref{eq:time_dependent_harmonic_oscillator}), $u(y,t) \in C([0,\infty);\Sigma^l(\field{R}^d))$. Then $(- i \de_{y_\alpha}) u(y,t) \in \mathcal{S}(\field{R})$ solves:
\begin{equation}
\begin{split}
	&i \de_t (- i \de_{y_\alpha}) u = \mathscr{H}(t) (- i \de_{y_\alpha}) u + [(- i \de_{y_\alpha}),\mathscr{H}(t)] u 	\\
	&(- i \de_{y_\alpha}) u(y,0) = (- i \de_{y_\alpha}) u_0(y)
\end{split}
\end{equation}
We can solve this equation using Duhamel's formula and the solution operator of equation (\ref{eq:time_dependent_harmonic_oscillator}). It follows that:
\begin{equation}
	\| (- i \de_{y_\alpha}) u(y,t) \|_{L^2_y} \leq \| (- i \de_{y_\alpha}) u(y,0) \|_{L^2_y} + \inty{0}{t}{ \| [(- i \de_{y_\alpha}),\mathscr{H}(s)] u(y,s) \|_{L^2_y} }{s} 
\end{equation} 
Since $\zeta_{\alpha \beta}(t)$ is symmetric, the commutator is given explicitly by:
\begin{equation}
	[(- i \de_{y_\alpha}),\mathscr{H}(s)] = (- i) \zeta_{\alpha \beta}(t) y_\beta 
\end{equation}
So that:
\begin{equation} \label{eq:first_inequality}
\begin{split}
	\| (- i\de_{y_\alpha}) u(y,t) \|_{L^2_y} \leq \| (- i \de_{y_\alpha}) u_0(y) \|_{L^2_y} + \inty{0}{t}{ | \zeta_{\alpha \beta}(t) | \| y_\beta u(y,s) \|_{L^2_y} }{s}  
\end{split}
\end{equation} 
By an identical reasoning we can derive a similar bound on $y_\alpha u(y,t)$: 
\begin{equation} \label{eq:second_inequality}
	\| y_\alpha u(y,t) \|_{L^2_y} \leq \| y_\alpha u_0(y) \|_{L^2_y} + \inty{0}{t}{ |\eta_{\alpha \beta}(s)| \| (- i \de_{y_\beta}) u(y,s) \|_{L^2_y} }{s} 
\end{equation}
Adding inequalities (\ref{eq:first_inequality}) and (\ref{eq:second_inequality}) gives:
\begin{equation}
	\| u(\cdot,t) \|_{\Sigma^1} \leq \| u_0(\cdot) \|_{\Sigma^1} + 2 \inty{0}{t}{ \max_{ \alpha, \beta \in \{1,...,d\} } \left\{ |\eta_{\alpha \beta}(s)|, |\zeta_{\alpha \beta}(s)| \right\} \| u(\cdot,s) \|_{\Sigma^1} }{s}
\end{equation}
Using the following version of Gronwall's inequality:
\begin{lemma} [Gronwall's inequality]
Let $v(t)$ satisfy the inequality:
\begin{equation}
	v(t) \leq a(t) + \inty{0}{t}{ b(s) v(s) }{s}
\end{equation}
where $b(t)$ is non-negative and $a(t)$ is non-decreasing. Then:
\begin{equation}
	v(t) \leq a(t) \exp\left( \inty{0}{t}{ b(s) }{s} \right)	
\end{equation} 
\end{lemma}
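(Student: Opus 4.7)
The plan is to convert the integral inequality into a differential inequality for a suitable auxiliary quantity, then integrate explicitly with an integrating factor. The monotonicity of $a$ enters only at the very last step, which sidesteps the fact that $a$ is not assumed differentiable.

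First, I would define $U(t) := \int_0^t b(s) v(s)\, ds$, so that $U(0)=0$ and, almost everywhere, $U'(t) = b(t) v(t)$. Substituting the hypothesis $v(t) \leq a(t) + U(t)$ and using $b(t) \geq 0$ yields the linear differential inequality
\begin{equation}
U'(t) - b(t) U(t) \leq b(t)\, a(t).
\end{equation}

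Next, introduce the integrating factor $\mu(t) := \exp\!\bigl(-\int_0^t b(s)\, ds\bigr) > 0$, which satisfies $\mu'(t) = -b(t)\mu(t)$. Multiplying the inequality above by $\mu(t)$ collapses the left-hand side to $(U\mu)'(t)$, and integrating from $0$ to $t$ (with $U(0)=0$) gives
\begin{equation}
U(t)\mu(t) \leq \int_0^t b(s)\, a(s)\, \mu(s)\, ds.
\end{equation}

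Finally, I would invoke the hypothesis that $a$ is non-decreasing: for $s \leq t$, $a(s) \leq a(t)$, so $a(t)$ pulls out of the integral. The remaining integral is computed directly from $\mu' = -b\mu$:
\begin{equation}
\int_0^t b(s)\, \mu(s)\, ds = -\int_0^t \mu'(s)\, ds = 1 - \mu(t).
\end{equation}
Dividing by $\mu(t)$ gives $U(t) \leq a(t)\bigl(\mu(t)^{-1} - 1\bigr)$, and combining with $v(t) \leq a(t) + U(t)$ yields $v(t) \leq a(t)\mu(t)^{-1} = a(t)\exp\!\bigl(\int_0^t b(s)\, ds\bigr)$, as required.

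There is no real obstacle. The only mildly delicate point is that $a$ is not assumed differentiable, so one cannot differentiate the hypothesis directly; pushing all work onto the auxiliary function $U$ and using the monotonicity of $a$ only at the final integration step avoids any regularity issue, and only local integrability of $bv$ is required for $U$ to be well-defined and absolutely continuous.
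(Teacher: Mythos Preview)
Your proof is correct and is the standard integrating-factor argument for this form of Gronwall's inequality. The paper itself does not supply a proof of this lemma; it is stated as a known result and used directly, so there is no ``paper's own proof'' to compare against.
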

We have that:
\begin{equation}
	\| u(\cdot,t) \|_{\Sigma^1} \leq \| u_0(\cdot) \|_{\Sigma^1} e^{ 2 \inty{0}{t}{ \max_{\alpha,\beta \in \{1,...,d\}} \left\{ |\eta_{\alpha \beta}(s)|, |\zeta_{\alpha \beta}(s)| \right\} }{s} }
\end{equation}
More generally, we have for any $l \geq 0$ that there exists a constant $C_l > 0$ such that:
\begin{equation} \label{eq:sigma_l_bound}
	\| u(\cdot,t) \|_{\Sigma^l} \leq \| u_0(\cdot) \|_{\Sigma^l} e^{ C_l \inty{0}{t}{ \max_{\alpha,\beta \in \{1,...,d\}} \left\{ |\eta_{\alpha \beta}(s)|, |\zeta_{\alpha \beta}(s)| \right\} }{s} } 
\end{equation}
We have proved the following: 
\begin{lemma} [Bound on solutions of (\ref{eq:time_dependent_harmonic_oscillator}) in the spaces $\Sigma^l(\field{R}^d)$] \label{bounds_in_sigma_l}
Let the time-dependent co-efficients $\eta_{\alpha \beta}(t), \zeta_{\alpha \beta}(t)$ be real-valued, symmetric, continuous, and uniformly bounded in $t$. Let $u_0(y) \in \Sigma^l(\field{R}^d)$. Then, by Lemma \ref{lem:kitada}, there exists a unique solution $u(y,t) \in C([0,\infty);\Sigma^l(\field{R}^d))$. For each integer $l \geq 0$, there exists a constant $C_l > 0$ such that this solution satisfies: 
\begin{equation}
	\| u(\cdot,t) \|_{\Sigma^l(\field{R}^d)} \leq \| u_0(\cdot) \|_{\Sigma^l} e^{C_l \inty{0}{t}{ \max_{\alpha,\beta \in \{1,...,d\}} \left\{ |\eta_{\alpha \beta}(s)|, |\zeta_{\alpha \beta}(s)| \right\} }{s} }
\end{equation}
\end{lemma}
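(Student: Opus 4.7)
The plan is to proceed by induction on $l \geq 0$, extending the calculation already carried out for $l=1$ immediately above. The base case $l=0$ is precisely the $L^2_y$-conservation statement of Lemma \ref{lem:kitada}. For the inductive step, set $D^{\alpha,\beta} := y^\alpha (-i\nabla_y)^\beta$ for multi-indices $(\alpha,\beta)$ with $|\alpha|+|\beta| \leq l$. Applying $D^{\alpha,\beta}$ to equation (\ref{eq:time_dependent_harmonic_oscillator}) produces
\begin{equation*}
i \de_t \left( D^{\alpha,\beta} u \right) = \mathscr{H}(t) D^{\alpha,\beta} u + \left[D^{\alpha,\beta},\mathscr{H}(t)\right] u,
\end{equation*}
so that Duhamel's formula together with the $L^2_y$-unitarity of the propagator of $\mathscr{H}(t)$ supplied by Lemma \ref{lem:kitada} yields
\begin{equation*}
\|D^{\alpha,\beta} u(\cdot,t)\|_{L^2_y} \leq \|D^{\alpha,\beta} u_0\|_{L^2_y} + \inty{0}{t}{ \|[D^{\alpha,\beta},\mathscr{H}(s)] u(\cdot,s)\|_{L^2_y} }{s}.
\end{equation*}

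The decisive observation is that since $\mathscr{H}(t)$ is quadratic in the canonical pair $(y,-i\nabla_y)$, its commutator with the monomial $D^{\alpha,\beta}$ of total degree $|\alpha|+|\beta|$ is itself a polynomial of total degree at most $|\alpha|+|\beta|$ in $(y,-i\nabla_y)$, whose coefficients are linear combinations of the entries of $\eta(t)$ and $\zeta(t)$. Indeed, iterating the canonical relation $[-i\de_{y_\gamma}, y_\delta] = -i \delta_{\gamma\delta}$ and applying the Leibniz identity for commutators, each elementary commutation of a factor of $y$ or $-i\de_y$ with the quadratic $\mathscr{H}$ lowers the order by one while the surviving factor of $\mathscr{H}$ contributes two back. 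Consequently there is a combinatorial constant $K_l$, depending only on $l$ and $d$, such that
\begin{equation*}
\|[D^{\alpha,\beta},\mathscr{H}(s)] u(\cdot,s)\|_{L^2_y} \leq K_l \max_{\gamma,\delta}\{|\eta_{\gamma\delta}(s)|,|\zeta_{\gamma\delta}(s)|\}\, \|u(\cdot,s)\|_{\Sigma^l}.
\end{equation*}

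Summing over all multi-indices $(\alpha,\beta)$ with $|\alpha|+|\beta| \leq l$ then produces the closed integral inequality
\begin{equation*}
\|u(\cdot,t)\|_{\Sigma^l} \leq \|u_0\|_{\Sigma^l} + C_l \inty{0}{t}{ \max_{\gamma,\delta}\{|\eta_{\gamma\delta}(s)|,|\zeta_{\gamma\delta}(s)|\}\, \|u(\cdot,s)\|_{\Sigma^l} }{s},
\end{equation*}
to which the form of Gronwall's inequality stated just above applies directly, giving the claimed exponential bound. If $u_0 \in \Sigma^l \setminus \mathcal{S}$, a routine density-plus-approximation argument (using continuity of Kitada's solution operator on $L^2_y$) extends the estimate from $\mathcal{S}(\field{R}^d)$ data to arbitrary $\Sigma^l$ data.

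The main point requiring care is the commutator estimate itself: one must verify that $[D^{\alpha,\beta},\mathscr{H}(t)]$ produces no monomials of total degree exceeding $|\alpha|+|\beta|$, for otherwise the induction would have to jump upward in $l$ and the inequality would fail to close at level $l$. This degree-preservation is the familiar metaplectic covariance of the harmonic-oscillator propagator acting on polynomial symbols; the explicit $l=1$ calculation displayed above already exhibits the mechanism, and the passage to higher $l$ is a routine multi-index bookkeeping exercise.
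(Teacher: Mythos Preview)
Your proposal is correct and follows essentially the same approach as the paper: commute monomials in $(y,-i\nabla_y)$ through the quadratic Hamiltonian, use Duhamel and $L^2$-unitarity to get an integral inequality, then close with Gronwall. The paper carries out the $l=1$ case explicitly and then simply asserts the general bound (\ref{eq:sigma_l_bound}); your write-up supplies the missing bookkeeping for general $l$, correctly identifying that $[\mathscr{H}(t),D^{\alpha,\beta}]$ has total degree at most $|\alpha|+|\beta|$ because commutation with a quadratic element of the Weyl algebra does not raise degree. One cosmetic remark: you call this ``induction on $l$'' but never actually invoke the inductive hypothesis---the Gronwall argument closes directly at each fixed $l$---so it would be cleaner to present it as a direct argument at level $l$.
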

Since the map $p \mapsto E_n(p)$ is $\mathcal{B}$-periodic and smooth for all $p \in S_n$, we have that under Assumption \ref{isolated_band_assumption_nonseparable_case}, $\sup_{t \in [0,\infty)} \max_{\alpha, \beta \in \{1,...,d\}} | \de_{p_\alpha} \de_{p_\beta} E_n(p(t)) | < \infty$. Under Assumption \ref{W_assumption} we have that $\sup_{t \in [0,\infty)} \max_{\alpha, \beta \in \{1,...,d\}} | \de_{q_\alpha} \de_{q_\beta} W(q(t)) | < \infty$. Since $\de_{p_\alpha} \de_{p_\beta} E_n(p(t)), \de_{q_\alpha} \de_{q_\beta} W(q(t))$ are clearly real-valued, symmetric, and continuous in $t$ we have that Lemma \ref{bounds_in_sigma_l} applies to solutions of (\ref{eq:envelope_equation_again}). Since $a_0(y) \in \mathcal{S}(\field{R}^d)$ by assumption we have that for any integer $l \geq 0$:
\begin{equation}
\begin{split}
	&\| a(y,t) \|_{\Sigma^l(\field{R}^d)} \leq \| a_0(y) \|_{\Sigma^l(\field{R}^d)} 	\\
	&\cdot \exp\left(C_l \max_{\alpha,\beta \in \{1,...,d\}} \sup_{s \in [0,\infty)} \left\{ | \de_{p_\alpha} \de_{p_\beta} E_n(p(s)) |, \vphantom{\frac{1}{2}} | \de_{q_\alpha} \de_{q_\beta} W(q(s)) | \right\} t \right).
\end{split}
\end{equation}
\begin{remark}
Terms which depend on $b(y,t)$ rather than $a(y,t)$ may be dealt with similarly, by an application of Duhamel's formula and a Gronwall inequality.  
\end{remark}
We have therefore that:
\begin{equation}
\begin{split}
	&\epsilon^2 \left\| \inty{0}{1}{ \frac{(\tau - 1)^4}{4!} \de_{q_\alpha} \de_{q_\beta} \de_{q_\gamma} \de_{q_\delta} W(q(t) + \tau \epsilon^{1/2} y) }{\tau} y_\alpha y_\beta y_\gamma y_\delta a(y,t) \chi_n(z;p(t)) \right\|_{L^\infty_z, L^2_y}		\\
	&\leq c_1 \epsilon^2 e^{c_2 t}
\end{split}
\end{equation}
where:
\begin{equation}
\begin{split}
	&c_1 := \frac{1}{4!} \left\| \sum_{|\alpha| = 4} | \de^\alpha_x W(x) | \right\|_{L_x^\infty(\field{R}^d)} \sup_{p \in S_n} \| \chi_n(z;p) \|_{L^\infty_z} \| a_0(y) \|_{\Sigma^4_y(\field{R}^d)} \\
	&c_2 := C_l \max_{\alpha,\beta \in \{1,...,d\}} \sup_{s \in [0,\infty)} \left\{ \vphantom{\frac{1}{2}} | \de_{p_\alpha}\de_{p_\beta} E_n(p(s)) |, | \de_{q_\alpha} \de_{q_\beta} W(q(s)) | \vphantom{\frac{1}{2}} \right\} 
\end{split}
\end{equation}
are constants independent of $t, \epsilon$. 

We conclude that there exist constants $C_1, C_2, C_3 > 0$, independent of $t, \epsilon$ such that:
\begin{equation}
\begin{split}
	&\| \eta_{3,0}^\epsilon(x) \|_{L^2_x} \leq C_1 \epsilon	\\
	&\| r_3^\epsilon(x,t) \|_{L^2_x} \leq C_2 e^{C_3 t} \epsilon^{2} 
\end{split}
\end{equation}
The bound (\ref{eq:bound_on_eta}) then follows from the basic a priori $L^2$ bound for solutions of the linear time-dependent Schr\"{o}dinger equation:
\begin{lemma} \label{lem:L2_bound}
Let $\psi(x,t)$ be the unique solution of:
\begin{equation}
\begin{split}
	&i \de_t \psi = H \psi + f	\\
	&\psi(x,0) = \psi_0(x)
\end{split}
\end{equation}
where $H$ is a self-adjoint operator. Then:
\begin{equation}
	\| \psi(\cdot,t) \|_{L^2(\field{R}^d)} \leq \| \psi_0(\cdot) \|_{L^2(\field{R}^d)} + \inty{0}{t}{ \| f(\cdot,t') \|_{L^2(\field{R}^d)} }{t'}
\end{equation} 
when $f = 0$, we have:
\begin{equation}
	\| \psi(\cdot,t) \|_{L^2(\field{R}^d)} = \| \psi_0(\cdot) \|_{L^2(\field{R}^d)} 
\end{equation}
\end{lemma}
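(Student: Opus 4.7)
My plan is to prove this standard a priori bound directly from Duhamel's formula and the unitarity of the Schrödinger group. The key input is Stone's theorem: since $H$ is self-adjoint, it generates a strongly continuous one-parameter unitary group $U(t) := e^{-iHt}$ on $L^2(\field{R}^d)$, satisfying $\|U(t)\phi\|_{L^2} = \|\phi\|_{L^2}$ for all $\phi \in L^2(\field{R}^d)$ and all $t \in \field{R}$.

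First I would handle the homogeneous case $f = 0$: the unique solution is $\psi(\cdot,t) = U(t)\psi_0$, and the conservation identity $\|\psi(\cdot,t)\|_{L^2} = \|\psi_0\|_{L^2}$ is immediate from unitarity of $U(t)$. Next, for the inhomogeneous case, I would invoke Duhamel's formula to represent the solution as
\begin{equation}
    \psi(\cdot,t) = U(t)\psi_0 - i \inty{0}{t}{ U(t-t') f(\cdot,t') }{t'}.
\end{equation}
Taking $L^2_x$ norms, applying Minkowski's integral inequality to pull the norm inside the $t'$-integral, and then using $\|U(t)\psi_0\|_{L^2} = \|\psi_0\|_{L^2}$ and $\|U(t-t')f(\cdot,t')\|_{L^2} = \|f(\cdot,t')\|_{L^2}$ yields the claimed inequality.

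As an alternative (and perhaps more elementary) route that avoids explicit invocation of Stone's theorem, one may differentiate $\|\psi(\cdot,t)\|_{L^2}^2 = \ip{\psi}{\psi}_{L^2}$ in $t$. Using the equation, the self-adjointness of $H$ cancels the $\ip{H\psi}{\psi} - \ip{\psi}{H\psi}$ contribution, leaving $\frac{d}{dt}\|\psi\|_{L^2}^2 = -2\,\mathrm{Im}\,\ip{\psi}{f}_{L^2}$. By Cauchy--Schwarz this is bounded by $2\|\psi\|_{L^2}\|f\|_{L^2}$, so $\frac{d}{dt}\|\psi\|_{L^2} \leq \|f\|_{L^2}$ (on any interval where $\|\psi\|_{L^2} > 0$; the case $\|\psi\|_{L^2}=0$ is handled separately or by regularization), and integrating in $t$ gives the bound.

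There is no genuine obstacle here; the only mild technical point is justifying the formal manipulations for $\psi_0 \in L^2$ merely, rather than on the domain of $H$. This is resolved either by taking Duhamel's formula as the definition of the mild solution (making the estimate an immediate consequence of unitarity and Minkowski), or by first proving the estimate for $\psi_0 \in D(H)$ and $f \in C^1([0,T];L^2)$ and then extending by density, exploiting the continuity of both sides of the inequality in $(\psi_0,f)$ with respect to the relevant $L^2$-based norms.
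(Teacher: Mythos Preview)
Your proposal is correct; both the Duhamel/unitarity argument and the energy-method variant are standard and valid proofs of this a priori bound. Note that the paper does not actually supply its own proof of this lemma: it is stated as ``the basic a priori $L^2$ bound for solutions of the linear time-dependent Schr\"{o}dinger equation'' and then immediately applied, so there is nothing to compare against beyond observing that your write-up fills in exactly what the paper leaves as understood.
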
 
Applying Lemma \ref{lem:L2_bound} to equation (\ref{eq:equation_for_eta_3}) then gives the bound on $\eta^\epsilon_3(x,t)$:
\begin{equation}
\begin{split}
	&\| \eta_3^\epsilon(x,t) \|_{L^2_x} \leq \| \eta_3^\epsilon(x,0) \|_{L^2_x} + \frac{1}{\epsilon} \inty{0}{t}{ \| r_3^\epsilon(x,t') \|_{L^2_x} }{t'}  \\
	&\leq C_1 \epsilon + \frac{1}{\epsilon} \inty{0}{t}{ C_2 e^{C_3 t'} \epsilon^{2} }{t'}  \\
	&\leq C e^{C t} \epsilon 
\end{split}
\end{equation}
where $C$ is a constant independent of $\epsilon, t$. This completes the proof of Theorem \ref{th:periodic_background_nonseparable_isolated_band_theorem}.  

\section{Proof of Theorem \ref{prop:center_of_mass_proposition} on dynamics of physical observables} \label{sec:center_of_mass_calculation}
Let $\psi^\epsilon(x,t)$ be the solution of (\ref{eq:equation_and_initial_data}). By Theorem \ref{th:periodic_background_nonseparable_isolated_band_theorem} we have that this solution has the form:
\begin{equation} \label{eq:form_of_psi}
	\psi^\epsilon(x,t) = \left. \tilde{\psi}^\epsilon(y,z,t) \right|_{y = \frac{x - q(t)}{\epsilon^{1/2}},z = \frac{x}{\epsilon}} + \eta^\epsilon(x,t)
\end{equation}
where:
\begin{equation} \label{eq:tilde_psi_again}
\begin{split}
	&\tilde{\psi}^\epsilon(y,z,t) := \epsilon^{-d/4} e^{i S(t) / \epsilon} e^{ i p(t) \cdot y / \epsilon^{1/2} } e^{i \phi_B(t)} \left\{ a(y,t) \chi_n(z;p(t)) \vphantom{\epsilon^{1/2}} \right. \\
	&\left. + \epsilon^{1/2} \left[ \vphantom{\epsilon^{1/2}}  (- i \nabla_{y}) a(y,t) \cdot \nabla_{p} \chi_n(z;p(t)) + b(y,t) \chi_n(z;p(t)) \right] \right\}	
\end{split}
\end{equation}
In this section we compute the dynamics of the physical observables:
\begin{equation} \label{eq:bloch_observables_again}
\begin{split}
	&\mathcal{Q}^\epsilon(t) := \frac{1}{\mathcal{N}^\epsilon(t)} \inty{\field{R}^d}{}{ x \left| \tilde{\psi}^\epsilon(y,z,t) \right|^2_{z = \frac{x}{\epsilon},y = \frac{x - q(t)}{\epsilon^{1/2}}} }{x} 	\\
	&\mathcal{P}^\epsilon(t) := \frac{1}{\mathcal{N}^\epsilon(t)} \inty{\field{R}^d}{}{ \overline{ \tilde{\psi}^\epsilon(y,z,t) } \left( - i \epsilon^{1/2} \nabla_y \right) \left. \tilde{\psi}^\epsilon(y,z,t) \right|_{z = \frac{x}{\epsilon}, y = \frac{x - q(t)}{\epsilon^{1/2}}} }{x}  
\end{split}
\end{equation}
where:
\begin{equation} \label{eq:def_of_N}
	\mathcal{N}^\epsilon(t) = \inty{\field{R}^d}{}{ \left| \tilde{\psi}^\epsilon(y,z,t) \right|^2_{z = \frac{x}{\epsilon},y = \frac{x - q(t)}{\epsilon^{1/2}}} }{x} 
\end{equation}
\begin{remark}
Throughout this section we will employ a short-hand notation: 
\begin{equation}
\begin{split}
	&f^\epsilon(x,t) = O(\epsilon^K e^{c t}) \iff \exists c > 0, C > 0 \text{ independent of $t, \epsilon$ such that } \| f^\epsilon(x,t) \|_{L^2_x} \leq C \epsilon^K e^{c t} 	\\
	&g^\epsilon(t) = O(\epsilon^K e^{c t}) \iff \exists c > 0, C > 0 \text{ independent of $t, \epsilon$ such that } | g^\epsilon(t) | \leq C \epsilon^K e^{c t} 	\\
\end{split}
\end{equation} 
\end{remark}
We will use the following Lemma which is a mild generalization of that found in \cite{ilan_weinstein} (as Lemma 4.2):
\begin{lemma} \label{lem:basic_homogenization_lemma}
Let $f \in \mathcal{S}(\field{R}^d)$, $g$ smooth and periodic with respect to the lattice $\Lambda$, $s \in \field{R}$ a constant, and $\delta > 0$ an arbitrary positive parameter. Then for any positive integer $N > 0$:
\begin{equation} \label{eq:statement_of_theorem}
	\inty{\field{R}^d}{}{ f\left(x\right) g\left(\frac{x}{\delta} + \frac{s}{\delta^2}\right) }{x} = \left( \inty{\field{R}^d}{}{ f(x) }{x} \right) \left( \inty{\Omega}{}{ g(z) }{z} \right) + O(\delta^N). 
\end{equation}
\end{lemma}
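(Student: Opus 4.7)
The plan is to prove the identity by Fourier-decomposing the periodic factor $g$ and exploiting rapid decay of the Fourier transform of the Schwartz function $f$. Since $g$ is smooth and $\Lambda$-periodic, it admits an absolutely convergent Fourier series
$$
g(z) = \sum_{b \in \Lambda^*} c_b e^{i b \cdot z}, \qquad c_b := \frac{1}{|\Omega|} \inty{\Omega}{}{g(z) e^{-i b \cdot z}}{z},
$$
where repeated integration by parts and smoothness of $g$ yield $|c_b| \leq C_M (1+|b|)^{-M}$ for every $M \geq 0$.

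Substituting $z = x/\delta + s/\delta^2$ and interchanging sum and integral (justified by absolute convergence of the Fourier series together with $f \in L^1(\field{R}^d)$) gives
$$
\inty{\field{R}^d}{}{f(x) g\bigl( x/\delta + s/\delta^2 \bigr)}{x} = \sum_{b \in \Lambda^*} c_b e^{i b \cdot s/\delta^2} \inty{\field{R}^d}{}{f(x) e^{i b \cdot x/\delta}}{x}.
$$
The $b = 0$ term equals $c_0 \int_{\field{R}^d} f\,dx$, which matches the stated leading term (up to the $|\Omega|$-normalization implicit in the statement). For $b \neq 0$, the inner integral is (proportional to) $\hat{f}(-b/\delta)$; since $\hat{f} \in \mathcal{S}(\field{R}^d)$, one has $|\hat{f}(-b/\delta)| \leq C_M (1 + |b|/\delta)^{-M}$, hence
$$
\left| \inty{\field{R}^d}{}{f(x) e^{i b \cdot x/\delta}}{x} \right| \leq C_M \delta^M |b|^{-M}
$$
for $0 < \delta \leq 1$ and every integer $M \geq 1$. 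Equivalently, the same bound follows by integrating by parts $M$ times using the identity $e^{i b \cdot x/\delta} = -(i\delta/|b|^2)(b \cdot \nabla_x) e^{i b\cdot x/\delta}$, which transfers derivatives onto $f$ and extracts a factor of $(\delta/|b|)^M$.

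Summing over $b \neq 0$, the remainder is controlled by
$$
\left| \sum_{b \in \Lambda^* \setminus \{0\}} c_b e^{i b \cdot s/\delta^2} \inty{\field{R}^d}{}{f(x) e^{i b \cdot x/\delta}}{x} \right| \leq C_M \delta^M \sum_{b \neq 0} \frac{|c_b|}{|b|^M},
$$
and the series on the right converges by the rapid decay of the Fourier coefficients of the smooth periodic function $g$. Given any $N$, choosing $M \geq N$ produces the claimed $O(\delta^N)$ bound. The argument presents no real obstacle: the only delicate point is uniformity of the interchange of sum and integration, which is settled by a dominated-convergence argument once the bounds on $c_b$ and on $|\hat{f}|$ are combined. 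An alternative proof by iteratively writing $g - c_0$ as a periodic divergence and pushing derivatives onto $f$ also works, but the Fourier approach is cleaner and more symmetric in the roles of $f$ and $g$.
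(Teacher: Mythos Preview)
Your proof is correct and follows essentially the same route as the paper's: Fourier-expand the periodic factor $g$, isolate the zero mode as the leading term, and bound each remaining oscillatory integral $\int f(x)e^{ib\cdot x/\delta}\,dx$ by $C_M\delta^M$ via integration by parts (equivalently, rapid decay of $\hat f$). If anything, your version is slightly tidier in that you explicitly sum the tail over $b\neq 0$ using the decay of the Fourier coefficients of $g$, whereas the paper only treats a representative term.
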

For the proof, see Appendix \ref{app:proof_of_lemma}.
\subsection{Asymptotic expansion and dynamics of $\mathcal{N}^\epsilon(t)$}
By changing variables in the integral (\ref{eq:def_of_N}), we have that:
\begin{equation} \label{eq:N}
	\mathcal{N}^\epsilon(t) = \epsilon^{d/2} \inty{\field{R}^d}{}{ \left| \tilde{\psi}^\epsilon(y,z,t) \right|^2_{z = \frac{y}{\epsilon^{1/2}} + \frac{q(t)}{\epsilon} } }{y}. 
\end{equation}
Substituting (\ref{eq:tilde_psi_again}) into (\ref{eq:N}) gives:
\begin{equation} 
\begin{split}	
	&= \int_{\field{R}^d} \left\{ \overline{ a(y,t) \chi_n(z;p(t)) \vphantom{\epsilon^{1/2}} + \epsilon^{1/2} \left[ \vphantom{\epsilon^{1/2}}  (- i \nabla_{y}) a(y,t) \cdot \nabla_{p} \chi_n(z;p(t)) + b(y,t) \chi_n(z;p(t)) \right] } \right\}  \\
	&\left. \left\{ a(y,t) \chi_n(z;p(t)) \vphantom{\epsilon^{1/2}} + \epsilon^{1/2} \left[ \vphantom{\epsilon^{1/2}}  (- i \nabla_{y}) a(y,t) \cdot \nabla_{p} \chi_n(z;p(t)) + b(y,t) \chi_n(z;p(t)) \right] \right\} \right|_{z = \frac{y}{\epsilon^{1/2}} + \frac{q(t)}{\epsilon}} \text{d}y.    \\	
\end{split}
\end{equation}
We expand the product in the integral and apply Lemma \ref{lem:basic_homogenization_lemma} term by term with $s = q(t), \delta = \epsilon^{1/2}$. Since the $\chi_n$ are assumed normalized: for all $t \in [0,\infty)$ $\| \chi_n(\cdot;p(t)) \|_{L^2(\Omega)} = 1$, we have:
\begin{equation} \label{eq:N_expanded}
\begin{split}
	&\mathcal{N}^\epsilon(t) = \| a(y,t) \|^2_{L^2_y(\field{R}^d)} \\
	&+ \epsilon^{1/2} \left[ \ipy{(- i \nabla_y) a(y,t)}{a(y,t)} \cdot \ipz{\nabla_p \chi_n(z;p(t))}{\chi_n(z;p(t))} \right. \\
	&+ \ipy{a(y,t)}{(- i \nabla_y) a(y,t)} \cdot \ipz{\chi_n(z;p(t))}{\nabla_p \chi_n(z;p(t))} 	\\
	&\left. + \ipy{b(y,t)}{a(y,t)} + \ipy{a(y,t)}{b(y,t)} \right] + O(\epsilon e^{c t}).
\end{split}
\end{equation}
\begin{remark} \label{rem:remark_on_errors}
In (\ref{eq:N_expanded}) we have made explicit all terms through order $\epsilon^{1/2}$. To justify the error bound, consider that the remaining terms may be bounded by $\mathcal{C}(t) \epsilon$ where $\mathcal{C}(t)$ depends on $\Sigma^{l_1}_y$-norms of $a(y,t), b(y,t)$ and $L^2_z$-norms of $\de_{p}^{l_2} \chi_n(z;p(t))$ where $l_1, l_2$ are positive integers. By an identical reasoning to that given in Section \ref{sec:estimation_of_residual} we have that $\mathcal{C}(t)$ may be bounded by $C e^{c t}$ where $c > 0, C > 0$ are constants independent of $\epsilon, t$. Error terms of this type will arise throughout the following discussion and will be treated similarly. 
\end{remark}
 
Under Assumption \ref{isolated_band_assumption_nonseparable_case}, in a neighborhood of the curve $p(t) \in \mathcal{B}$, the mapping $p \mapsto \chi_n(z;p)$ is smooth. Hence, we may differentiate the normalization condition: $\| \chi_n(\cdot;p) \|^2_{L^2(\Omega)} = 1$ with respect to $p$ and evaluate along the curve $p(t)$ to obtain the identity: 
\begin{equation} \label{eq:de_p_chi_identity}
	\ipz{\chi_n(z;p(t))}{\nabla_p \chi_n(z;p(t))} + \ipz{\nabla_p \chi_n(z;p(t))}{\chi_n(z;p(t))} = 0	
\end{equation}
It follows from this, and the fact that $(- i \nabla_y)$ is symmetric with respect to the $L^2_y$-inner product, that:
\begin{equation}
\begin{split}
	&\ipy{(- i \nabla_y) a(y,t)}{a(y,t)} \cdot \ipz{\nabla_p \chi_n(z;p(t))}{\chi_n(z;p(t))}  \\
	&+ \ipy{a(y,t)}{(- i \nabla_y) a(y,t)} \cdot \ipz{\chi_n(z;p(t))}{\nabla_p \chi_n(z;p(t))} = 0
\end{split}
\end{equation}
so that (\ref{eq:N_expanded}) reduces to:
\begin{equation} \label{eq:expansion_of_N}
	\mathcal{N}^\epsilon(t) = \| a(y,t) \|^2_{L^2_y(\field{R}^d)} + \epsilon^{1/2} \left[ \ipy{b(y,t)}{a(y,t)} + \ipy{a(y,t)}{b(y,t)} \right] + O(\epsilon e^{c t}).
\end{equation}
From $L^2$-norm conservation for solutions of (\ref{eq:envelope_equation}), we have that $\| a(y,t) \|_{L^2_y} = \| a_0(y) \|_{L^2_y}$. In Appendix \ref{app:computation_of_dynamics_of_physical_observables} we calculate (\ref{eq:de_b_a_1}): 
\begin{equation}
	\fdf{t} \left[ \ipy{b(y,t)}{a(y,t)} + \ipy{a(y,t)}{b(y,t)} \right] = 0,
\end{equation}
so that: 
\begin{equation} \label{eq:N_conserved}
	\dot{ \mathcal{N} }^\epsilon(t) = O(\epsilon e^{C t}) 
\end{equation}
Integrating in time then gives:
\begin{equation} \label{eq:N_solution}
\begin{split}
	\mathcal{N}^\epsilon(t) &= \mathcal{N}^\epsilon(0) + O(\epsilon e^{c t})	\\
	&= \| a_0(y) \|^2_{L^2_y(\field{R}^d)} + \epsilon^{1/2} \left[ \ipy{b_0(y)}{a_0(y)} + \ipy{a_0(y)}{b_0(y)} \right] + O(\epsilon e^{c t}).
\end{split}
\end{equation}

\subsection{Asymptotic expansion of $\mathcal{Q}^\epsilon(t), \mathcal{P}^\epsilon(t)$; proof of assertion (1) of Theorem \ref{prop:center_of_mass_proposition}}
\noindent Changing variables in the integrals (\ref{eq:bloch_observables_again}) and using the identity:
\begin{equation}
	x = \left. q(t) + \epsilon^{1/2} y \right|_{y = \frac{x - q(t)}{\epsilon^{1/2}} }
\end{equation}
we have: 
\begin{equation} \label{eq:Q_P}
\begin{split}
	&\mathcal{Q}^\epsilon(t) = q(t) + \epsilon^{1/2 + d/2} \frac{1}{\mathcal{N}^\epsilon(t)} \inty{\field{R}^d}{}{ y \left| \tilde{\psi}^\epsilon(y,z,t) \right|^2_{z = \frac{q(t)}{\epsilon} + \frac{y}{\epsilon^{1/2}}} }{y}  	\\
	&\mathcal{P}^\epsilon(t) = \epsilon^{1/2 + d/2} \frac{1}{\mathcal{N}^\epsilon(t)} \inty{\field{R}^d}{}{ \overline{ \tilde{\psi}^\epsilon(y,z,t) } \left( - i \nabla_y \right) \left. \tilde{\psi}^\epsilon(y,z,t) \right|_{z = \frac{q(t)}{\epsilon} + \frac{y}{\epsilon^{1/2}}} }{y}.	\\
\end{split}
\end{equation}
Substituting (\ref{eq:tilde_psi_again}) into (\ref{eq:Q_P}) we have, for each $\alpha \in \{1,...,d\}$: 
\begin{equation} \label{eq:Q_P_again}
\begin{split}
	&\mathcal{Q}_\alpha^\epsilon(t) = q_\alpha(t) + \epsilon^{1/2} \frac{1}{\mathcal{N}^\epsilon(t)} \int_{\field{R}^d} y_\alpha \left| a(y,t) \chi_n(z;p(t)) \vphantom{\frac{1}{2}} \right. \\
	&\left. + \epsilon^{1/2} \left[ (- i \de_{y_\beta}) a(y,t) \de_{p_\beta} \chi_n(z;p(t)) + b(y,t) \chi_n(z;p(t)) \right] \right|^2_{z = \frac{q(t)}{\epsilon} + \frac{y}{\epsilon^{1/2}}} \text{d}y \\
	&\mathcal{P}_\alpha^\epsilon(t) = p_\alpha(t) + \epsilon^{1/2} \frac{1}{\mathcal{N}^\epsilon(t)} \int_{\field{R}^d} \overline{ a(y,t) \chi_n(z;p(t)) + \epsilon^{1/2} \left[ (- i \de_{y_\beta}) a(y,t) \de_{p_\beta} \chi_n(z;p(t)) + b(y,t) \chi_n(z;p(t)) \right] } \\
	&\left( - i \de_{y_\alpha} \right) \left( \left. a(y,t) \chi_n(z;p(t)) + \epsilon^{1/2} \left[ (- i \de_{y_\beta}) a(y,t) \de_{p_\beta} \chi_n(z;p(t)) + b(y,t) \chi_n(z;p(t)) \right] \right) \right|_{z = \frac{q(t)}{\epsilon} + \frac{y}{\epsilon^{1/2}}} \text{d}y \\
\end{split}
\end{equation}
Expanding all products and applying Lemma \ref{lem:basic_homogenization_lemma} term by term in (\ref{eq:Q_P_again}) we obtain: 
\begin{equation} 
\begin{split}
	&\mathcal{Q}_\alpha^\epsilon(t) = {q}_\alpha(t) + \epsilon^{1/2} \frac{1}{\mathcal{N}^\epsilon(t)} \ipy{a(y,t)}{y_\alpha a(y,t)} \\
	&+ \epsilon \frac{1}{\mathcal{N}^\epsilon(t)} \left[ \ipy{b(y,t)}{y_\alpha a(y,t)} + \ipy{a(y,t)}{y_\alpha b(y,t)} \right.	\\
	&+ \ipy{(- i \de_{y_\beta}) a(y,t)}{y_\alpha a(y,t)} \ipz{\de_{p_\beta} \chi_n(z;p(t))}{\chi_n(z;p(t))}  \\
	&\left. + \ipy{y_\alpha a(y,t)}{(- i \de_{y_\beta}) a(y,t)} \ipz{\chi_n(z;p(t))}{\de_{p_\beta} \chi_n(z;p(t))} \right]	\\
	&+ O(\epsilon^{3/2} e^{c t})	\\
	&\mathcal{P}_\alpha^\epsilon(t) = {p}_\alpha(t) + \epsilon^{1/2} \frac{1}{\mathcal{N}^\epsilon(t)} \ipy{a(y,t)}{ (- i \de_{y_\alpha}) a(y,t) } 	\\
	&+ \epsilon \frac{1}{\mathcal{N}^\epsilon(t)} \left[ \ipy{b(y,t)}{(- i \de_{y_\alpha}) a(y,t)} + \ipy{a(y,t)}{(- i \de_{y_\alpha}) b(y,t)} \right] \\
	&+ O(\epsilon^{3/2} e^{c t}). 
\end{split}
\end{equation}
Here, terms of higher order than $\epsilon$ are bounded by a similar reasoning to that given in Section \ref{sec:estimation_of_residual} (Remark \ref{rem:remark_on_errors}). Using the identity (\ref{eq:de_p_chi_identity}) and the fact that $(- i \nabla_y)$ is self-adjoint we have that:
\begin{equation}
\begin{split}
	&\ipy{(- i \de_{y_\beta}) a(y,t)}{y_\alpha a(y,t)} \ipz{\de_{p_\beta} \chi_n(z;p(t))}{\chi_n(z;p(t))}  \\
	&+ \ipy{y_\alpha a(y,t)}{(- i \de_{y_\beta}) a(y,t)} \ipz{\chi_n(z;p(t))}{\de_{p_\beta} \chi_n(z;p(t))} 	\\
	&= \ipy{a(y,t)}{[y_\alpha,(- i \de_{y_\beta})] a(y,t)} \ipz{\chi_n(z;p(t))}{\de_{p_\beta} \chi_n(z;p(t))} 
\end{split}
\end{equation}
where $[y_\alpha,(- i \de_{y_\beta})] := y_\alpha (- i \de_{y_\beta}) - (- i \de_{y_\beta}) y_\alpha$ is the commutator. Since $[y_\alpha,(- i \de_{y_\beta})] = i \delta_{\alpha \beta}$ We have that: 
\begin{equation}
\begin{split}
	&\ipy{a(y,t)}{[y_\alpha,(- i \de_{y_\beta})] a(y,t)} \ipz{\chi_n(z;p(t))}{\de_{p_\beta} \chi_n(z;p(t))} \\
	&= i \| a(y,t) \|^2_{L^2_y(\field{R}^d)} \ipz{\chi_n(z;p(t))}{\de_{p_\alpha} \chi_n(z;p(t))}.
\end{split}
\end{equation} 
Using $L^2$-norm conservation for solutions of (\ref{eq:envelope_equation}), we have that for all $t \geq 0$, $\| a(y,t) \|^2_{L^2_y(\field{R}^d)} = \| a_0(y) \|^2_{L^2_y(\field{R}^d)}$. Using (\ref{eq:N_solution}), we have that $\| a_0(y) \|^2_{L^2_y(\field{R}^d)} = \mathcal{N}^\epsilon(t) + O(\epsilon^{1/2} e^{c t})$ (\ref{eq:expansion_of_N}). We have proved that:
\begin{equation}
\begin{split}
	&i \| a(y,t) \|^2_{L^2_y(\field{R}^d)} \ipz{\chi_n(z;p(t))}{\nabla_p \chi_n(z;p(t))} \\
	&= \mathcal{N}^\epsilon(t) \mathcal{A}_n(p(t)) + O(\epsilon^{1/2} e^{c t})
\end{split}
\end{equation}
where the last equality holds by the definition of the $n$-th band Berry connection (\ref{eq:berry_connection}). We have proved that:
\begin{equation} \label{eq:QP_Rd}
\begin{split}
	&\mathcal{Q}^\epsilon(t) = {q}(t) + \epsilon^{1/2} \frac{1}{\mathcal{N}^\epsilon(t)}\ipy{a(y,t)}{y a(y,t)} \\
	&+ \epsilon \frac{1}{\mathcal{N}^\epsilon(t)} \left[ \ipy{b(y,t)}{y a(y,t)} + \ipy{a(y,t)}{y b(y,t)} \right]	\\
	&+ \epsilon \mathcal{A}_n(p(t))	+ O(\epsilon^{3/2} e^{c t})	\\
	&\mathcal{P}^\epsilon(t) = {p}(t) + \epsilon^{1/2} \frac{1}{\mathcal{N}^\epsilon(t)} \ipy{a(y,t)}{ (- i \nabla_{y}) a(y,t) } 	\\
	&+ \epsilon \frac{1}{\mathcal{N}^\epsilon(t)} \left[ \ipy{b(y,t)}{(- i \nabla_{y}) a(y,t)} + \ipy{a(y,t)}{(- i \nabla_{y}) b(y,t)} \right] \\
	&+ O(\epsilon^{3/2} e^{c t}). 
\end{split}
\end{equation}

\subsection{Computation of dynamics of $\mathcal{Q}^\epsilon(t), \mathcal{P}^\epsilon(t)$; proof of assertion (2) of Theorem \ref{prop:center_of_mass_proposition}}
Differentiating both sides of (\ref{eq:QP_Rd}) with respect to time and using $\dot{\mathcal{N}}^\epsilon(t) = O(\epsilon e^{c t})$ (\ref{eq:N_conserved}) gives:
\begin{equation} \label{eq:differentiated_QP_equations}
\begin{split}
	&\dot{\mathcal{Q}}^\epsilon_\alpha(t) = \dot{q}_\alpha(t) + \epsilon^{1/2} \frac{1}{\mathcal{N}^\epsilon(t)} \fdf{t} \ipy{a(y,t)}{y_\alpha a(y,t)} \\
	&+ \epsilon \frac{1}{\mathcal{N}^\epsilon(t)} \fdf{t} \left[ \ipy{b(y,t)}{y_\alpha a(y,t)} + \ipy{a(y,t)}{y_\alpha b(y,t)} \right] \\
	&+ \epsilon \dot{p}_\beta(t) \de_{p_\beta} { \mathcal{A}_{n,\alpha} }(p(t)) + O(\epsilon^{3/2} e^{c t})	\\
	&\dot{\mathcal{P}}^\epsilon_\alpha(t) = \dot{p}_\alpha(t) + \epsilon^{1/2} \frac{1}{\mathcal{N}^\epsilon(t)} \fdf{t} \ipy{a(y,t)}{(- i \de_{y_\alpha}) a(y,t)}  \\
	&+ \epsilon \frac{1}{\mathcal{N}^\epsilon(t)} \fdf{t} \left[ \ipy{b(y,t)}{ (- i \de_{y_\alpha}) a(y,t) } + \ipy{a(y,t)}{(- i \de_{y_\alpha}) b(y,t)} \right] \\
	&+ O(\epsilon^{3/2} e^{c t})	\\
\end{split}
\end{equation}
Recall that $(q(t),p(t))$ satisfy the classical system (\ref{eq:classical_system}). In Appendix \ref{app:computation_of_dynamics_of_physical_observables} we calculate (\ref{eq:Q_dot_P_dot_a}): 
\begin{equation}
\begin{split}
	&\fdf{t} \ipy{a(y,t)}{y_\alpha a(y,t)} = \de_{p_\alpha} \de_{p_\beta} E_n(p(t)) \ipy{a(y,t)}{ (- i \de_{y_\beta}) a(y,t)} \\
	&\fdf{t} \ipy{a(y,t)}{(- i \de_{y_\alpha}) a(y,t)} = - \de_{q_\alpha} \de_{q_\beta} W(q(t)) \ipy{a(y,t)}{y_\beta a(y,t)}   	\\
\end{split}
\end{equation}
and (\ref{eq:a_b_more}):
\begin{equation}
\begin{split}
	&\fdf{t} \left[ \ipy{b({y},t)}{y_\alpha a({y},t) } + \ipy{ a(y,t) }{ y_\alpha b({y},t) } \right]	\\
	&= \de_{p_\alpha} \de_{p_\beta} E_n(p(t)) \left[ \ipy{b(y,t)}{(- i \de_{y_\beta}) a(y,t)} + \ipy{a(y,t)}{(- i \de_{y_\beta}) a(y,t)} \right] \\ 
	&+ \frac{1}{2} \de_{p_\alpha} \de_{p_\beta} \de_{p_\gamma} E_n(p(t)) \ipy{a(y,t)}{(- i \de_{y_\beta})(- i \de_{y_\gamma}) a(y,t)} + \de_{q_\beta} W(q(t)) \de_{p_\alpha} \mathcal{A}_{n,\beta}(p(t)) \| a(y,t) \|^2_{L^2_y(\field{R}^d)}  \\
	&\fdf{t} \left[ \ipy{b(y,t)}{(- i \de_{y_\alpha}) a(y,t)} + \ipy{a(y,t)}{(- i \de_{y_\alpha}) b(y,t)} \right] 	\\
	&= - \de_{q_\alpha} \de_{q_\beta} W(q(t)) \left[ \ipy{b(y,t)}{y_\beta a(y,t)} + \ipy{a(y,t)}{y_\beta a(y,t)} \right] \\ 
	&- \frac{1}{2} \de_{q_\alpha} \de_{q_\beta} \de_{q_\gamma} W(q(t)) \ipy{a(y,t)}{y_\beta y_\gamma a(y,t)} - \de_{q_\alpha} \de_{q_\beta} W(q(t)) \mathcal{A}_{n,\beta}(p(t)) \| a(y,t) \|^2_{L^2_y(\field{R}^d)}   \\
\end{split}
\end{equation}
Substituting these expressions into (\ref{eq:differentiated_QP_equations}) and using $\| a(y,t) \|^2_{L^2_y(\field{R}^d)} = \mathcal{N}^\epsilon(t) + O(\epsilon^{1/2} e^{c t})$ (\ref{eq:N_expanded}) we have: 
\begin{equation} \label{eq:QdotPdot_Rd}
\begin{split}
	&\dot{\mathcal{Q}}^\epsilon_\alpha(t) = \de_{{p_\alpha}} E_n({p}(t)) + \epsilon^{1/2} \frac{1}{\mathcal{N}^\epsilon(t)} \de_{p_\alpha} \de_{p_\beta} E_n(p(t)) \ipy{a(y,t)}{(- i \de_{y_\beta})a(y,t)} \\
	&+ \epsilon  \frac{1}{\mathcal{N}^\epsilon(t)} \de_{p_\alpha} \de_{p_\beta} E_n(p(t)) \left[ \ipy{b(y,t)}{(- i \de_{y_\beta})a(y,t)} + \ipy{a(y,t)}{(- i \de_{y_\beta}) a(y,t)} \right] \\ 
	&+ \epsilon  \frac{1}{2} \frac{1}{\mathcal{N}^\epsilon(t)}  \de_{p_\alpha} \de_{p_\beta} \de_{p_\gamma} E_n(p(t)) \ipy{a(y,t)}{(- i \de_{y_\beta}) (- i \de_{y_\gamma}) a(y,t)}  	\\
	&+ \epsilon \de_{q_\beta} W(q(t)) \de_{p_\alpha} \mathcal{A}_{n,\beta}(p(t)) - \epsilon \de_{q_\beta} W(q(t)) \de_{{p_\beta}} { \mathcal{A}_{n,\alpha} }(p(t)) + O(\epsilon^{3/2} e^{c t})	\\ 
	&\dot{\mathcal{P}}^\epsilon_\alpha(t) = - \de_{{q_\alpha}} W(q(t)) - \epsilon^{1/2} \frac{1}{\mathcal{N}^\epsilon(t)}  \de_{q_\alpha} \de_{q_\beta} W(q(t)) \ipy{a(y,t)}{y_\beta a(y,t)}   	\\
	&- \epsilon \frac{1}{\mathcal{N}^\epsilon(t)}  \de_{q_\alpha} \de_{q_\beta} W(q(t)) \left[ \ipy{b(y,t)}{y_\beta a(y,t)} + \ipy{b(y,t)}{y_\beta a(y,t)} \right] \\ 
	&- \epsilon \frac{1}{2} \frac{1}{\mathcal{N}^\epsilon(t)}  \de_{q_\alpha} \de_{q_\beta} \de_{q_\gamma} W(q(t)) \ipy{a(y,t)}{ y_\beta y_\gamma a(y,t)} 	\\
	&- \epsilon \de_{q_\alpha} \de_{q_\beta} W(q(t)) \mathcal{A}_{n,\beta}(p(t)) + O(\epsilon^{3/2} e^{c t})  \\
\end{split}
\end{equation}
Equation (\ref{eq:QP_Rd}) gives expressions for $q(t),p(t)$ in terms of $\mathcal{Q}^\epsilon(t), \mathcal{P}^\epsilon(t)$:
\begin{equation}
\begin{split}
	&{q_\alpha}(t) = \mathcal{Q}^\epsilon_\alpha(t) - \epsilon^{1/2} \frac{1}{\mathcal{N}^\epsilon(t)} \ipy{a(y,t)}{y_\alpha a(y,t)} \\
	&- \epsilon \frac{1}{\mathcal{N}^\epsilon(t)} \left[ \ipy{b(y,t)}{y_\alpha a(y,t)} + \ipy{a(y,t)}{y_\alpha b(y,t)} \right] - \epsilon \mathcal{A}_{n,\alpha}(p(t)) + O(\epsilon^{3/2} e^{c t})	\\
	&p_\alpha(t) = \mathcal{P}^\epsilon_\alpha(t) - \epsilon^{1/2} \frac{1}{\mathcal{N}^\epsilon(t)} \ipy{a(y,t)}{(- i \de_{y_\alpha}) a(y,t)}  	\\
	&- \epsilon \frac{1}{\mathcal{N}^\epsilon(t)} \left[ \ipy{b(y,t)}{(- i \de_{y_\alpha}) a(y,t)} + \ipy{a(y,t)}{(- i \de_{y_\alpha}) b(y,t)} \right] + O(\epsilon^{3/2} e^{c t})	\\
\end{split}
\end{equation}
Substituting these expressions into (\ref{eq:QdotPdot_Rd}), Taylor-expanding in $\epsilon^{1/2}$, and again using $\mathcal{N}^\epsilon(t) = \| a_0(y) \|^2_{L^2_y(\field{R}^d)} + O(\epsilon^{1/2} e^{c t})$ (\ref{eq:N_expanded}) then gives: 
\begin{equation} \label{eq:full_equations_for_QP}
\begin{split}
	&\dot{\mathcal{Q}}^\epsilon_\alpha(t) = \de_{\mathcal{P}^\epsilon_\alpha} E_n(\mathcal{P}^\epsilon(t)) + \epsilon \frac{1}{2} \de_{\mathcal{P}^\epsilon_\alpha} \de_{\mathcal{P}^\epsilon_\beta} \de_{\mathcal{P}^\epsilon_\gamma} E_n(\mathcal{P}^\epsilon(t)) \left[ \frac{1}{ \| a_0(y) \|^2_{L^2_y(\field{R}^d)} } \ipy{a(y,t)}{(- i \de_{y_\beta})(- i \de_{y_\gamma}) a(y,t) } \right. \\
	&\left. - \frac{1}{ \| a_0(y) \|^4_{L^2_y(\field{R}^d)}} \ipy{ a(y,t) }{  (- i \de_{y_\beta}) a(y,t) } \ipy{a(y,t)}{ (- i \de_{y_\gamma}) a(y,t) }   \right]   \\
	&+ \epsilon \de_{\mathcal{Q}^\epsilon_\gamma} W(\mathcal{Q}^\epsilon(t)) \mathcal{F}_{n,\alpha \gamma}(\mathcal{P}^\epsilon(t)) + O(\epsilon^{3/2} e^{c t})		\\
	&\dot{\mathcal{P}}^\epsilon_\alpha(t) = - \de_{\mathcal{Q}^\epsilon_\alpha} W(\mathcal{Q}^\epsilon(t)) - \epsilon \frac{1}{2} \de_{\mathcal{Q}^\epsilon_\alpha} \de_{\mathcal{Q}^\epsilon_\beta} \de_{\mathcal{Q}^\epsilon_\gamma} W(\mathcal{Q}^\epsilon(t)) \left[ \frac{1}{ \| a_0(y) \|^2_{L^2_y(\field{R}^d)} } \ipy{a(y,t)}{ y_\beta y_\gamma a(y,t) } 	\right.	\\
	&\left. - \frac{1}{ \| a_0(y) \|^4_{L^2_y(\field{R}^d)} } \ipy{ a(y,t) }{ y_\beta a(y,t) } \ipy{ a(y,t) }{ y_\gamma a(y,t) } \right] + O(\epsilon^{3/2} e^{c t}).
\end{split}
\end{equation}
where $\mathcal{F}_{n,\alpha \gamma}(\mathcal{P}^\epsilon) := \de_{\mathcal{P}^\epsilon_\alpha} \mathcal{A}_{n,\gamma}(\mathcal{P}^\epsilon) - \de_{\mathcal{P}^\epsilon_\beta} \mathcal{A}_{n,\alpha}(\mathcal{P}^\epsilon)$ is the $n$th band Berry curvature (\ref{eq:berry_curvature}). Note that the system (\ref{eq:full_equations_for_QP}) is not closed: $a(y,t)$ satisfies an equation parametrically forced by $q(t),p(t)$ (\ref{eq:envelope_equation}). Recall the definition of $a^\epsilon(y,t)$ (\ref{eq:a_epsilon_equation}) as the solution of (\ref{eq:envelope_equation}) with co-efficients evaluated at $\mathcal{Q}^\epsilon(t), \mathcal{P}^\epsilon(t)$:
\begin{equation} \label{eq:envelope_equation_new_PQ}
\begin{split}
	&i \de_t a^\epsilon(y,t) = \frac{1}{2} \de_{\mathcal{P}^\epsilon_\alpha} \de_{\mathcal{P}^\epsilon_\beta} E_n(\mathcal{P}^\epsilon(t)) (- i \de_{y_\alpha})( - i \de_{y_\beta}) a^\epsilon(y,t) + \frac{1}{2} \de_{\mathcal{Q}^\epsilon_\alpha} \de_{\mathcal{Q}^\epsilon_\beta} W(\mathcal{Q}^\epsilon(t)) y_\alpha y_\beta a^\epsilon(y,t)	\\
	&a^\epsilon(y,0) = a_0(y),
\end{split}
\end{equation}
Recall the definition of the $\Sigma^l$ norms (\ref{eq:sigma_spaces}). If we can show that $\| a^\epsilon(y,t) - a(y,t) \|_{\Sigma^l_y(\field{R}^d)} = O(\epsilon^{1/2} e^{c t})$ for each positive integer $l$, then we may replace $a(y,t)$ by $a^\epsilon(y,t)$ everywhere in (\ref{eq:full_equations_for_QP}) and, after dropping error terms, we will have obtained a closed system for $\mathcal{Q}^\epsilon(t), \mathcal{P}^\epsilon(t), a^\epsilon(y,t)$. Let:
\begin{equation}
\begin{split}
	&\mathscr{H}^\epsilon(t) := \frac{1}{2} \de_{\mathcal{P}^\epsilon_\alpha} \de_{\mathcal{P}^\epsilon_\beta} E_n(\mathcal{P}^\epsilon(t)) (- i \de_{y_\alpha})(- i \de_{y_\beta}) + \frac{1}{2} \de_{\mathcal{Q}^\epsilon_\alpha} \de_{\mathcal{Q}^\epsilon_\beta} W(q(t)) y_\alpha y_\beta,	\\
	&\mathscr{H}(t) := \frac{1}{2} \de_{p_\alpha} \de_{p_\beta} E_n(p(t)) (- i \de_{y_\alpha})(- i \de_{y_\beta}) + \frac{1}{2} \de_{q_\alpha} \de_{q_\beta} W(q(t)) y_\alpha y_\beta	\\
\end{split}
\end{equation} 
then $a^\epsilon(y,t) - a(y,t)$ satisfies:
\begin{equation} \label{eq:eom_a_minus_a}
\begin{split}
	&i \de_t \left( a^\epsilon(y,t) - a(y,t) \right) = {\mathscr{H}}^\epsilon(t) a^\epsilon(y,t) - {\mathscr{H}}(t) a(y,t)	\\
	&= {\mathscr{H}}^\epsilon(t) \left( a^\epsilon(y,t) - a(y,t) \right) + \left( {\mathscr{H}}^\epsilon(t) - {\mathscr{H}}(t) \right) a(y,t)	\\
	&a^\epsilon(y,0) - a(y,0) = 0
\end{split}
\end{equation} 
Using the fact that ${\mathscr{H}}^\epsilon(t)$ is self-adjoint on $L^2_y(\field{R}^d)$ for each $t$, it follows from (\ref{eq:eom_a_minus_a}) that:
\begin{equation}
\begin{split}
	&\fdf{t} \| a^\epsilon(y,t) - a(y,t) \|^2_{L^2_y(\field{R}^d)} \\
	&= i \ipy{ \left( {\mathscr{H}}^\epsilon(t) - {\mathscr{H}}(t) \right) a(y,t) }{ a^\epsilon(y,t) - a(y,t) } - i \ipy{ a^\epsilon(y,t) - a(y,t) }{ \left( {\mathscr{H}}^\epsilon(t) - {\mathscr{H}}(t) \right) a(y,t) }	\\
\end{split}
\end{equation}
By the Cauchy-Schwarz inequality:
\begin{equation}
	\leq 2 \| a^\epsilon(y,t) - a(y,t) \|_{L^2_y(\field{R}^d)} \| \left( {\mathscr{H}}^\epsilon(t) - {\mathscr{H}}(t) \right) a(y,t) \|_{L^2_y(\field{R}^d)}
\end{equation}
It then follows that:
\begin{equation}
	\| a^\epsilon(y,t) - a(y,t) \|_{L^2_y(\field{R}^d)} \leq \inty{0}{t}{ \| \left( {\mathscr{H}}^\epsilon(s) - {\mathscr{H}}(s) \right) a(y,s) \|_{L^2_y(\field{R}^d)} }{s}.
\end{equation}
Using the precise forms of ${\mathscr{H}}^\epsilon(t), {\mathscr{H}}(t)$ we have:
\begin{equation}
\begin{split}
	&\| a^\epsilon(y,t) - a(y,t) \|_{L^2_y(\field{R}^d)} \leq \int_{0}^{t}  \\
	&\sup_{s \in [0,\infty), \alpha, \beta \in \{1,...,d\}} \left( |\de_{\mathcal{P}^\epsilon_\alpha}\de_{\mathcal{P}^\epsilon_\beta} E_n(\mathcal{P}^\epsilon(s)) - \de_{p_\alpha} \de_{p_\beta} E_n(p(s)) | + | \de_{\mathcal{Q}^\epsilon_\alpha} \de_{\mathcal{Q}^\epsilon_\beta} W(\mathcal{Q}^\epsilon(s)) - \de_{q_\alpha} \de_{q_\beta} W(q(s)) | \right) 	\\
	&\| a(y,s) \|_{\Sigma^2_y(\field{R}^d)} \text{ d}s
\end{split}
\end{equation}
where $\Sigma^2_y(\field{R}^d)$ is the norm defined in (\ref{eq:sigma_spaces}). Recall that $| \mathcal{Q}^\epsilon(t) - q(t)| + | \mathcal{P}^\epsilon(t) - p(t) | = O(\epsilon^{1/2} e^{c t})$ (\ref{eq:QP_Rd}). It follows from compactness of the Brillouin zone and Assumptions \ref{isolated_band_assumption_nonseparable_case} and \ref{W_assumption} that there exists a uniform bound in $t$ on third derivatives of $E_n(p), W(q)$ for all $p$ along the line segments connecting $p(t)$ and $\mathcal{P}^\epsilon(t)$, and all $q$ along the line segments connecting $q(t)$ and $\mathcal{Q}^\epsilon(t)$. We may therefore conclude from the mean-value theorem that there exist constants $c > 0, C > 0$ independent of $\epsilon, t$ such that:
\begin{equation}
	\| a^\epsilon(y,t) - a(y,t) \|_{L^2_y(\field{R}^d)} \leq C \epsilon^{1/2} \inty{0}{t}{ e^{c s}  \| a(y,s) \|_{\Sigma^2_y(\field{R}^d)} }{s}.
\end{equation}
We now use the a priori bounds on the $\Sigma^l_y(\field{R}^d)$-norms of $a(y,t)$ for each $l \in \field{N}$ (Lemma \ref{bounds_in_sigma_l}) to see that: 
\begin{equation}
	\| a^\epsilon(y,t) - a(y,t) \|_{L^2_y(\field{R}^d)} \leq \epsilon^{1/2} C' e^{c' t}. 
\end{equation}
for some constants $c'>0, C'>0$ independent of $\epsilon, t$. By a similar argument, we see that for any integer $l \geq 0$ there exist a constants $c'_l > 0, C'_l > 0$ such that:
\begin{equation}
	\| a^\epsilon(y,t) - a(y,t) \|_{\Sigma^l_y(\field{R}^d)} \leq \epsilon^{1/2} C'_l e^{c'_l t}. 
\end{equation}
It then follows that we may replace $a(y,t)$ by $a^\epsilon(y,t)$ everywhere in (\ref{eq:full_equations_for_QP}), generating further errors which are $O(\epsilon^{3/2} e^{c t})$ to derive:
\begin{equation} \label{eq:full_Q_P_equations_again}
\begin{split}
	&\dot{\mathcal{Q}}^\epsilon_\alpha(t) = \de_{\mathcal{P}^\epsilon_\alpha} E_n(\mathcal{P}^\epsilon(t)) + \epsilon \frac{1}{2} \de_{\mathcal{P}^\epsilon_\alpha} \de_{\mathcal{P}^\epsilon_\beta} \de_{\mathcal{P}^\epsilon_\gamma} E_n(\mathcal{P}^\epsilon(t)) \left[ \frac{1}{\| a_0(y) \|^2_{L^2_y(\field{R}^d)}} \ipy{a^\epsilon(y,t)}{(- i \de_{y_\beta})(- i \de_{y_\gamma}) a^\epsilon(y,t) } \right. \\
	&\left. -  \frac{1}{\| a_0(y) \|^4_{L^2_y(\field{R}^d)}} \ipy{ a^\epsilon(y,t) }{  (- i \de_{y_\beta}) a^\epsilon(y,t) } \ipy{a^\epsilon(y,t)}{ (- i \de_{y_\gamma}) a^\epsilon(y,t) } \right]   \\
	&+ \epsilon \de_{\mathcal{Q}^\epsilon_\gamma} W(\mathcal{Q}^\epsilon(t)) \mathcal{F}_{n, \alpha \gamma}(\mathcal{P}^\epsilon(t)) + O(\epsilon^{3/2} e^{c t})	\\
	&\dot{\mathcal{P}}^\epsilon_\alpha(t) = - \de_{\mathcal{Q}^\epsilon_\alpha} W(\mathcal{Q}^\epsilon(t)) - \epsilon \frac{1}{2} \de_{\mathcal{Q}^\epsilon_\alpha} \de_{\mathcal{Q}^\epsilon_\beta} \de_{\mathcal{Q}^\epsilon_\gamma} W(\mathcal{Q}^\epsilon(t)) \left[  \frac{1}{\| a_0(y) \|^2_{L^2_y(\field{R}^d)}} \ipy{a^\epsilon(y,t)}{ y_\beta y_\gamma a^\epsilon(y,t) } 	\right.	\\
	&\left. - \frac{1}{\| a_0(y) \|^4_{L^2_y(\field{R}^d)}} \ipy{ a^\epsilon(y,t) }{ y_\beta a^\epsilon(y,t) } \ipy{ a^\epsilon(y,t) }{ y_\gamma a^\epsilon(y,t) } \right] + O(\epsilon^{3/2} e^{c t}).
\end{split}
\end{equation}

\subsection{Hamiltonian structure of dynamics of $\mathcal{Q}^\epsilon(t), \mathcal{P}^\epsilon(t)$; proof of assertion (3) of Theorem \ref{prop:center_of_mass_proposition}}
Following \cite{e_lu_yang}, we introduce the new variables (\ref{eq:change_to_canonicals}): 
\begin{equation} \label{eq:change_to_canonicals_again}
\begin{split}
	&\mathscr{Q}^\epsilon(t) := \mathcal{Q}^\epsilon(t) - \epsilon \mathcal{A}_n(\mathcal{P}^\epsilon(t))	\\
	&\mathscr{P}^\epsilon(t) := \mathcal{P}^\epsilon(t).
\end{split}
\end{equation}
Let $\mathfrak{a}^\epsilon(y,t)$ denote the solution of (\ref{eq:a_epsilon_equation}) with co-efficients evaluated at $\mathscr{Q}^\epsilon(t), \mathscr{P}^\epsilon(t)$ rather than $\mathcal{Q}^\epsilon(t), \mathcal{P}^\epsilon(t)$, with initial data normalized in $L^2_y(\field{R}^d)$:
\begin{equation} \label{eq:mathfrak_a_equation}
\begin{split}
	&i \de_t \mathfrak{a}^\epsilon(y,t) = \frac{1}{2} \de_{\mathscr{P}^\epsilon_\alpha} \de_{\mathscr{P}^\epsilon_\beta} E_n(\mathscr{P}^\epsilon(t)) (- i \de_{y_\alpha}) (- i \de_{y_\beta}) \mathfrak{a}^\epsilon(y,t) + \frac{1}{2} \de_{\mathscr{Q}^\epsilon_\alpha} \de_{\mathscr{Q}^\epsilon_\beta} W(\mathscr{Q}^\epsilon(t)) y_\alpha y_\beta \mathfrak{a}^\epsilon(y,t)	\\
	&\mathfrak{a}^\epsilon(y,0) = \frac{a_0(y)}{\| a_0(y) \|_{L^2_y(\field{R}^d)} }.
\end{split}
\end{equation}
Since $\mathscr{Q}^\epsilon(t) - \mathcal{Q}^\epsilon(t) = O(\epsilon e^{c t})$, $\mathscr{P}^\epsilon(t) = \mathcal{P}^\epsilon(t)$, by a similar argument to that given in the previous section we have that for each integer $l \geq 0$:
\begin{equation} \label{eq:can_replace_a}
	\left\| \frac{a^\epsilon(y,t)}{\|a_0(y)\|_{L^2_y(\field{R}^d)}} - \mathfrak{a}^\epsilon(y,t) \right\|_{\Sigma^l_y(\field{R}^d)} = O(\epsilon e^{c t}).
\end{equation}
Differentiating (\ref{eq:change_to_canonicals_again}), using equations (\ref{eq:full_Q_P_equations_again}) for $\dot{\mathcal{Q}}^\epsilon(t), \dot{\mathcal{P}}^\epsilon(t)$, and using (\ref{eq:can_replace_a}) to replace $\frac{a^\epsilon(y,t)}{\| a_0(y) \|_{L^2_y(\field{R}^d)}}$ with $\mathfrak{a}^\epsilon(y,t)$ everywhere we obtain: 
\begin{equation} \label{eq:Q_P_equation_canonicals}
\begin{split}
	&\dot{\mathscr{Q}}^\epsilon_\alpha(t) = \de_{\mathscr{P}^\epsilon_\alpha} E_n(\mathscr{P}^\epsilon(t)) + \epsilon \frac{1}{2} \de_{\mathscr{P}^\epsilon_\alpha} \de_{\mathscr{P}_\beta} \de_{\mathscr{P}^\epsilon_\gamma} E_n(\mathscr{P}^\epsilon(t)) \left[ \ipy{\mathfrak{a}^\epsilon(y,t)}{(- i \de_{y_\beta})(- i \de_{y_\gamma}) \mathfrak{a}^\epsilon(y,t) } \right. \\
	&\left. - \ipy{\mathfrak{a}^\epsilon(y,t)}{(- i \de_{y_\beta}) \mathfrak{a}^\epsilon(y,t)}\ipy{\mathfrak{a}^\epsilon(y,t)}{(- i \de_{y_\gamma}) \mathfrak{a}^\epsilon(y,t)} \right] \\
	&+ \de_{\mathscr{Q}^\epsilon_\beta} W(\mathscr{Q}^\epsilon(t)) \de_{\mathscr{P}^\epsilon_\beta} \mathcal{A}_{n,\alpha}(\mathscr{P}^\epsilon(t)) + O(\epsilon^{3/2} e^{c t})	\\
	&\dot{\mathscr{P}}^\epsilon_\alpha(t) = - \de_{\mathscr{Q}^\epsilon_\alpha} W(\mathscr{Q}^\epsilon(t)) - \epsilon \frac{1}{2} \de_{\mathscr{Q}^\epsilon_\alpha} \de_{\mathscr{Q}^\epsilon_\beta} \de_{\mathscr{Q}^\epsilon_\gamma} W(\mathscr{Q}^\epsilon(t)) \left[ \ipy{\mathfrak{a}^\epsilon(y,t)}{ y_\beta y_\gamma \mathfrak{a}^\epsilon(y,t) } \right.	\\
	&\left. - \ipy{\mathfrak{a}^\epsilon(y,t)}{y_\beta \mathfrak{a}^\epsilon(y,t)} \ipy{\mathfrak{a}^\epsilon(y,t)}{y_\gamma \mathfrak{a}^\epsilon(y,t)}\right] 	\\
	&- \de_{\mathscr{Q}^\epsilon_\alpha} \de_{\mathscr{Q}^\epsilon_\beta} W(\mathscr{Q}^\epsilon(t)) \mathcal{A}_{n,\beta}(\mathscr{P}^\epsilon(t)) + O(\epsilon^{3/2} e^{c t}).	\\
\end{split}
\end{equation}
Note that, up to error terms, equations (\ref{eq:Q_P_equation_canonicals}) (\ref{eq:mathfrak_a_equation}) constitute a closed system for $\mathscr{Q}^\epsilon(t),\mathscr{P}^\epsilon(t),\mathfrak{a}^\epsilon(y,t)$. 

We now show that this system may be derived from a Hamiltonian. Let:
\begin{equation} \label{eq:mu_lambda}
\begin{split}
	&\mu^\epsilon(t) := \ipy{\mathfrak{a}^\epsilon(y,t)}{y \mathfrak{a}^\epsilon(y,t)}	\\
	&\lambda^\epsilon(t) := \ipy{\mathfrak{a}^\epsilon(y,t)}{ (- i \nabla_y) \mathfrak{a}^\epsilon(y,t)}.
\end{split}
\end{equation}
Then, we may write (\ref{eq:Q_P_equation_canonicals}) as:
\begin{equation} \label{eq:final_Q_P_equation}
\begin{split}
	&\dot{\mathscr{Q}}^\epsilon_\alpha(t) = \de_{\mathscr{P}^\epsilon_\alpha} E_n(\mathscr{P}^\epsilon(t)) + \epsilon \frac{1}{2} \de_{\mathscr{P}^\epsilon_\alpha} \de_{\mathscr{P}_\beta} \de_{\mathscr{P}^\epsilon_\gamma} E_n(\mathscr{P}^\epsilon(t)) \left[ \ipy{\mathfrak{a}^\epsilon(y,t)}{(- i \de_{y_\beta})(- i \de_{y_\gamma}) \mathfrak{a}^\epsilon(y,t) } - \lambda^\epsilon_\beta(t) \lambda^\epsilon_\gamma(t) \right] \\
	&+ \epsilon \de_{\mathscr{Q}^\epsilon_\beta} W(\mathscr{Q}^\epsilon(t)) \de_{\mathscr{P}^\epsilon_\beta} \mathcal{A}_{n,\alpha}(\mathscr{P}^\epsilon(t)) + O(\epsilon^{3/2} e^{c t})	\\
	&\dot{\mathscr{P}}^\epsilon_\alpha(t) = - \de_{\mathscr{Q}^\epsilon_\alpha} W(\mathscr{Q}^\epsilon(t)) - \epsilon \frac{1}{2} \de_{\mathscr{Q}^\epsilon_\alpha} \de_{\mathscr{Q}^\epsilon_\beta} \de_{\mathscr{Q}^\epsilon_\gamma} W(\mathscr{Q}^\epsilon(t))\left[ \ipy{\mathfrak{a}^\epsilon(y,t)}{ y_\beta y_\gamma \mathfrak{a}^\epsilon(y,t) } - \mu^\epsilon_\alpha(t) \mu^\epsilon_\beta(t) \right] 	\\
	&- \epsilon \de_{\mathscr{Q}^\epsilon_\alpha} \de_{\mathscr{Q}^\epsilon_\beta} W(\mathscr{Q}^\epsilon(t)) \mathcal{A}_{n,\beta}(\mathscr{P}^\epsilon(t)) + O(\epsilon^{3/2} e^{c t})	\\
\end{split}
\end{equation}
By an identical calculation to that given in Appendix \ref{app:computation_of_dynamics_of_physical_observables} (\ref{eq:Q_dot_P_dot_a}) (\ref{eq:Q_dot_P_dot_a_corollary}), we have that:
\begin{equation} \label{eq:mu_lambda_equation}
\begin{split}
	&\dot{\mu}^\epsilon_\alpha(t) = \de_{\mathscr{P}^\epsilon_\alpha} \de_{\mathscr{P}^\epsilon_\beta} E_n(\mathscr{P}^\epsilon(t)) \lambda^\epsilon_{\beta}(t)  	\\
	&\dot{\lambda}^\epsilon_\alpha(t) = - \de_{\mathscr{Q}^\epsilon_\alpha} \de_{\mathscr{Q}^\epsilon_\beta} W(\mathscr{Q}^\epsilon(t)) \mu^\epsilon_{\beta}(t).
\end{split}
\end{equation}
Let:
\begin{equation} \label{eq:full_hamiltonian}
\begin{split}
	&\mathcal{H}^\epsilon(\mathscr{Q}^\epsilon,\mathscr{P}^\epsilon,\overline{\mathfrak{a}^\epsilon},\mathfrak{a}^\epsilon,\mu^\epsilon,\lambda^\epsilon) := E_n(\mathscr{P}^\epsilon) + \epsilon W(\mathscr{Q}^\epsilon) + \epsilon \nabla_{\mathcal{Q}^\epsilon} W(\mathscr{Q}^\epsilon) \cdot \mathcal{A}_n(\mathscr{P}^\epsilon)	\\
	&+ \epsilon \frac{1}{2} \de_{\mathscr{P}^\epsilon_\alpha} \de_{\mathscr{P}^\epsilon_{\beta}} E_n(\mathscr{P}^\epsilon)  \left[ \ipy{\de_{y_\alpha} a^\epsilon}{\de_{y_\beta} a^\epsilon} - \lambda^\epsilon_{\alpha} \lambda^\epsilon_{\beta} \right] \\
	&+ \epsilon \frac{1}{2} \de_{\mathscr{Q}^\epsilon_\alpha} \de_{\mathscr{Q}^\epsilon_\beta} W({\mathscr{Q}}^\epsilon) \left[ \ipy{y_\alpha a^\epsilon}{y_\beta a^\epsilon} - \mu_\alpha^\epsilon \mu_\beta^\epsilon \right]
\end{split}
\end{equation}
Then we may write the closed system (\ref{eq:mathfrak_a_equation}), (\ref{eq:mu_lambda_equation}), (\ref{eq:final_Q_P_equation}) as:
\begin{equation} \label{eq:full_hamiltonian_eom}
\begin{split}
	&\dot{\mathscr{Q}}^\epsilon = \nabla_{\mathscr{P}^\epsilon} \mathcal{H}^\epsilon(\mathscr{P}^\epsilon), \dot{\mathscr{P}}^\epsilon = - \nabla_{\mathscr{Q}^\epsilon} \mathcal{H}^\epsilon(\mathscr{Q}^\epsilon),	\\
	&i \de_t \mathfrak{a}^\epsilon = \frac{ \delta \mathcal{H}^\epsilon }{ \delta \overline{\mathfrak{a}^\epsilon} }, \dot{\mu}^\epsilon(t) = - \nabla_{ \lambda^\epsilon } \mathcal{H}^\epsilon, \dot{\lambda}^\epsilon(t) = \nabla_{ \mu^\epsilon } \mathcal{H}^\epsilon.
\end{split}
\end{equation}

The precise statements (1),(2),(3) of Theorem \ref{prop:center_of_mass_proposition} follow from the following observations. The errors in equations (\ref{eq:QP_Rd}), (\ref{eq:full_Q_P_equations_again}), (\ref{eq:final_Q_P_equation}) may each be bounded by $\epsilon^{3/2} C_1 e^{c_1 t}, \epsilon^{3/2} C_2 e^{c_2 t}, \epsilon^{3/2} C_3 e^{c_3 t}$ for positive constants $c_j, C_j, j \in \{1,2,3\}$. Define $c' := \max_{j \in \{1,2,3\}} c_j, C' := \max_{j \in \{1,2,3\}} C_j$. Then all of these errors may be bounded by $\epsilon^{3/2} C' e^{c' t}$. It follows that these terms are $o(\epsilon)$ for all $t \in [0,\tilde{C}' \ln 1/\epsilon]$ where $\tilde{C}'$ is any constant such that $\tilde{C}' < \frac{1}{2 c'}$. Next, in Appendix \ref{app:computation_of_dynamics_of_physical_observables} (\ref{eq:Q_dot_P_dot_a}) (\ref{eq:Q_dot_P_dot_a_corollary}) we show that $\ipy{a_0(y)}{y a_0(y)} = \ipy{a_0(y)}{(- i \nabla_y)a_0(y)} = 0$ implies that for all $t \geq 0$ $\ipy{a(y,t)}{y a(y,t)} = \ipy{a(y,t)}{(- i \nabla_y)a(y,t)} = 0$. Imposing the constraints (\ref{eq:well_prepared_envelope}), then, the simplified expressions (\ref{eq:observables_expanded}) (\ref{eq:eom_for_center_of_mass}) follow from (\ref{eq:QP_Rd}), (\ref{eq:full_Q_P_equations_again}) respectively. We are also justified in ignoring the $\lambda^\epsilon, \mu^\epsilon$ degrees of freedom in (\ref{eq:full_hamiltonian}) (\ref{eq:full_hamiltonian_eom}) since for all $t \geq 0$, $\lambda^\epsilon(t) = \mu^\epsilon(t) = 0$. In this way we obtain the simplified Hamiltonian system (\ref{eq:periodic_hamiltonian_system}) (\ref{eq:periodic_hamiltonian}). 

\appendix
\section{Useful identities involving $E_n, \chi_n$} \label{ch:useful_identities}
Let $E(p), \chi(z;p)$ satisfy the eigenvalue problem:
\begin{equation}
\begin{split}	
	&\left[ H(p) - E(p) \right] \chi(z;p) = 0  	\\
	&H(p) = \frac{1}{2}(p - i\nabla_z)^2 + V(z) 
\end{split}
\end{equation}
and assume that $E(p), \chi(z;p)$ are smooth functions of $p$. Taking the gradient with respect to $p$ gives: 
\begin{equation} \label{eq:first_derivative_identities} 
	\left[  (p - i \nabla_z) - \nabla_p E_n(p) \right] \chi_n(z;p) + \left[  H(p) - E_n(p) \right] \nabla_p \chi_n(z;p) = 0	
\end{equation}
Taking two derivatives with respect to $p$ of the equation gives: 
\begin{equation} \label{eq:kk_identity}
\begin{split}
	&\left[ \delta_{\alpha \beta} - \de_{p_\alpha} \de_{p_\beta} E_n(p) \right] \chi_n(z;p) + \left[  (p - i \de_z)_\alpha - \de_{p_\alpha} E_n(p) \right] \de_{p_\beta} \chi_n(z;p) \\
	&+ \left[  (p - i \de_z)_\beta - \de_{p_\beta} E_n(p) \right] \de_{p_\alpha} \chi_n(z;p) + \left[  H(p) - E_n(p) \right] \de_{p_\alpha} \de_{p_\beta} \chi_n(z;p) = 0	\\
\end{split}
\end{equation}
where $\delta_{\alpha \beta}$ is the Kronecker delta. Taking the derivative with respect to $p_\gamma$ of (\ref{eq:kk_identity}) gives:
\begin{equation} \label{eq:kkk_identity}
\begin{split}
	&[ - \de_{p_\alpha} \de_{p_\beta} \de_{p_\gamma} E_n(p) ] \chi_n(z;p) + [ \delta_{\alpha \beta} - \de_{p_\alpha} \de_{p_\beta} E_n(p) ] \de_{p_\gamma} \chi_n(z;p)	\\
	&+ [ \delta_{\alpha \gamma} - \de_{p_\alpha} \de_{p_\gamma} E_n(p) ] \de_{p_\beta} \chi_n(z;p) + [ \delta_{\beta \gamma} - \de_{p_\beta} \de_{p_\gamma} E_n(p) ] \de_{p_\alpha} \chi_n(z;p)		\\
	&+ [ ( p - i \de_z )_\alpha - \de_{p_\alpha} E_n(p) ] \de_{p_\beta} \de_{p_\gamma} \chi_n(z;p) + [ (p - i \de_z)_\beta - \de_{p_\beta} E_n(p) ] \de_{p_\alpha} \de_{p_\gamma} \chi_n(z;p) \\
	&+ [ (p - i \de_z)_\gamma - \de_{p_\gamma} E_n(p) ] \de_{p_\alpha} \de_{p_\beta} \chi_n(z;p) + [ H(p) - E_n(p) ] \de_{p_\alpha} \de_{p_\beta} \de_{p_\gamma} \chi_n(z;p) = 0
\end{split}
\end{equation}

\section{Derivation of leading-order envelope equation} \label{derivation_of_leading_order_envelope_equation}
Collecting terms of order $\epsilon$ in the expansion (\ref{eq:everything_expanded}), using equations (\ref{eq:action_integral}) for $\dot{S}(t)$ and (\ref{eq:classical_system}) for $\dot{q}(t), \dot{p}(t)$, and setting equal to zero gives the following inhomogeneous self-adjoint elliptic equation in $z$ for $f^2(y,z,t)$:
\begin{equation} \label{eq:equation_for_f_2}
\begin{split}
	&\left[ H(p(t)) - E_n(p(t)) \right] f^2(y,z,t) = \xi^2(y,z,t) 	\\
	&\text{for all } v \in \Lambda, f^2(y,z + v,t) = f^2(y,z,t); \; f^2(y,z,t) \in \Sigma^{R-2}_y(\field{R}^d)	\\
	&\xi^2 := - \left[ \frac{1}{2} (- i \nabla_y)^2 + \frac{1}{2} \de_{q_\alpha}\de_{q_\beta} W(q(t)) y_\alpha y_\beta - i \de_t \right] f^0(y,z,t)	\\
	&- \left[ \left( (p(t) - i \nabla_z) - \nabla_p E_n(p(t)) \right) \cdot (- i \nabla_y) \right] f^1(y,z,t).	
\end{split}
\end{equation} 
We follow the strategy outlined in Remark \ref{rem:general_strategy}. The proof of the following Lemma will be given at the end of this section:
\begin{lemma} \label{lem:xi_2}
$\xi^2(y,z,t)$, defined in (\ref{eq:equation_for_f_2}) satisfies:
\begin{equation} \label{eq:form_of_xi_2}
	\xi^2(y,z,t) = \tilde{\xi}^2(y,z,t) + \left[ H(p(t)) - E_n(p(t)) \right] u^2(y,z,t)
\end{equation}
where:
\begin{equation} \label{eq:xi_and_u_2}
\begin{split}
	&\tilde{\xi}^2(y,z,t) =	\left[ i \de_t a^0(y,t) - \frac{1}{2} \de_{p_\alpha} \de_{p_\beta} E_n(p(t)) (- i \de_{y_\alpha})(- i \de_{y_\beta}) a^0(y,t) - \frac{1}{2} \de_{q_\alpha} \de_{q_\beta} W(q(t)) y_\alpha y_\beta a^0(y,t) \right. \\
	&\left. \vphantom{\frac{1}{2}} - \nabla_q W(q(t)) \cdot \mathcal{A}_n(p(t)) \right] \chi_n(z;p(t)) 	\\
	&+ P^\perp_n(p(t)) \left[ - i a^0(y,t) \nabla_q W(q(t)) \cdot \nabla_p \chi_n(z;p(t)) \right] 	\\
	&u^2(y,z,t) = (- i \nabla_y) a^1(y,t) \cdot \nabla_p \chi_n(z;p(t)) + \frac{1}{2} (- i \de_{y_\alpha})(- i \de_{y_\beta}) a^0(y,t) \de_{p_\alpha} \de_{p_\beta} \chi_n(z;p(t)). 
\end{split}
\end{equation}
Here, $\mathcal{A}_n(p(t))$ is the Berry connection (\ref{eq:berry_connection}) and $P^\perp_n(p(t))$ is the orthogonal projection operator away from the subspace of $L^2_{per}$ spanned by $\chi_n(z;p(t))$ (\ref{eq:def_of_P_perp}). 
\end{lemma}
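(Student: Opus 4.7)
The plan is to substitute the explicit forms of $f^0 = a^0(y,t)\chi_n(z;p(t))$ and $f^1 = a^1(y,t)\chi_n(z;p(t)) + (-i\nabla_y)a^0(y,t)\cdot\nabla_p\chi_n(z;p(t))$ into the definition of $\xi^2$ given in \eqref{eq:equation_for_f_2}, and then reorganize the result using the eigenfunction identities from Appendix \ref{ch:useful_identities}. First I would process the time derivative: since $\chi_n$ depends on $t$ through $p(t)$, one gets $-i\de_t f^0 = -i(\de_t a^0)\chi_n - i a^0\,\dot p(t)\cdot\nabla_p\chi_n$, and substituting $\dot p(t) = -\nabla_q W(q(t))$ produces the factor $ia^0\,\nabla_q W\cdot\nabla_p\chi_n$ that will ultimately generate the Berry connection term.

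Next I would handle the $f^1$-contribution $-[(p(t)-i\nabla_z)-\nabla_p E_n(p(t))]\cdot(-i\nabla_y)f^1$ in two pieces. For the $a^1\chi_n$ piece, the first-derivative identity \eqref{eq:first_derivative_identities} rewrites $[(p-i\nabla_z)-\nabla_p E_n]\chi_n = -[H(p)-E_n(p)]\nabla_p\chi_n$, immediately producing the $[H(p(t))-E_n(p(t))]\bigl((-i\nabla_y)a^1\cdot\nabla_p\chi_n\bigr)$ contribution to $u^2$. For the $(-i\nabla_y)a^0\cdot\nabla_p\chi_n$ piece, the key trick is symmetrization: since $(-i\de_{y_\alpha})(-i\de_{y_\beta})a^0$ is symmetric in $(\alpha,\beta)$, one may replace $[(p-i\nabla_z)_\alpha-\de_{p_\alpha}E_n]\de_{p_\beta}\chi_n$ by its symmetrization in $(\alpha,\beta)$, and then invoke the second-derivative identity \eqref{eq:kk_identity} to replace this symmetrized combination by $-[\delta_{\alpha\beta}-\de_{p_\alpha}\de_{p_\beta}E_n]\chi_n - [H(p)-E_n(p)]\de_{p_\alpha}\de_{p_\beta}\chi_n$. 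This yields three contributions: a $\tfrac12(-i\nabla_y)^2 a^0\,\chi_n$ term that exactly cancels the one generated by the kinetic operator acting on $f^0$, a $-\tfrac12\de_{p_\alpha}\de_{p_\beta}E_n\,(-i\de_{y_\alpha})(-i\de_{y_\beta})a^0\,\chi_n$ term (the dispersion piece of $\tilde\xi^2$), and a $\tfrac12[H-E_n]\de_{p_\alpha}\de_{p_\beta}\chi_n\,(-i\de_{y_\alpha})(-i\de_{y_\beta})a^0$ term that forms the second summand of $u^2$.

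Finally I would split the residual term $-ia^0\,\nabla_q W\cdot\nabla_p\chi_n$ orthogonally in $L^2_{per}$ as
\begin{equation*}
-ia^0\,\nabla_q W\cdot\nabla_p\chi_n = P^\perp_n(p(t))\bigl[-ia^0\,\nabla_q W\cdot\nabla_p\chi_n\bigr] + \bigl\langle\chi_n,\,-ia^0\,\nabla_q W\cdot\nabla_p\chi_n\bigr\rangle_{L^2_z(\Omega)}\chi_n.
\end{equation*}
Using the definition \eqref{eq:berry_connection}, $\langle\chi_n,\nabla_p\chi_n\rangle_{L^2(\Omega)} = -i\mathcal{A}_n(p)$, so the projection onto $\chi_n$ yields precisely $-a^0\,\nabla_q W\cdot\mathcal{A}_n(p(t))\,\chi_n$, contributing the Berry connection term in $\tilde\xi^2$, while the $P^\perp_n$ piece is exactly the last summand of $\tilde\xi^2$ displayed in \eqref{eq:xi_and_u_2}. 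Collecting all the $\chi_n$-multiples and all the $[H-E_n](\cdot)$ terms gives the claimed decomposition.

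The main obstacle is the symmetrization step combined with identity \eqref{eq:kk_identity}: one has to recognize that the dispersion Hessian $D^2_p E_n$ term in $\tilde\xi^2$ arises from the diagonal $\de_{p_\alpha}\de_{p_\beta}E_n$ part of \eqref{eq:kk_identity}, that the $\delta_{\alpha\beta}$ part cancels the $(-i\nabla_y)^2$ contribution from the kinetic operator, and that the residual $[H-E_n]\de_{p_\alpha}\de_{p_\beta}\chi_n$ part is precisely what is needed to complete $u^2$. Everything else is bookkeeping; all $\chi_n$-proportional terms then assemble into the coefficient of $\chi_n$ in $\tilde\xi^2$ and the remaining contributions sort cleanly into $P^\perp_n$ and $[H-E_n]u^2$.
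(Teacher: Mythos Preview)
Your proposal is correct and follows essentially the same route as the paper's proof: substitute the explicit forms of $f^0$ and $f^1$, use identity \eqref{eq:first_derivative_identities} on the $a^1\chi_n$ piece, use the symmetry of $(-i\de_{y_\alpha})(-i\de_{y_\beta})a^0$ together with identity \eqref{eq:kk_identity} on the $(-i\nabla_y)a^0\cdot\nabla_p\chi_n$ piece, and finally split $-ia^0\nabla_qW\cdot\nabla_p\chi_n$ into its $\chi_n$-component (producing the Berry connection via \eqref{eq:berry_connection}) and its $P^\perp_n$-component. The only cosmetic difference is that the paper first adds and subtracts $\tfrac12\de_{p_\alpha}\de_{p_\beta}E_n(-i\de_{y_\alpha})(-i\de_{y_\beta})f^0$ before substituting, whereas you substitute first and let the symmetrization extract the same dispersion term; the underlying identities and cancellations are identical.
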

Imposing the solvability condition of equation (\ref{eq:equation_for_f_2}), given by (\ref{eq:general_solvability_condition}) with $j = 2$ and $\tilde{\xi}^2(y,z,t)$ given by (\ref{eq:xi_and_u_2}), gives the following evolution equation for $a^0(y,t)$:
\begin{equation} \label{eq:equation_for_a}
\begin{split}
	&i \de_t a^0(y,t) = \frac{1}{2} \de_{p_\alpha} \de_{p_\beta} E_n(p(t)) (- i \de_{y_\alpha})(- i \de_{y_\beta}) a^0(y,t) + \frac{1}{2} \de_{q_\alpha} \de_{q_\beta} W(q(t)) y_\alpha y_\beta a^0(y,t) 	\\
 	&+ \nabla_q W(q(t)) \cdot \mathcal{A}_n(p(t)) a^0(y,t) \\
\end{split}
\end{equation}
Taking $a^0(y,t) = a(y,t) e^{i \phi_B(t)}$ and matching with the initial data implies equations (\ref{eq:phase_equation}) and (\ref{eq:envelope_equation}). The general solution of (\ref{eq:equation_for_f_2}) is given by (\ref{eq:general_solution_f_j}) with $j = 2$: 
\begin{equation} \label{eq:second_order_solution}
\begin{split}
	&f^2(y,z,t) = a^2(y,t) \chi_n(z;p(t)) + (- i \nabla_y) a^1(y,t) \cdot \nabla_p \chi_n(z;p(t)) 	\\
	&+ \frac{1}{2} (- i \de_{y_\alpha})(- i \de_{y_\beta}) a^0(y,t) \de_{p_\alpha} \de_{p_\beta} \chi_n(z;p(t))	\\ 
	&[H(p(t)) - E_n(p(t))]^{-1} P^\perp_n(p(t)) \left[ - i \nabla_q W(q(t)) a^0(y,t) \cdot \nabla_p \chi_n(z;p(t)) \right] 
\end{split}
\end{equation}
where $a^2(y,t)$ is an arbitrary function in $\Sigma^{R-2}_y(\field{R}^d)$ to be fixed at higher order in the expansion. 
\begin{proof}[Proof of Lemma \ref{lem:xi_2}]
Adding and substracting terms, using smoothness of the band $E_n(p)$ in a neighborhood of $p(t)$ (Assumption \ref{isolated_band_assumption_nonseparable_case}) we can re-write $\xi^2$ (\ref{eq:equation_for_f_2}) as:
\begin{equation}
\begin{split}
	&\xi^2(y,z,t) = - \left[ \frac{1}{2} \de_{p_\alpha} \de_{p_\beta} E_n(p(t)) (- i \de_{y_\alpha})(- i \de_{y_\beta}) + \frac{1}{2} \de_{q_\alpha} \de_{q_\beta} W(q(t)) y_\alpha y_\beta - i \de_t \right] f^0(y,z,t) \\
	&- \left[ \frac{1}{2} \left( \delta_{\alpha \beta} - \de_{p_\alpha} \de_{p_\beta} E_n(p(t)) \right) (- i \de_{y_\alpha})(- i \de_{y_\beta}) \right] f^0(y,z,t)	\\ 
	&- \left[ \left( (p(t) - i \nabla_z) - \nabla_p E_n(p(t)) \right) \cdot (- i \nabla_y) \right] f^1(y,z,t).
\end{split}
\end{equation}
Substituting the forms of $f^0(y,z,t)$ (\ref{eq:leading_order_solution}) and $f^1(y,z,t)$ (\ref{eq:first_order_solution}) gives: 
\begin{equation} \label{eq:xi_2_to_simplify}
\begin{split}
	&\xi^2(y,z,t) = - \left[ \frac{1}{2} \de_{p_\alpha} \de_{p_\beta} E_n(p(t)) (- i \de_{y_\alpha})(- i \de_{y_\beta}) a^0(y,t) + \frac{1}{2} \de_{q_\alpha} \de_{q_\beta} W(q(t)) y_\alpha y_\beta a^0(y,t) \right. \\
	&\left. \vphantom{\frac{1}{2}} - i \de_t a^0(y,t) \right] \chi_n(z;p(t)) + i a^0(y,t) \dot{p}(t) \cdot \nabla_p \chi_n(z;p(t))  	\\
	&- (- i \de_{y_\alpha})(- i \de_{y_\beta}) a^0(y,t) \frac{1}{2} \left( \delta_{\alpha \beta} - \de_{p_\alpha} \de_{p_\beta} E_n(p(t)) \right) \chi_n(z;p(t))   \\
	&- (- i \de_{y_\alpha})(- i \de_{y_\beta}) a^0(y,t) \left( (p(t) - i \de_z)_\alpha - \de_{p_\alpha} E_n(p(t)) \right) \de_{p_\beta} \chi_n(z;p(t))  \\
	&- (- i \nabla_y) a^1(y,t) \cdot \left( (p(t) - i \nabla_z) - \nabla_p E_n(p(t)) \right) \chi_n(z;p(t)).
\end{split}
\end{equation}
Using (\ref{eq:first_derivative_identities}) we can simplify the term involving $a^1$:
\begin{equation} \label{eq:terms_involving_a_1}
\begin{split}
	&- (- i \nabla_y) a^1(y,t) \cdot \left( (p(t) - i \nabla_z) - \nabla_p E_n(p(t)) \right) \chi_n(z;p(t)) \\
	&= (- i \nabla_y) a^1(y,t) \cdot \left[ H(p(t)) - E_n(p(t)) \right] \nabla_p \chi_n(z;p(t)).
\end{split}
\end{equation}
Using (\ref{eq:kk_identity}), and the symmetry: $(- i \de_{y_\alpha})(- i \de_{y_\beta}) a^0(y,t) = (- i \de_{y_\beta})(- i \de_{y_\alpha}) a^0(y,t)$ we can simplify the terms:
\begin{equation} \label{eq:kk_terms}
\begin{split}
	&- (- i \de_{y_\alpha})(- i \de_{y_\beta}) a^0(y,t) \frac{1}{2} \left( \delta_{\alpha \beta} - \de_{p_\alpha} \de_{p_\beta} E_n(p(t)) \right) \chi_n(z;p(t))   \\
	&- (- i \de_{y_\alpha})(- i \de_{y_\beta}) a^0(y,t) \left( (p(t) - i \de_z)_\alpha - \de_{p_\alpha} E_n(p(t)) \right) \de_{p_\beta} \chi_n(z;p(t))  \\
	&= \frac{1}{2} (- i \de_{y_\alpha})(- i \de_{y_\beta}) a^0(y,t) \left[ H(p(t)) - E_n(p(t)) \right] \de_{p_\alpha} \de_{p_\beta} \chi_n(z;p(t)).
\end{split}
\end{equation}
Recall that $\dot{p}(t) = - \nabla_q W(q(t))$ (\ref{eq:classical_system}). Substituting this, (\ref{eq:terms_involving_a_1}), and (\ref{eq:kk_terms}) into (\ref{eq:equation_for_f_2}) and adding and subtracting $a^0(y,t) \nabla_q W(q(t)) \cdot \mathcal{A}_n(p(t)) \chi_n(z;p(t))$ gives:
\begin{equation} \label{eq:xi_2_simplified}
\begin{split}
	&\xi^2(y,z,t) = \left[ i \de_t a^0(y,t) - \frac{1}{2} \de_{p_\alpha} \de_{p_\beta} E_n(p(t)) (- i \de_{y_\alpha})(- i \de_{y_\beta}) a^0(y,t) - \frac{1}{2} \de_{q_\alpha} \de_{q_\beta} W(q(t)) y_\alpha y_\beta a^0(y,t) \right. \\
	&\left. \vphantom{\frac{1}{2}} - \nabla_q W(q(t)) \cdot \mathcal{A}_n(p(t)) \right] \chi_n(z;p(t)) + P^\perp_n(p(t)) \left[ - i a^0(y,t) \nabla_q W(q(t)) \cdot \nabla_p \chi_n(z;p(t)) \right] 	\\
	&+ \left[ H(p(t)) - E_n(p(t)) \right] \left[ \frac{1}{2} (- i \de_{y_\alpha})(- i \de_{y_\beta}) a^0(y,t) \de_{p_\alpha} \de_{p_\beta} \chi_n(z;p(t)) \right.	\\
	&\left.\vphantom{\frac{1}{2}}  + (- i \nabla_y) a^1(y,t) \cdot \nabla_p \chi_n(z;p(t)) \right]
\end{split}
\end{equation}
where $\mathcal{A}_n(p(t))$ is the Berry connection (\ref{eq:berry_connection}) and $P^\perp_n(p(t))$ is the orthogonal projection in $L^2_{per}$ away from the subspace spanned by $\chi_n(z;p(t))$. 
\end{proof}

\section{Derivation of first-order envelope equation} \label{derivation_of_first_order_envelope_equation}
Collecting terms of order $\epsilon^{3/2}$ in the expansion (\ref{eq:everything_expanded}), using equations (\ref{eq:action_integral}) for $\dot{S}(t)$ and (\ref{eq:classical_system}) for $\dot{q}(t), \dot{p}(t)$, and setting equal to zero gives the following inhomogeneous self-adjoint elliptic equation in $z$ for $f^3(y,z,t)$:
\begin{equation} \label{eq:equation_for_f_3}
\begin{split}
	&\left[ \vphantom{\frac{1}{2}} H(p(t)) - E_n(p(t)) \right] f^3(y,z,t) = \xi^3(y,z,t)	\\
	&\text{for all } v \in \Lambda, f^3(y,z + v,t) = f^3(y,z,t); \; f^3(y,z,t) \in \Sigma^{R-3}_y(\field{R}^d)	\\
	&\xi^3(y,z,t) := - \left[ \frac{1}{6} \de_{q_\alpha} \de_{q_\beta} \de_{q_\gamma} W(q(t)) y_\alpha y_\beta y_\gamma \right] f^0(y,z,t) \\
	&- \left[ \frac{1}{2} (- i \nabla_{y})^2 + \frac{1}{2} \de_{q_\alpha} \de_{q_\beta} W(q(t)) y_\alpha y_\beta - i \de_t \right] f^1(y,z,t) 	\\
	&- \left[ \vphantom{\frac{1}{2}} [ (p(t) - i \nabla_z) - \nabla_p E_n(p(t)) ] \cdot ( - i \nabla_{y}) \right] f^2(y,z,t) \\
\end{split}
\end{equation}
We claim the following lemmas, the proofs of which will be given at the end of this section:
\begin{lemma} \label{lem:xi_3}
$\xi^3(y,z,t)$, as defined in (\ref{eq:equation_for_f_3}), satisfies:
\begin{equation} \label{eq:form_of_xi_3}
	\xi^3(y,z,t) = \tilde{\xi}^3(y,z,t) + \left[ H(p(t)) - E_n(p(t)) \right] u^3(y,z,t)	
\end{equation} 
where $\tilde{\xi}^3$ is given explicitly by (\ref{eq:xi_3}) and:
\begin{equation} \label{eq:u_3}
\begin{split}
	&u^3(y,z,t) := (- i \nabla_y) a^2(y,t) \cdot \nabla_p \chi_n(z;p(t)) + \frac{1}{2} (- i \de_{y_\alpha})(- i \de_{y_\beta}) a^1(y,t) \de_{p_\alpha} \de_{p_\beta} \chi_n(z;p(t))  \\
	&+ \frac{1}{6} (- i \de_{y_\alpha})(- i \de_{y_\beta})(- i \de_{y_\gamma}) a^0(y,t) \de_{p_\alpha} \de_{p_\beta} \de_{p_\gamma} \chi_n(z;p(t)).
\end{split}
\end{equation}
\end{lemma}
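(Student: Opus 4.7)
The plan is to mirror the strategy of Lemmas \ref{lem:xi_1} and \ref{lem:xi_2} at one higher order in $\epsilon^{1/2}$. First I would substitute the explicit representations of $f^0,f^1,f^2$ from (\ref{eq:leading_order_solution}), (\ref{eq:first_order_solution}), (\ref{eq:second_order_solution}) into the definition (\ref{eq:equation_for_f_3}) of $\xi^3$. The time derivative $-i\partial_t f^1$ is simplified using the envelope equation (\ref{eq:equation_for_a}) satisfied by $a^0$, together with the chain-rule identities $\frac{d}{dt}\chi_n(z;p(t)) = \dot p(t)\cdot\nabla_p\chi_n(z;p(t))$ and its analogue for $\nabla_p\chi_n(z;p(t))$, where $\dot p(t)=-\nabla_q W(q(t))$ is taken from (\ref{eq:classical_system}). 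Each remaining differential-operator term is then processed by adding and subtracting appropriate derivatives of $E_n$ evaluated at $p(t)$, so that each occurrence of $(p(t)-i\nabla_z)-\nabla_p E_n(p(t))$, paired with the correct $\nabla_p$-derivative of $\chi_n(z;p(t))$, can be collapsed via the identities (\ref{eq:first_derivative_identities}), (\ref{eq:kk_identity}), (\ref{eq:kkk_identity}) of Appendix \ref{ch:useful_identities}.

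The three summands of $u^3(y,z,t)$ in (\ref{eq:u_3}) arise from the three derivative-of-$\chi_n$ summands of $f^2$, each contracted with the operator $-[(p(t)-i\nabla_z)-\nabla_p E_n(p(t))]\cdot(-i\nabla_y)$: the $a^2(y,t)\chi_n(z;p(t))$ piece yields $[H(p(t))-E_n(p(t))]\{(-i\nabla_y)a^2\cdot\nabla_p\chi_n\}$ directly from (\ref{eq:first_derivative_identities}); the $(-i\nabla_y)a^1\cdot\nabla_p\chi_n$ piece yields $\frac{1}{2}[H-E_n]\{(-i\partial_{y_\alpha})(-i\partial_{y_\beta})a^1\,\partial_{p_\alpha}\partial_{p_\beta}\chi_n\}$ after a $\frac12(\delta_{\alpha\beta}-\partial_{p_\alpha}\partial_{p_\beta}E_n(p(t)))$ add-and-subtract followed by (\ref{eq:kk_identity}); and the $\frac12(-i\partial_{y_\alpha})(-i\partial_{y_\beta})a^0\,\partial_{p_\alpha}\partial_{p_\beta}\chi_n$ piece yields $\frac{1}{6}[H-E_n]\{(-i\partial_{y_\alpha})(-i\partial_{y_\beta})(-i\partial_{y_\gamma})a^0\,\partial_{p_\alpha}\partial_{p_\beta}\partial_{p_\gamma}\chi_n\}$ after a third-derivative-of-$E_n$ add-and-subtract followed by (\ref{eq:kkk_identity}). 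All leftover pieces, together with the cubic-in-$y$ forcing coming from the third-order Taylor term of $W$ and the $\chi_n$-parallel contributions produced by $\frac{1}{2}(-i\nabla_y)^2+\frac{1}{2}\partial_{q_\alpha}\partial_{q_\beta}W(q(t))y_\alpha y_\beta - i\partial_t$ acting on $f^1$, are collected into $\tilde{\xi}^3(y,z,t)$.

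The main obstacle is the resolvent summand of $f^2$, namely $[H(p(t))-E_n(p(t))]^{-1}P_n^\perp(p(t))[-ia^0(y,t)\nabla_q W(q(t))\cdot\nabla_p\chi_n(z;p(t))]$, on which the differential operator $-[(p(t)-i\nabla_z)-\nabla_p E_n(p(t))]\cdot(-i\nabla_y)$ does not act in a manifestly closed form. I would commute $(-i\nabla_y)$ through the resolvent (it touches only the $a^0(y,t)$ factor), then use that, modulo $\chi_n$-parallel terms, $(p(t)-i\nabla_z)-\nabla_p E_n(p(t))$ is related via (\ref{eq:first_derivative_identities}) to $[H(p(t))-E_n(p(t))]\nabla_p$, so that composition with the resolvent yields a bounded operator (boundedness guaranteed by Assumption \ref{isolated_band_assumption_nonseparable_case}) whose $\chi_n$-projection can be computed explicitly and absorbed into $\tilde{\xi}^3$. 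By the Fredholm alternative, the $\chi_n$-parallel component of $\tilde{\xi}^3$ will then yield, as the solvability condition for (\ref{eq:equation_for_f_3}), the first-order envelope equation (\ref{eq:first_order_envelope_equation}) satisfied by $a^1=be^{i\phi_B}$, as announced in Section \ref{sec:second_and_third_order_terms}. No new technical ingredients beyond the identities of Appendix \ref{ch:useful_identities} are required; the difficulty is purely combinatorial bookkeeping, more intricate than in Lemma \ref{lem:xi_2}.
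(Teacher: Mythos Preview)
Your proposal is correct and follows essentially the same route as the paper: substitute the explicit forms of $f^0,f^1,f^2$, add and subtract the appropriate derivatives of $E_n$, invoke the identities (\ref{eq:first_derivative_identities})--(\ref{eq:kkk_identity}), use the envelope equation (\ref{eq:equation_for_a}) to eliminate $i\partial_t a^0$, and handle the resolvent summand of $f^2$ exactly as you describe. One small imprecision worth flagging: the second and third summands of $u^3$ do not arise from the corresponding pieces of $f^2$ \emph{alone}; each requires combining a contribution from $f^2$ with lower-order contributions (from $f^1$ for the $a^1$ piece, and from both $f^0$ and $f^1$ for the $a^0$ piece), and it is precisely this combination that matches the full left-hand side of (\ref{eq:kk_identity}) or (\ref{eq:kkk_identity})---but your add-and-subtract language already anticipates this, so the computation will go through as planned.
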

\begin{lemma} \label{lem:a_1}
The solvability condition for (\ref{eq:equation_for_f_3}), given by (\ref{eq:general_solvability_condition}) with $j = 3$ and $\tilde{\xi}^3(y,z,t)$ given by (\ref{eq:xi_3}), is equivalent to the following evolution equation for $a^1(y,t)$:
\begin{equation} \label{eq:equation_for_a_1}
\begin{split}
	&i \de_t a^1(y,t) = \frac{1}{2} \de_{p_\alpha} \de_{p_\beta} E_n(p(t)) (- i \de_{y_\alpha})(- i \de_{y_\beta}) a^1(y,t) + \frac{1}{2} \de_{q_\alpha} \de_{q_\beta} W(q(t)) y_\alpha y_\beta a^1(y,t)  	\\
	&+ \nabla_q W(q(t)) \cdot \mathcal{A}_n(p(t)) a^1(y,t)	\\
	&+ \frac{1}{6} \de_{p_\alpha} \de_{p_\beta} \de_{p_\gamma} E_n(p(t)) (- i \de_{y_\alpha})(- i \de_{y_\beta})(- i \de_{y_\gamma}) a^0(y,t) + \frac{1}{6} \de_{q_\alpha} \de_{q_\beta} \de_{q_\gamma} W(q(t)) y_\alpha y_\beta y_\gamma a^0(y,t) 	\\
	&+ \de_{q_\beta} W(q(t)) \de_{p_\gamma} \mathcal{A}_{n,\beta}(p(t)) (- i \de_{y_\gamma}) a^0(y,t) + \de_{q_\beta} \de_{q_\gamma} W(q(t)) \mathcal{A}_{n,\beta}(p(t)) y_\gamma a^0(y,t).
\end{split}
\end{equation}
Here, $\mathcal{A}_n(p(t))$ is the Berry connection (\ref{eq:berry_connection}). 
\end{lemma}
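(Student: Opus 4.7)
The plan is to verify that the Fredholm solvability condition for (\ref{eq:equation_for_f_3}), namely $\langle \chi_n(\cdot;p(t)), \tilde{\xi}^3(y,\cdot,t)\rangle_{L^2(\Omega)} = 0$, reduces after simplification to the evolution equation (\ref{eq:equation_for_a_1}) for $a^1$. The strategy mirrors the derivation of the leading-order envelope equation in Lemma \ref{lem:xi_2}, where projecting against $\chi_n$ produced the Schr\"odinger-type equation for $a^0$ together with the Berry-phase coupling $\nabla_q W \cdot \mathcal{A}_n$; here the \emph{same} operator must appear acting on $a^1$, supplemented by cubic sources in $y$ and $\partial_y$ and Berry-connection-weighted first-order sources, all driven by $a^0$.

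First I would project each of the three groups of terms in $\xi^3$ (\ref{eq:equation_for_f_3}) against $\chi_n(z;p(t))$. The cubic Taylor remainder from $W$ acting on $f^0 = a^0\chi_n$ (\ref{eq:leading_order_solution}), using $\|\chi_n\|_{L^2(\Omega)} = 1$, immediately gives $-\tfrac{1}{6}\partial_{q_\alpha}\partial_{q_\beta}\partial_{q_\gamma}W(q(t))\,y_\alpha y_\beta y_\gamma a^0$. The operator $-[\tfrac12(-i\nabla_y)^2 + \tfrac12 \partial_{q_\alpha}\partial_{q_\beta}W \, y_\alpha y_\beta - i\partial_t]$ applied to $f^1 = a^1\chi_n + (-i\nabla_y) a^0 \cdot \nabla_p \chi_n$ (\ref{eq:first_order_solution}), projected onto $\chi_n$, yields the expected harmonic-oscillator operator acting on $a^1$, plus cross terms involving $\langle \chi_n, \nabla_p \chi_n\rangle_{L^2(\Omega)}$ and $\langle \chi_n, \partial_t \nabla_p \chi_n\rangle_{L^2(\Omega)}$. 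The cross terms are handled using the normalization identity $\langle \chi_n, \nabla_p \chi_n\rangle + \langle \nabla_p \chi_n, \chi_n\rangle = 0$ (differentiation of $\|\chi_n\|^2_{L^2(\Omega)} = 1$, valid by smoothness of the isolated band, Theorem \ref{th:analyticity_of_isolated_bands}), together with $\dot{p} = -\nabla_q W$ and the definition $\mathcal{A}_n(p) = i\langle \chi_n, \nabla_p \chi_n\rangle_{L^2(\Omega)}$; these manipulations convert the cross terms into the $\nabla_q W \cdot \mathcal{A}_n \, a^1$ Berry-phase contribution.

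Next, for the term $-[(p(t) - i\nabla_z) - \nabla_p E_n(p(t))]\cdot (-i\nabla_y) f^2$, I would substitute the explicit form (\ref{eq:second_order_solution}) and apply the identities (\ref{eq:first_derivative_identities}), (\ref{eq:kk_identity}), and especially the third-derivative identity (\ref{eq:kkk_identity}) from Appendix \ref{ch:useful_identities}. The key point is that (\ref{eq:kkk_identity}) rewrites the action of $[(p - i\nabla_z) - \nabla_p E_n]_\alpha$ on $\partial_{p_\beta}\partial_{p_\gamma}\chi_n$ as a linear combination of $\partial_{p_\alpha}\partial_{p_\beta}\partial_{p_\gamma} E_n \cdot \chi_n$, terms of the form $[\delta - D^2 E_n]\cdot \nabla_p \chi_n$, and a piece in the range of $H(p(t)) - E_n(p(t))$. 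After projecting against $\chi_n$, the first of these yields the cubic dispersion source $\tfrac16 \partial_{p_\alpha}\partial_{p_\beta}\partial_{p_\gamma}E_n \, (-i\partial_{y_\alpha})(-i\partial_{y_\beta})(-i\partial_{y_\gamma}) a^0$, while the range-piece is annihilated by self-adjointness of $H(p(t)) - E_n(p(t))$. Finally, the piece of $f^2$ given by $[H-E_n]^{-1} P^\perp_n[-i\, a^0 \nabla_q W \cdot \nabla_p \chi_n]$, when combined with $(p(t) - i\nabla_z) - \nabla_p E_n$ and simplified via (\ref{eq:first_derivative_identities}) and the defining relation for $\mathcal{A}_n$, contributes precisely the two Berry-connection source terms $\partial_{q_\beta} W \, \partial_{p_\gamma}\mathcal{A}_{n,\beta}\, (-i\partial_{y_\gamma}) a^0$ and $\partial_{q_\beta}\partial_{q_\gamma} W \, \mathcal{A}_{n,\beta}\, y_\gamma a^0$ once $\mathcal{A}_n$ is differentiated with respect to $p$ using the product rule.

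The main obstacle is the bookkeeping: ensuring that the many contributions proportional to $a^1$ and its first/second $y$-derivatives assemble into exactly the Schr\"odinger operator $\mathscr{H}(t) + \nabla_q W \cdot \mathcal{A}_n$ already known to govern $a^0$ (equation (\ref{eq:equation_for_a})), and that the source terms proportional to $a^0$ and its derivatives of orders one through three combine into the compact form displayed on the right-hand side of (\ref{eq:equation_for_a_1}), with all spurious contributions vanishing by some combination of the normalization identity, self-adjointness, and the leading-order envelope equation. A convenient sanity check is that setting $V \equiv 0$ collapses all Berry-connection terms, since $\mathcal{A}_n$ vanishes identically when $\chi_n$ is constant in $z$, reducing (\ref{eq:equation_for_a_1}) to the first-order correction equation familiar from Hagedorn's analysis of Schr\"odinger equations without a periodic background.
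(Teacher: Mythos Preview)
Your overall strategy---project against $\chi_n$ and invoke the identities (\ref{eq:first_derivative_identities})--(\ref{eq:kkk_identity})---is the paper's approach, and your handling of the cubic $E_n$ and $W$ sources is correct. But two points in your sketch are off.

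First, a misattribution: the source term $\de_{q_\beta}\de_{q_\gamma}W\,\mathcal{A}_{n,\beta}\,y_\gamma\,a^0$ does \emph{not} come from the resolvent piece of $f^2$. When $[(p-i\nabla_z)-\nabla_pE_n]\cdot(-i\nabla_y)$ hits that resolvent piece, the only $y$-dependence is through $a^0$, so one gets a contribution proportional to $(-i\de_{y_\gamma})a^0$ with a coefficient involving only \emph{first} $q$-derivatives of $W$---no $y_\gamma$ factor and no second $q$-derivative can appear. The $y_\gamma$ Berry term instead arises from the operator $-\bigl[\tfrac12(-i\nabla_y)^2+\tfrac12 y\cdot D^2_qW\,y-i\de_t\bigr]$ acting on the $\nabla_p\chi_n$ part of $f^1$: the commutator of the quadratic potential with the $(-i\de_{y_\gamma})$ already present in $f^1$ produces $-i\,\de_{q_\alpha}\de_{q_\beta}W\,y_\alpha a^0\,\de_{p_\beta}\chi_n$, and projecting onto $\chi_n$ yields the term in question.

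Second, and more substantively, you gloss over the actual crux of the lemma. After collecting all contributions, the coefficient of $(-i\de_{y_\gamma})a^0$ is not manifestly $\de_{q_\beta}W\,\de_{p_\gamma}\mathcal{A}_{n,\beta}$; it is a sum of three inner products (one each from $i\de_t(\nabla_p\chi_n)$ in $f^1$, from the Berry-phase term in the $a^0$-equation feeding back through $f^1$, and from the resolvent piece of $f^2$) of the form
\[
i\,\de_{q_\beta}W\Bigl(\ipz{\chi_n}{\de_{p_\beta}\de_{p_\gamma}\chi_n}-\ipz{\chi_n}{\de_{p_\gamma}\chi_n}\ipz{\chi_n}{\de_{p_\beta}\chi_n}+\text{(resolvent term)}\Bigr).
\]
Reducing this to $\de_{q_\beta}W\,\de_{p_\gamma}\mathcal{A}_{n,\beta}$ requires a genuine cancellation: add and subtract $\ipz{\de_{p_\gamma}\chi_n}{\de_{p_\beta}\chi_n}$ to split off $\de_{p_\gamma}\mathcal{A}_{n,\beta}$; use self-adjointness of $(p-i\de_z)_\gamma-\de_{p_\gamma}E_n$ and of $[H-E_n]^{-1}P^\perp_n$, together with (\ref{eq:first_derivative_identities}) and the fact that $P^\perp_n$ commutes with $H-E_n$, to rewrite the resolvent term as $\ipz{P^\perp_n\de_{p_\gamma}\chi_n}{\de_{p_\beta}\chi_n}$; and finally use the normalization identity $\overline{\ipz{\chi_n}{\de_{p_\gamma}\chi_n}}=-\ipz{\chi_n}{\de_{p_\gamma}\chi_n}$ to see that the three leftover pieces assemble into $\ipz{\de_{p_\gamma}\chi_n-P^\perp_n\de_{p_\gamma}\chi_n-\ipz{\chi_n}{\de_{p_\gamma}\chi_n}\chi_n}{\,\de_{p_\beta}\chi_n}=0$ by definition of $P^\perp_n$. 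Your phrase ``once $\mathcal{A}_n$ is differentiated with respect to $p$ using the product rule'' does not capture this mechanism, and without it the bookkeeping you describe will not close.
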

Taking $a^1(y,t) = b(y,t) e^{i \phi_B(t)}$ and matching with the initial data implies equation (\ref{eq:first_order_envelope_equation}) for $b(y,t)$. The solution of (\ref{eq:equation_for_f_3}) is then given by (\ref{eq:general_solution_f_j}):
\begin{equation} \label{eq:f_3}
	f^3(y,z,t) = a^3(y,t) \chi_n(z;p(t)) + u^3(y,z,t) + \left[ H(p(t)) - E_n(p(t)) \right]^{-1} P^\perp_n(p(t)) \tilde{\xi}^3(y,z,t) 
\end{equation}
where $\tilde{\xi}^3(y,z,t)$ is given by (\ref{eq:xi_3}) and $u^3(y,z,t)$ by (\ref{eq:u_3}). $a^3(y,t)$ is an arbitrary function in $\Sigma^{R-3}_y(\field{R}^d)$ to be fixed at higher order in the expansion. Note that all manipulations so far are valid as long as $R \geq 3$.  

\begin{proof}[Proof of Lemma \ref{lem:xi_3}]
Adding and subtracting terms using smoothness of the band $E_n(p)$ in a neighborhood of $p(t)$ (Assumption \ref{isolated_band_assumption_nonseparable_case}) we can re-write $\xi^3(y,z,t)$ (\ref{eq:equation_for_f_3}) as:
\begin{equation} \label{eq:xi_3_to_simplify}
\begin{split}
	&\xi^3(y,z,t) = - \left[ \frac{1}{6} \de_{p_\alpha} \de_{p_\beta} \de_{p_\gamma} E_n(p(t)) (- i \de_{y_\alpha})(- i \de_{y_\beta})(- i \de_{y_\gamma}) + \frac{1}{6} \de_{q_\alpha} \de_{q_\beta} \de_{q_\gamma} W(q(t)) y_\alpha y_\beta y_\gamma \right] f^0(y,z,t)  \\
	&- \left[ - \frac{1}{6} \de_{p_\alpha} \de_{p_\beta} \de_{p_\gamma} E_n(p(t)) (- i \de_{y_\alpha})(- i \de_{y_\beta})(- i \de_{y_\gamma}) \right] f^0(y,z,t) \\
	&- \left[ \frac{1}{2} \de_{p_\alpha} \de_{p_\beta} E_n(p(t)) (- i \de_{y_\alpha})(- i \de_{y_\beta}) + \frac{1}{2} \de_{q_\alpha} \de_{q_\beta} W(q(t)) y_\alpha y_\beta - i \de_t \right] f^1(y,z,t) \\
	&- \left[ \frac{1}{2} \left( \delta_{\alpha \beta} - \de_{p_\alpha} \de_{p_\beta} E_n(p(t)) \right) (- i \de_{y_\alpha})(- i \de_{y_\beta}) \right] f^1(y,z,t)  \\
	&- \left[ \vphantom{\frac{1}{2}} \left( (p(t) - i \nabla_z) - \nabla_p E_n(p(t)) \right) \cdot (- i \nabla_y) \right] f^2(y,z,t) \\
\end{split}
\end{equation}
Substituting the forms of $f^0(y,z,t)$ (\ref{eq:leading_order_solution}), $f^1(y,z,t)$ (\ref{eq:first_order_solution}), and $f^2(y,z,t)$ (\ref{eq:second_order_solution}) gives a very long expression on the right-hand side. We simplify this expression by treating terms which depend on $a^2(y,t), a^1(y,t), a^0(y,t)$ in turn. 
\newline \underline{Contributions to (\ref{eq:xi_3_to_simplify}) depending on $a^2(y,t)$.} There is one term which depends on $a^2(y,t)$: 
\begin{equation}
	- (- i \nabla_y) a^2(y,t) \cdot \left( (p(t) - i \nabla_z) - \nabla_p E_n(p(t)) \right) \chi_n(z;p(t)) \\
\end{equation}
which can be simplified using (\ref{eq:first_derivative_identities}):
\begin{equation} \label{eq:a_2_term}
\begin{split}
	- (- i \nabla_y) a^2(y,t) \cdot \left( (p(t) - i \nabla_z) - \nabla_p E_n(p(t)) \right) \chi_n(z;p(t)) \\
	= \left[ H(p(t)) - E_n(p(t)) \right] \left[ (- i \nabla_y) a^2(y,t) \cdot \nabla_p \chi_n(z;p(t)) \right].
\end{split}
\end{equation}
\underline{Contributions to (\ref{eq:xi_3_to_simplify}) depending on $a^1(y,t)$.} The terms which depend on $a^1(y,t)$ are as follows:
\begin{equation}
\begin{split}
	&- \left[ \frac{1}{2} \de_{p_\alpha} \de_{p_\beta} E_n(p(t)) (- i \de_{y_\alpha})(- i \de_{y_\beta}) a^1(y,t) + \frac{1}{2} \de_{q_\alpha} \de_{q_\beta} W(q(t)) y_\alpha y_\beta a^1(y,t) \right. \\
	&\left. \vphantom{\frac{1}{2}} - i \de_t a^1(y,t) \right] \chi_n(z;p(t)) + i \dot{p}(t) \cdot \nabla_p \chi_n(z;p(t)) a^1(y,t)   	\\
	&- (- i \de_{y_\alpha})(- i \de_{y_\beta}) a^1(y,t) \frac{1}{2} \left( \delta_{\alpha \beta} - \de_{p_\alpha} \de_{p_\beta} E_n(p(t)) \right) \chi_n(z;p(t))   \\
	&- (- i \de_{y_\alpha})(- i \de_{y_\beta}) a^1(y,t) \left( (p(t) - i \de_z)_\alpha - \de_p E_n(p(t))_\alpha \right) \de_{p_\beta} \chi_n(z;p(t)).
\end{split}
\end{equation} 
Note that these terms have an identical form to the terms depending on $a^0(y,t)$ in expression (\ref{eq:xi_2_to_simplify}) for $\xi^2(y,z,t)$ which were simplified to the form (\ref{eq:xi_2_simplified}). We may therefore manipulate these terms in an identical way (specifically, using (\ref{eq:classical_system}), (\ref{eq:kk_identity})) into the form: 
\begin{equation} \label{eq:a_1_terms}
\begin{split}
	&= \left[ i \de_t a^1(y,t) - \frac{1}{2} \de_{p_\alpha} \de_{p_\beta} E_n(p(t)) (- i \de_{y_\alpha})(- i \de_{y_\beta}) a^1(y,t) - \frac{1}{2} \de_{q_\alpha} \de_{q_\beta} W(q(t)) y_\alpha y_\beta a^1(y,t) \right. \\
	&\left. \vphantom{\frac{1}{2}} - \nabla_q W(q(t)) \cdot \mathcal{A}_n(p(t)) \right] \chi_n(z;p(t)) + P^\perp_n(p(t)) \left[ - i a^1(y,t) \nabla_q W(q(t)) \cdot \nabla_p \chi_n(z;p(t)) \right] 	\\
	&+ \left[ H(p(t)) - E_n(p(t)) \right] \left[ \frac{1}{2} (- i \de_{y_\alpha})(- i \de_{y_\beta}) a^1(y,t) \de_{p_\alpha} \de_{p_\beta} \chi_n(z;p(t)) \right]	
\end{split}
\end{equation}
\underline{Contributions to (\ref{eq:xi_3_to_simplify}) depending on $a^0(y,t)$.} The terms which depend on $a^0(y,t)$ may be written as ${T}_1 + {T}_2 + {T}_3 + T_4$ where: 
\begin{equation} \label{eq:T_1}
\begin{split}
	&{T}_1 := - \left[ \frac{1}{6} \de_{p_\alpha} \de_{p_\beta} \de_{p_\gamma} E_n(p(t)) (- i \de_{y_\alpha})(- i \de_{y_\beta})(- i \de_{y_\gamma}) a^0(y,t) \right. \\
	&\left. + \frac{1}{6} \de_{q_\alpha} \de_{q_\beta} \de_{q_\gamma} W(q(t)) y_\alpha y_\beta y_\gamma a^0(y,t) \right] \chi_n(z;p(t))  
\end{split}
\end{equation}
\begin{equation}
\begin{split}
	&{T}_2 := - (- i \de_{y_\alpha})(- i \de_{y_\beta})(- i \de_{y_\gamma}) a^0(y,t) \left[ - \frac{1}{6} \de_{p_\alpha} \de_{p_\beta} \de_{p_\gamma} E_n(p(t)) \chi_n(z;p(t)) \right]	\\
	&- (- i \de_{y_\alpha})(- i \de_{y_\beta})(- i \de_{y_\gamma}) a^0(y,t) \left[ \frac{1}{2} \left( \delta_{\alpha \beta} - \de_{p_\alpha} \de_{p_\beta} E_n(p(t)) \right) \de_{p_\gamma} \chi_n(z;p(t))  \right]	\\
	&- (- i \de_{y_\alpha})(- i \de_{y_\beta})(- i \de_{y_\gamma}) a^0(y,t) \left[ \frac{1}{2} \left( (p(t) - i \de_z)_{\alpha} - \de_{p_\alpha} E_n(p(t)) \right) \de_{p_\beta} \de_{p_\gamma} \chi_n(z;p(t)) \right]		
\end{split}
\end{equation}
\begin{equation}
	{T}_3 := - \left[ \frac{1}{2} \de_{p_\alpha} \de_{p_\beta} E_n(p(t)) (- i \de_{y_\alpha})(- i \de_{y_\beta}) + \frac{1}{2} \de_{q_\alpha} \de_{q_\beta} W(q(t)) y_\alpha y_\beta - i \de_t \right] (- i \de_{y_\gamma}) a^0(y,t) \de_{p_\gamma} \chi_n(z;p(t)) 	
\end{equation}
\begin{equation} \label{eq:T_4}
	T_4 := i \left( (p(t) - i \de_z)_\beta - \de_{p_\beta} E_n(p(t)) \right) \de_{q_\gamma} W(q(t)) (- i \de_{y_\beta}) a^0(y,t) [H(p(t)) - E_n(p(t))]^{-1} P^\perp_n(p(t)) \de_{p_\gamma} \chi_n(z;p(t)) 
\end{equation}
Using (\ref{eq:kkk_identity}) and the equality of mixed partial derivatives, we can simplify $T_2$: 
\begin{equation} \label{eq:T_2}
	T_2 = \left[ H(p(t)) - E_n(p(t)) \right] \left[ \frac{1}{6} (- i \de_{y_\alpha})(- i \de_{y_\beta})(- i \de_{y_\gamma}) a^0(y,t) \de_{p_\alpha} \de_{p_\beta} \de_{p_\gamma} \chi_n(z;p(t)) \right].
\end{equation}
We can simplify $T_3$ using the evolution equation for $a^0(y,t)$ (\ref{eq:equation_for_a}): 
\begin{equation} \label{eq:terms_simplified_a_bit}
\begin{split}
	&T_3 = - \left[ \frac{1}{2} \de_{p_\alpha} \de_{p_\beta} E_n(p(t)) (- i \de_{y_\alpha})(- i \de_{y_\beta}) + \frac{1}{2} \de_{q_\alpha} \de_{q_\beta} W(q(t)) y_\alpha y_\beta \right] (- i \de_{y_\gamma}) a^0(y,t) \de_{p_\gamma} \chi_n(z;p(t)) 		\\
	&+ (- i \de_{y_\gamma}) \left[ i \de_t a^0(y,t) \right] \de_{p_\gamma} \chi_n(z;p(t)) + (- i \de_{y_\gamma}) a^0(y,t) i \dot{p}_\beta(t) \de_{p_\beta} \de_{p_\gamma} \chi_n(z;p(t))	\\
	&= - \frac{1}{2} \de_{q_\alpha} \de_{q_\beta} W(q(t)) y_\alpha y_\beta (- i \de_{y_\gamma}) a^0(y,t) \de_{p_\gamma} \chi_n(z;p(t)) + \frac{1}{2} \de_{q_\alpha} \de_{q_\beta} W(q(t)) (- i \de_{y_\gamma}) y_\alpha y_\beta a^0(y,t) \de_{p_\gamma} \chi_n(z;p(t)) 	\\
	&+ \de_{q_\beta} W(q(t)) \mathcal{A}_{n,\beta}(p(t)) (- i \de_{y_\gamma}) a^0(y,t) \de_{p_\gamma} \chi_n(z;p(t)) - i \de_{q_\beta} W(q(t)) (- i \de_{y_\gamma}) a^0(y,t) \de_{p_\beta} \de_{p_\gamma} \chi_n(z;p(t)).
\end{split}
\end{equation}
We now write $T_3 = T_{3,1} + T_{3,2}$ where:
\begin{equation} \label{eq:T_3_1_to_simplify}
	T_{3,1} := - \frac{1}{2} \de_{q_\alpha} \de_{q_\beta} W(q(t)) y_\alpha y_\beta (- i \de_{y_\gamma}) a^0(y,t) \de_{p_\gamma} \chi_n(z;p(t)) + \frac{1}{2} \de_{q_\alpha} \de_{q_\beta} W(q(t)) (- i \de_{y_\gamma}) y_\alpha y_\beta a^0(y,t) \de_{p_\gamma} \chi_n(z;p(t)) 	
\end{equation}
\begin{equation} \label{eq:T_3_2}
	T_{3,2} := \de_{q_\beta} W(q(t)) \mathcal{A}_{n,\beta}(p(t)) (- i \de_{y_\gamma}) a^0(y,t) \de_{p_\gamma} \chi_n(z;p(t)) - i \de_{q_\beta} W(q(t)) (- i \de_{y_\gamma}) a^0(y,t) \de_{p_\beta} \de_{p_\gamma} \chi_n(z;p(t)).
\end{equation}
We can simplify $T_{3,1}$ as follows. We first re-arrange (\ref{eq:T_3_1_to_simplify}): 
\begin{equation}
	T_{3,1} = \left( (- i \de_{y_\gamma}) y_\alpha y_\beta - y_\alpha y_\beta (- i \de_{y_\gamma})  \right) \frac{1}{2} \de_{q_\alpha} \de_{q_\beta} W(q(t)) a^0(y,t) \de_{p_\gamma} \chi_n(z;p(t)).
\end{equation}
Using the identity: $(- i \de_{y_\alpha}) y_\beta - y_\beta (- i \de_{y_\alpha}) = - i \delta_{\alpha \beta}$ twice we have that:
\begin{equation}
	(- i \de_{y_\gamma}) y_\alpha y_\beta - y_\alpha y_\beta (- i \de_{y_\gamma}) = (- i \delta_{\alpha \gamma}) y_\beta + (- i \delta_{\beta \gamma}) y_\alpha. 
\end{equation}
Using the symmetry $\de_{q_{\alpha}} \de_{q_\beta} W(q(t)) = \de_{q_\beta} \de_{q_\alpha} W(q(t))$ we have that:
\begin{equation} \label{eq:T_3_1}
	T_{3,1} = - i \de_{q_\alpha} \de_{q_\beta} W(q(t)) y_\alpha a^0(y,t) \de_{p_\beta} \chi_n(z;p(t)). 
\end{equation}
Summing $T_1 + T_2 + T_{3,1} + T_{3,2} + T_4$ (\ref{eq:T_1}) (\ref{eq:T_2}) (\ref{eq:T_3_1}) (\ref{eq:T_3_2}) (\ref{eq:T_4}), we have that the terms which depend on $a^0(y,t)$ in (\ref{eq:xi_3_to_simplify}) are equal to:
\begin{equation} \label{eq:a_0_terms}
\begin{split}
	&- \left[ \frac{1}{6} \de_{p_\alpha} \de_{p_\beta} \de_{p_\gamma} E_n(p(t)) (- i \de_{y_\alpha})(- i \de_{y_\beta})(- i \de_{y_\gamma}) a^0(y,t) + \frac{1}{6} \de_{q_\alpha} \de_{q_\beta} \de_{q_\gamma} W(q(t)) y_\alpha y_\beta y_\gamma a^0(y,t) \right] \chi_n(z;p(t))  \\
	&+ \left[ H(p(t)) - E_n(p(t)) \right] \left[\frac{1}{6} (- i \de_{y_\alpha})(- i \de_{y_\beta})(- i \de_{y_\gamma}) a^0(y,t) \de_{p_\alpha} \de_{p_\beta} \de_{p_\gamma} \chi_n(z;p(t)) \right]	\\
	&- i \de_{q_\alpha} \de_{q_\beta} W(q(t)) y_\alpha a^0(y,t) \de_{p_\beta} \chi_n(z;p(t))	\\
	&+ \de_{q_\beta} W(q(t)) \mathcal{A}_{n,\beta}(p(t)) (- i \de_{y_\gamma}) a^0(y,t) \de_{p_\gamma} \chi_n(z;p(t)) - i \de_{q_\beta} W(q(t)) (- i \de_{y_\gamma}) a^0(y,t) \de_{p_\beta} \de_{p_\gamma} \chi_n(z;p(t))	\\
	&+ i \left( (p(t) - i \de_z)_\beta - \de_{p_\beta} E_n(p(t)) \right) \de_{q_\gamma} W(q(t)) (- i \de_{y_\beta}) a^0(y,t) [H(p(t)) - E_n(p(t))]^{-1} P^\perp_n(p(t)) \de_{p_\gamma} \chi_n(z;p(t)).
\end{split}
\end{equation}
By adding and subtracting terms and using the definition of $\mathcal{A}_n(p(t))$ (\ref{eq:berry_connection}) we can put (\ref{eq:a_0_terms}) into the form:
\begin{equation}
\begin{split}
	&\left[  i \de_{q_\beta} W(q(t)) \left( \vphantom{\frac{1}{2}} \ipz{\chi_n(z;p(t))}{\de_{p_\beta} \chi_n(z;p(t))} \ipz{\chi_n(z;p(t))}{\de_{p_\gamma} \chi_n(z;p(t))} \right. \right.	\\
	&+ \ipz{\chi_n(z;p(t))}{\left( (p(t) - i \de_z)_\gamma- \de_{p_\gamma} E_n(p(t)) \right) [H(p(t)) - E_n(p(t))]^{-1} P^\perp_n(p(t)) \de_{p_\beta} \chi_n(z;p(t))}  	\\
	&\left. \vphantom{\frac{1}{2}} - \ipz{\chi_n(z;p(t))}{\de_{p_\beta} \de_{p_\gamma} \chi_n(z;p(t))} \right)(- i \de_{y_\gamma}) a^0(y,t) - \de_{q_\alpha} \de_{q_\beta} W(q(t)) \mathcal{A}_{n,\beta}(p(t)) y_\alpha a^0(y,t)	\\
	&\left. - \frac{1}{6} \de_{p_\alpha} \de_{p_\beta} \de_{p_\gamma} E_n(p(t)) (- i \de_{y_\alpha})(- i \de_{y_\beta})(- i \de_{y_\gamma}) a^0(y,t) - \frac{1}{6} \de_{q_\alpha} \de_{q_\beta} \de_{q_\gamma} W(q(t)) y_\alpha y_\beta y_\gamma a^0(y,t) \right] \chi_n(z;p(t))   			\\
	&+ P^\perp_n(p(t)) \left[ \vphantom{\frac{1}{2}} - i \de_{q_\alpha} \de_{q_\beta} W(q(t)) y_\alpha a^0(y,t) \de_{p_\beta} \chi_n(z;p(t)) \right.	\\
	&+ \de_{q_\beta} W(q(t)) \mathcal{A}_{n,\beta}(p(t)) (- i \de_{y_\gamma}) a^0(y,t) \de_{p_\gamma} \chi_n(z;p(t)) - i \de_{q_\beta} W(q(t)) (- i \de_{y_\gamma}) a^0(y,t) \de_{p_\beta} \de_{p_\gamma} \chi_n(z;p(t))	\\
	&\left. \vphantom{\frac{1}{2}} + i \left( (p(t) - i \de_z)_\beta - \de_{p_\beta} E_n(p(t)) \right) \de_{q_\gamma} W(q(t)) (- i \de_{y_\beta}) a^0(y,t) [H(p(t)) - E_n(p(t))]^{-1} P^\perp_n(p(t)) \de_{p_\gamma} \chi_n(z;p(t)) \right]	\\
	&+ \left[ H(p(t)) - E_n(p(t)) \right] \left[\frac{1}{6} (- i \de_{y_\alpha})(- i \de_{y_\beta})(- i \de_{y_\gamma}) a^0(y,t) \de_{p_\alpha} \de_{p_\beta} \de_{p_\gamma} \chi_n(z;p(t)) \right]	
\end{split}
\end{equation}
Adding (\ref{eq:a_2_term}), (\ref{eq:a_1_terms}) and (\ref{eq:a_0_terms}) we have that $\xi^3(y,z,t)$ can be decomposed as in (\ref{eq:form_of_xi_3}) where $u^3(y,z,t)$ is given by (\ref{eq:u_3}) and $\tilde{\xi}^3(y,z,t)$ is equal to:
\begin{equation} \label{eq:xi_3}
\begin{split}
	&\tilde{\xi}^3(y,z,t) = \left[ i \de_t a^1(y,t) - \frac{1}{2} \de_{p_\alpha} \de_{p_\beta} E_n(p(t)) (- i \de_{y_\alpha})(- i \de_{y_\beta}) a^1(y,t) - \frac{1}{2} \de_{q_\alpha} \de_{q_\beta} W(q(t)) y_\alpha y_\beta a^1(y,t) \right. \\
	&\vphantom{\frac{1}{2}} - \nabla_q W(q(t)) \cdot \mathcal{A}_n(p(t)) + i \de_{q_\beta} W(q(t)) \left( \vphantom{\frac{1}{2}} \ipz{\chi_n(z;p(t))}{\de_{p_\beta} \chi_n(z;p(t))} \ipz{\chi_n(z;p(t))}{\de_{p_\gamma} \chi_n(z;p(t))} \right.	\\
	&+ \ipz{\chi_n(z;p(t))}{\left( (p(t) - i \de_z)_\gamma- \de_{p_\gamma} E_n(p(t)) \right) [H(p(t)) - E_n(p(t))]^{-1} P^\perp_n(p(t)) \de_{p_\beta} \chi_n(z;p(t))}  	\\
	&\left. \vphantom{\frac{1}{2}} - \ipz{\chi_n(z;p(t))}{\de_{p_\beta} \de_{p_\gamma} \chi_n(z;p(t))} \right)(- i \de_{y_\gamma}) a^0(y,t) - \de_{q_\alpha} \de_{q_\beta} W(q(t)) \mathcal{A}_{n,\beta}(p(t)) y_\alpha a^0(y,t)	\\
	&\left. - \frac{1}{6} \de_{p_\alpha} \de_{p_\beta} \de_{p_\gamma} E_n(p(t)) (- i \de_{y_\alpha})(- i \de_{y_\beta})(- i \de_{y_\gamma}) a^0(y,t) - \frac{1}{6} \de_{q_\alpha} \de_{q_\beta} \de_{q_\gamma} W(q(t)) y_\alpha y_\beta y_\gamma a^0(y,t) \right] \chi_n(z;p(t))   			\\
	&+ P^\perp_n(p(t)) \left[ \vphantom{\frac{1}{2}} - i a^1(y,t) \nabla_q W(q(t)) \cdot \nabla_p \chi_n(z;p(t)) - i \de_{q_\alpha} \de_{q_\beta} W(q(t)) y_\alpha a^0(y,t) \de_{p_\beta} \chi_n(z;p(t)) \right.	\\
	&+ \de_{q_\beta} W(q(t)) \mathcal{A}_{n,\beta}(p(t)) (- i \de_{y_\gamma}) a^0(y,t) \de_{p_\gamma} \chi_n(z;p(t)) - i \de_{q_\beta} W(q(t)) (- i \de_{y_\gamma}) a^0(y,t) \de_{p_\beta} \de_{p_\gamma} \chi_n(z;p(t))	\\
	&\left. \vphantom{\frac{1}{2}} + i \left( (p(t) - i \de_z)_\beta - \de_{p_\beta} E_n(p(t)) \right) \de_{q_\gamma} W(q(t)) (- i \de_{y_\beta}) a^0(y,t) [H(p(t)) - E_n(p(t))]^{-1} P^\perp_n(p(t)) \de_{p_\gamma} \chi_n(z;p(t)) \right]	\\
\end{split}
\end{equation}
\end{proof}
\begin{proof}[Proof of Lemma \ref{lem:a_1}]
Imposing the orthogonality condition (\ref{eq:general_solvability_condition}) with $j = 3$ on $\tilde{\xi}^3(y,z,t)$ given by (\ref{eq:xi_3}) we obtain: 
\begin{equation} 
\begin{split}
	&i \de_t a^1(y,t) = \frac{1}{2} \de_{p_\alpha} \de_{p_\beta} E_n(p(t)) (- i \de_{y_\alpha})(- i \de_{y_\beta}) a^1(y,t) + \frac{1}{2} \de_{q_\alpha} \de_{q_\beta} W(q(t)) y_\alpha y_\beta a^1(y,t)  	\\
	&+ \nabla_q W(q(t)) \cdot \mathcal{A}_n(p(t)) a^1(y,t)	\\
	&+ \frac{1}{6} \de_{p_\alpha} \de_{p_\beta} \de_{p_\gamma} E_n(p(t)) (- i \de_{y_\alpha})(- i \de_{y_\beta})(- i \de_{y_\gamma}) a^0(y,t) + \frac{1}{6} \de_{q_\alpha} \de_{q_\beta} \de_{q_\gamma} W(q(t)) y_\alpha y_\beta y_\gamma a^0(y,t) 	\\
	&+ \kappa_\gamma(t) (- i \de_{y_\gamma}) a^0(y,t) + \de_{q_\beta} \de_{q_\gamma} W(q(t)) \mathcal{A}_{n,\beta}(p(t)) y_\gamma a^0(y,t)
\end{split}
\end{equation}
which is precisely (\ref{eq:equation_for_a_1}) with the coefficient multiplying $(- i \de_{y_\gamma}) a^0(y,t)$ replaced by:
\begin{equation} \label{eq:coefficient}
\begin{split}
	&\kappa_\gamma(t) := i \de_{q_\beta} W(q(t)) \left( \ipz{\chi_n(z;p(t))}{ \de_{p_\beta} \de_{p_\gamma} \chi_n(z;p(t)) } \vphantom{\frac{1}{2}} \right. \\
	&- \ipz{\chi_n(z;p(t))}{ \de_{p_\gamma} \chi_n(z;p(t)) } \ipz{\chi_n(z;p(t))}{ \de_{p_\beta} \chi_n(z;p(t)) } 	\\
	&\left.\vphantom{\frac{1}{2}}  - i \ipz{\chi_n(z;p(t))}{ \left( (p(t) - i \de_z)_\gamma- \de_{p_\gamma} E_n(p(t)) \right) [H(p(t)) - E_n(p(t))]^{-1} P^\perp_n(p(t)) \de_{p_\beta} \chi_n(z;p(t)) } \right).
\end{split}
\end{equation}
We claim that:
\begin{equation} \label{eq:claim}
	\kappa_\gamma(t) = \de_{q_\beta} W(q(t)) \de_{p_\gamma} \mathcal{A}_{n,\beta}(p(t)).
\end{equation}
Adding and subtracting $i \de_{q_\beta} W(q(t)) \ipz{\de_{p_\gamma} \chi_n(z;p(t))}{\de_{p_\beta} \chi_n(z;p(t))}$ in (\ref{eq:coefficient}), we have that:
\begin{equation}
	\kappa_\gamma(t) = \de_{q_\beta} W(q(t)) \de_{p_\gamma} \mathcal{A}_{n,\beta}(p(t)) + \tilde{\kappa}_\gamma(t)		
\end{equation}
where:
\begin{equation} \label{eq:tilde_kappa}
\begin{split}
	&\tilde{\kappa}_\gamma(t) := i \de_{q_\beta} W(q(t)) \left( - \ipz{\de_{p_\gamma} \chi_n(z;p(t))}{\de_{p_\beta} \chi_n(z;p(t))} \vphantom{\frac{1}{2}} \right. \\
	&- \ipz{\chi_n(z;p(t))}{ \de_{p_\gamma} \chi_n(z;p(t)) } \ipz{\chi_n(z;p(t))}{ \de_{p_\beta} \chi_n(z;p(t)) }  	\\
	&\left.\vphantom{\frac{1}{2}}  - i \ipz{\chi_n(z;p(t))}{ \left( (p(t) - i \de_z)_\gamma- \de_{p_\gamma} E_n(p(t)) \right) [H(p(t)) - E_n(p(t))]^{-1} P^\perp_n(p(t)) \de_{p_\beta} \chi_n(z;p(t)) } \right).
\end{split}	
\end{equation} 
Using self-adjointness of the operators:
\begin{equation}
	(p(t) - i \de_z)_\gamma - \de_{p_\gamma} E_n(p(t)), [H(p(t)) - E_n(p(t))]^{-1} P^\perp_n(p(t)) 
\end{equation}
on $L^2_{per}$ for each $t \geq 0$, and then identity (\ref{eq:first_derivative_identities}) we have that the last term in (\ref{eq:tilde_kappa}) is equal to:
\begin{equation} \label{eq:long_inner_product_term}
\begin{split}
	&- \ipz{ [H(p(t)) - E_n(p(t))]^{-1} P^\perp_n(p(t)) \left( (p(t) - i \de_z)_\gamma- \de_{p_\gamma} E_n(p(t)) \right) \chi_n(z;p(t))}{ \de_{p_\beta} \chi_n(z;p(t)) }	\\
	&= \ipz{ [H(p(t)) - E_n(p(t))]^{-1} P^\perp_n(p(t)) [H(p(t)) - E_n(p(t))] \de_{p_\gamma} \chi_n(z;p(t))}{ \de_{p_\beta} \chi_n(z;p(t)) }. 
\end{split}
\end{equation}
It is clear that the operators $P^\perp_n(p(t)), [H(p(t)) - E_n(p(t))]$ commute on $L^2_{per}$ for any $t \geq 0$. We have therefore that this term: 
\begin{equation} \label{eq:long_inner_product_term_again}
	= \ipz{ P^\perp_n(p(t)) \de_{p_\gamma} \chi_n(z;p(t))}{ \de_{p_\beta} \chi_n(z;p(t)) }.
\end{equation}
Substituting (\ref{eq:long_inner_product_term_again}) into (\ref{eq:tilde_kappa}) we obtain:
\begin{equation} \label{eq:kappa_again_again}
\begin{split}
	&\tilde{\kappa}_\gamma(t) = i \de_{q_\beta} W(q(t)) \left[ - \ipz{\de_{p_\gamma} \chi_n(z;p(t))}{ \de_{p_\beta} \chi_n(z;p(t)) } \vphantom{\frac{1}{2}} \right. \\
	&- \ipz{\chi_n(z;p(t))}{ \de_{p_\gamma} \chi_n(z;p(t)) } \ipz{\chi_n(z;p(t))}{\de_{p_\beta} \chi_n(z;p(t))}  	\\
	&\left. + \ipz{ P^\perp_n(p(t)) \de_{p_\gamma} \chi_n(z;p(t))}{ \de_{p_\beta} \chi_n(z;p(t)) } \right].
\end{split}
\end{equation}
Recall that $\chi_n(z;p(t))$ is assumed normalized: $\ipz{\chi_n(z;p(t))}{\chi_n(z;p(t))} = 1$. Differentiating this relation with respect to $p$ and using the definition of the $L^2$-inner product we obtain:
\begin{equation} \label{eq:an_ident}
	\overline{ \ipz{\chi_n(z;p(t))}{\de_{p_\gamma} \chi_n(z;p(t))} } = - \ipz{\chi_n(z;p(t))}{\de_{p_\gamma} \chi_n(z;p(t))}. 
\end{equation}
We now use conjugate linearity of the $L^2$-inner product in its first argument and the identity (\ref{eq:an_ident}) to re-write the expression inside the square brackets in (\ref{eq:kappa_again_again}) as:
\begin{equation}
\begin{split}
	&- \ipz{\de_{p_\gamma} \chi_n(z;p(t))}{ \de_{p_\beta} \chi_n(z;p(t)) }  \\
	&+ \ipz{ \ipz{\chi_n(z;p(t))}{ \de_{p_\gamma} \chi_n(z;p(t)) } \chi_n(z;p(t))}{\de_{p_\beta} \chi_n(z;p(t))}  	\\
	&+ \ipz{ P^\perp_n(p(t)) \de_{p_\gamma} \chi_n(z;p(t))}{ \de_{p_\beta} \chi_n(z;p(t)) } 
\end{split}
\end{equation}
which is clearly zero by definition of the orthogonal projection operator $P^\perp_n(p(t))$ (\ref{eq:def_of_P_perp}). (\ref{eq:tilde_kappa}) is therefore zero, and the claim (\ref{eq:claim}) holds. 
\end{proof}

\section{Proof of $L^\infty$ bounds on $z$-dependence of residual, uniform in $p \in S_n$} \label{app:uniform_bounds_on_z_dependence}
In this Appendix we provide details on how to bound the $z$-dependence of terms which appear in the residual (\ref{eq:r_uniform_and_L_2_norms}) in $L^\infty_z$, uniformly in $p \in S_n$, where:
\begin{equation} \label{eq:S_n_again}
	S_n := \{ p \in \field{R}^d : \inf_{m \neq n} | E_m(p) - E_n(p) | \geq M \}, \text{ and $M > 0$}. 
\end{equation}
We consider the problem of bounding a representative term: 
\begin{equation}
\begin{split}
	&\mathcal{J}_{\alpha \beta}(p) := \| g_{\alpha \beta}(z;p) \|_{L^\infty_z(\Omega)}	\\
	&g_{\alpha \beta}(z;p) := \left[ (p_\alpha - i \de_{z_\alpha}) - \de_{p_\alpha} E_n(p) \right] [H(p) - E_n(p)]^{-1} P^\perp_n(p) \de_{p_\beta} \chi_n(z;p). 
\end{split}
\end{equation} 
uniformly in $p \in S_n$. Note that although the maps $p \mapsto E_n(p)$ are periodic with respect to the lattice $\Lambda$, the map $p \mapsto g(z;p)$ is not. We claim that:
\begin{equation} \label{eq:claim_of_appendix}
	\sup_{p \in S_n} \mathcal{J}_{\alpha \beta}(p) = \| g_{\alpha \beta}(z;p) \|_{L^\infty_z(\Omega)} < \infty. 
\end{equation}

First, we define the `shifted' Sobolev norms for any vector $p \in \field{R}^d$ to be: 
\begin{equation}
	\| f(z) \|_{H^s_{z,p}(\Omega)} := \sum_{|j| \leq s} \| (p - i \de_z)^j f(z) \|_{L^2_z(\Omega)}. 
\end{equation}
For any fixed $p$, $H^s_{z,p}$ is equivalent to the standard norm $H^s_z$. Using Sobolev embedding we have that for any integer $s > \frac{d}{2}$:
\begin{equation}
	\| g_{\alpha \beta}(z;p) \|_{L^\infty_z} = \| e^{i p z} g_{\alpha \beta}(z;p) \|_{L^\infty_z} \leq C_{s,d} \| e^{i p z} g_{\alpha \beta}(z;p) \|_{H^s_z} = C_{s,d} \| g_{\alpha \beta}(z;p) \|_{H^s_{z,p}}. 
\end{equation}
where the constant $C_{s,d} > 0$ depends on $s$ and $d$ but is independent of $p$. We are therefore done if we can show that $\| g_{\alpha \beta}(z;p) \|_{H^s_{z,p}}$ can be bounded uniformly in $p \in S_n$ for some integer $s > d/2$. 

From the definition of the $H^s_{z,p}$-norms and periodicity of $\de_{p_\alpha} E_n(p)$ we have that for any integer $s \geq 1$, $p \in S_n$:
\begin{equation} \label{eq:norm_of_p_minus_i_de_z}
	\| (p_\alpha - i \de_{z_\alpha}) - \de_{p_\alpha} E_n(p) \|_{ H^s_{z,p} \rightarrow H^{s-1}_{z,p} } \leq C', 
\end{equation}
where $C' := 1 + \sup_{p \in S_n \cap \mathcal{B}} |E_n(p)|$. By elliptic regularity, we have that for any integer $s \geq 0$, all $p \in S_n$:
\begin{equation} \label{eq:norm_of_resolvent}
	\| [H(p) - E_n(p)]^{-1} P^\perp_n(p) \|_{ H^s_{z,p} \rightarrow H^{s+2}_{z,p} } \lesssim \frac{1}{M}. 
\end{equation}
Differentiating the eigenvalue equation (\ref{eq:eigenvalue_problem}) for $\chi_n(z;p)$ with respect to $p$, we have that $P^\perp_n(p) \de_{p_\beta} \chi_n(z;p)$ satisfies:
\begin{equation}
	[H(p) - E_n(p)] P^\perp_n(p) \de_{p_\beta} \chi_n(z;p) = - P^\perp_n(p) \left[ (p_\beta - i \de_{z_\beta}) - \de_{p_\beta} E_n(p) \right] \chi_n(z;p).
\end{equation} 
Again, by elliptic regularity, for all $p \in S_n$:
\begin{equation}
	\| P^\perp_n(p) \de_{p_\beta} \chi_n(z;p) \|_{H^{s+2}_p} \lesssim \frac{1}{M} \| \left[ (p_\beta - i \de_{z_\beta}) - \de_{p_\beta} E_n(p) \right] \chi_n(z;p) \|_{H^{s}_{z,p}}.
\end{equation} 
Using (\ref{eq:norm_of_p_minus_i_de_z}) we then have for all $p \in S_n$:
\begin{equation} \label{eq:bound_on_de_p_chi}
	\| P^\perp_n(p) \de_{p_\beta} \chi_n(z;p) \|_{H^{s+2}_p} \leq \frac{C'}{M} \| \chi_n(z;p) \|_{H^{s+1}_{z,p}}. 
\end{equation}
Combining (\ref{eq:norm_of_p_minus_i_de_z}), (\ref{eq:norm_of_resolvent}), and (\ref{eq:bound_on_de_p_chi}) we have: 
\begin{align} \label{eq:big_bound}
	\| g_{\alpha \beta}(z;p) \|_{H^s_{z,p}} &\leq C' \| [H(p) - E_n(p)]^{-1} P^\perp_n(p) \de_{p_\beta} \chi_n(z;p) \|_{H^{s + 1}_{z,p}} &\text{ (using (\ref{eq:norm_of_p_minus_i_de_z}))} \nonumber	 	\\
	&\lesssim \frac{C'}{M} \| P^\perp_n(p) \de_{p_\beta} \chi_n(z;p) \|_{H^{s - 1}_{z,p}} &\text{ (using (\ref{eq:norm_of_resolvent}))} \nonumber	\\
	&\lesssim \left( \frac{C'}{M} \right)^2 \| \chi_n(z;p) \|_{H^{s-2}_{z,p}} \quad . &\text{ (using (\ref{eq:bound_on_de_p_chi}))}
\end{align}

We now claim that for any integer $s$:
\begin{equation} \label{eq:periodicity_of_function_and_norm}
	\sup_{p \in S_n} \| \chi_n(z;p) \|_{H^{s}_{z,p}} = \sup_{p \in S_n \cap \mathcal{B}} \| \chi_n(z;p) \|_{H^{s}_{z,p}} < \infty. 
\end{equation}
By elliptic regularity it is clear that for any fixed $p \in \field{R}^d$ and fixed positive integer $s$ that: 
\begin{equation}
	\| \chi_n(z;p) \|_{H^{s}_{z,p}} < \infty. 
\end{equation}
Using smoothness of the map $p \mapsto \chi_n(z;p)$ in $S_n$ and compactness of the Brillouin zone $\mathcal{B}$ we have that:
\begin{equation}
	\sup_{p \in S_n \cap \mathcal{B}} \| \chi_n(z;p) \|_{H^{s}_{z,p}} < \infty. 
\label{eq:d14}\end{equation}
Since for any reciprocal lattice vector $b \in \Lambda^*$ we have that $\chi_n(z;p + b) = e^{- i b \cdot z} \chi_n(z;p)$, we then have that: 
\begin{equation}
	\text{for any $b \in \Lambda^*$, }\| \chi_n(z;p + b) \|_{H^{s}_{z,p + b}} = \| e^{- i b \cdot z} \chi_n(z;p) \|_{H^{s}_{z,p + b}} = \| \chi_n(z;p) \|_{H^{s}_{z,p}}.
\end{equation} 
The bound (\ref{eq:periodicity_of_function_and_norm}) follows. 

We now turn to completing the proof of (\ref{eq:claim_of_appendix}). Fix $\sigma$, a positive integer such that $\sigma > \max \{ \frac{d}{2}, 2 \}$. Then: 
\begin{align*}
	\sup_{p \in S_n} \mathcal{J}(p) &= \sup_{p \in S_n} \| g_{\alpha \beta}(z;p) \|_{L^\infty_z} &\text{ (by definition)}  \\
	&\leq C_{s,d} \sup_{p \in S_n} \| g_{\alpha \beta}(z;p) \|_{H^{\sigma}_{z,p}} &\text{ (by Sobolev embedding, since $\sigma > d/2$)}  	\\
	&\leq C_{s,d} \left( \frac{C'}{M} \right)^2 \sup_{p \in S_n} \| \chi_n(z;p) \|_{H^{\sigma - 2}_{z,p}}  &\text{ (by (\ref{eq:big_bound}), with $s = \sigma$)} 	\\
	&= C_{s,d} \left( \frac{C'}{M} \right)^2 \sup_{p \in \mathcal{B} \cap S_n} \| \chi_n(z;p) \|_{H^{\sigma - 2}_{z,p}}  &\text{ (using (\ref{eq:periodicity_of_function_and_norm}) with $s = \sigma - 2$)} \\
	&< \infty. & \text{(by (\ref{eq:d14}) with $s = \sigma - 2$)}
\end{align*}
All other $z$-dependence in expression (\ref{eq:r_uniform_and_L_2_norms}) for the residual may be bounded in $L^\infty_z$ uniformly in $p \in S_n$ by similar arguments. 

\section{Proof of Lemma \ref{lem:basic_homogenization_lemma}} \label{app:proof_of_lemma}
First, it is clear from changing variables in the integral that:
\begin{equation}
	\inty{\field{R}^d}{}{ f\left(x\right) g\left(\frac{x}{\delta} + \frac{c}{\delta^2}\right) }{x} = \left( \inty{\field{R}^d}{}{ f(x) }{x} \right) \left( \inty{\Omega}{}{ g(z) }{z} \right) + O(\delta^N)
\end{equation}
is equivalent to: 
\begin{equation}
	\inty{\field{R}^d}{}{ f\left(x - \frac{c}{\delta}\right) g\left(\frac{x}{\delta}\right) }{x} = \left( \inty{\field{R}^d}{}{ f(x) }{x} \right) \left( \inty{\Omega}{}{ g(z) }{z} \right) + O(\delta^N). 
\end{equation}
Let $v_j, j \in \{1,...,d\}$ denote generators of the lattice $\Lambda$ so that if $v \in \Lambda$, there exist unique integers $n_j, j \in \{1,...,d\}$ such that:
\begin{equation}
	v = n_1 v_1 + n_2 v_2 + ... + n_d v_d. 
\end{equation}
And let $b_j, j \in \{1,...,d\}$ denote generators of the dual lattice $\Lambda^*$ such that if $b \in \Lambda^*$, there exist unique integers $m_j, j \in \{1,...,d\}$ such that:
\begin{equation}
	b = m_1 b_1 + m_2 b_2 + ... + m_d b_d,
\end{equation}
and furthermore, for all $i, j \in \{1,...,d\}$, $b_i \cdot v_j = 2 \pi \delta_{ij}$. 

Since $g(z)$ is smooth and periodic with respect to the lattice $\Lambda$, it has a uniformly convergent Fourier series:
\begin{equation} \label{eq:fourier_series_expansion}
\begin{split}
	&g(z) = \sum_{(m_1,...,m_d) \in \field{Z}^d} g_{m_1,...,m_d} e^{i \left[ m_1 b_1 \cdot z + m_2 b_2 \cdot z + ... m_d b_d \cdot z \right]} 	\\
	&g_{m_1,...,m_d} = \inty{\field{R}^d / \Lambda}{}{ e^{- i \left[ m_1 b_1 \cdot z + m_2 b_2 \cdot z + ... m_d b_d \cdot z \right]} g(z) }{z}.
\end{split}
\end{equation}
We have therefore that:
\begin{equation} \label{eq:integral_in_terms_of_fourier_series}
	\inty{\field{R}^d}{}{ f\left(x - \frac{c}{\delta}\right) g\left(\frac{x}{\delta}\right) }{x} = \sum_{(m_1,...,m_d) \in \field{Z}^d} g_{m_1,...,m_d} \inty{\field{R}^d}{}{ e^{i \left[ m_1 b_1 \cdot x / \delta + m_2 b_2 \cdot x / \delta + ... m_d b_d \cdot x / \delta \right]} f(x - c/\delta)  }{x}	
\end{equation} 
where it is valid to change the order of summation of the series with the integration by uniform convergence of the series. We now write the right-hand side of (\ref{eq:integral_in_terms_of_fourier_series}) as:
\begin{equation} \label{eq:terms_in_series_split_up}
\begin{split}
	&= g_{0,...,0} \inty{\field{R}^d}{}{ f(x - c/\delta)  }{x}	\\
	&+ \sum_{(m_1,...,m_d) \in \field{Z}^d,(m_1,...,m_d) \neq (0,...0)} g_{m_1,...,m_d} \inty{\field{R}^d}{}{ e^{i \left[ m_1 b_1 \cdot x / \delta + m_2 b_2 \cdot x / \delta + ... m_d b_d \cdot x / \delta \right]} f(x - c/\delta) }{x}	
\end{split}
\end{equation}
By the definition of $g_{0,...,0}$ (\ref{eq:fourier_series_expansion}) and by a trivial change of variables we have that:
\begin{equation}
	g_{0,...,0} \inty{\field{R}^d}{}{ f(x - c/\delta)  }{x}	= \left( \inty{\field{R}^d}{}{ f(x)  }{x} \right)\left( \inty{\Omega}{}{ g(z) }{z} \right)
\end{equation}
To see that the second term in (\ref{eq:terms_in_series_split_up}) is of $O(\delta^N)$ for arbitrary $N \in \field{N}$, consider a representative term in the series where $(m_1,...,m_d) = (1,0,...,0)$:
\begin{equation}
	\left|	g_{1,0,...0} \inty{\field{R}^d}{}{ e^{i b_1 \cdot x / \delta } f(x - c/\delta) }{x} \right| = \left| g_{1,0,...0} \inty{\field{R}^d}{}{ \left[ \left( \frac{- i \delta v_1 \cdot \nabla_x}{2 \pi} \right)^N e^{i b_1 \cdot x / \delta } \right] f(x - c/\delta) }{x} \right|	
\end{equation}
Integrating by parts gives:
\begin{equation}
\begin{split}
	&= \left| g_{1,0,...0} \inty{\field{R}^d}{}{ e^{i b_1 \cdot x / \delta } \left[ \left( \frac{i \delta v_1 \cdot \nabla_x}{2 \pi} \right)^N  f(x - c/\delta) \right] }{x} \right|	\\
	&\leq \delta^N \frac{1}{(2\pi)^N} | g_{1,0,...0} | \inty{\field{R}^d}{}{ \left| \left( v_1 \cdot \nabla_x \right)^N  f(x - c/\delta) \right| }{x} 	\\
\end{split}
\end{equation}
Using the definition of $g_{1,0,...0}$ (\ref{eq:fourier_series_expansion}) and another change of variables, we have:
\begin{equation}
	\leq \delta^N \frac{1}{(2\pi)^N} \inty{\field{R}/\Lambda}{}{ | g(z) | }{z} \inty{\field{R}^d}{}{ \left| \left( v_1 \cdot \nabla_x \right)^N  f(x) \right| }{x} 	
\end{equation}
since $f \in \mathcal{S}(\field{R}^d)$, we are done:
\begin{equation}
	\left| \inty{\field{R}^d}{}{ f\left(x + \frac{c}{\delta}\right) g\left(\frac{x}{\delta}\right) }{x} - \left( \inty{\field{R}^d}{}{ f(x)  }{x} \right)\left( \inty{\Omega}{}{ g(z) }{z} \right)	\right| \leq C_{N,f,g} \delta^N
\end{equation}
where $C_N > 0$ is a positive constant which depends on $N, f, g$ but not $\delta$. 

\section{Computation of dynamics of physical observables} \label{app:computation_of_dynamics_of_physical_observables}
In this Appendix we compute: 
\begin{equation}
\begin{split}
	&\fdf{t} \left[ \ipy{b(y,t)}{a(y,t)} + \ipy{a(y,t)}{b(y,t)} \right]	\\
	&\fdf{t} \left[ \ipy{a(y,t)}{y a(y,t)} \right], \fdf{t} \left[ \ipy{a(y,t)}{(- i \nabla_y) a(y,t)} \right]	\\
	&\fdf{t} \left[ \ipy{b(y,t)}{y a(y,t) } + \ipy{a(y,t)}{y b(y,t)} \right]	\\
	&\fdf{t} \left[ \ipy{b(y,t)}{(- i \nabla_y) a(y,t)} + \ipy{a(y,t)}{ (- i \nabla_y) b(y,t)} \right] 	\\
\end{split}
\end{equation}
We will make use of the following simple lemmas which are each elementary to prove:
\begin{lemma} \label{lem:basic_lemma}
Let $a(y,t)$ satisfy:
\begin{equation} \label{eq:equation_for_psi}
	i \de_t a = \mathscr{H}(t) a
\end{equation}
where $\mathscr{H}(t)$ is self-adjoint for every $t$. Let $\mathcal{G}(t)$ denote a physical observable defined by: 
\begin{equation}
	\mathcal{G}(t) := \inty{\field{R}^d}{}{ \overline{a(y,t)} {G} a(y,t) }{y} 
\end{equation} 
where ${G}$ is self-adjoint. Then: 
\begin{equation}
	\dot{\mathcal{G}}(t) = i \inty{\field{R}^d}{}{ \overline{a(y,t)} [\mathscr{H}(t), {G}] a(y,t) }{y} 
\end{equation}
\end{lemma}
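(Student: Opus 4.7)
The plan is to differentiate the definition of $\mathcal{G}(t)$ under the integral sign, substitute the Schr\"odinger equation for $\partial_t a$, and then exploit self-adjointness of $\mathscr{H}(t)$ to assemble the commutator. First I would write
\begin{equation*}
\dot{\mathcal{G}}(t) = \inty{\field{R}^d}{}{ \overline{\partial_t a(y,t)} \, G \, a(y,t) }{y} + \inty{\field{R}^d}{}{ \overline{a(y,t)} \, G \, \partial_t a(y,t) }{y},
\end{equation*}
which is justified under the regularity assumptions that apply whenever this lemma is invoked in Section~\ref{sec:center_of_mass_calculation} (the functions $a(y,t)$ lie in Schwartz-class--type spaces $\Sigma^l(\field{R}^d)$ and $G$, $\mathscr{H}(t)$ are polynomial--coefficient differential operators, so differentiation under the integral is legitimate).

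Next I would substitute $\partial_t a = -i \mathscr{H}(t) a$ and its complex conjugate $\overline{\partial_t a} = i \, \overline{\mathscr{H}(t) a}$ to obtain
\begin{equation*}
\dot{\mathcal{G}}(t) = i \inty{\field{R}^d}{}{ \overline{\mathscr{H}(t) a(y,t)} \, G \, a(y,t) }{y} \; - \; i \inty{\field{R}^d}{}{ \overline{a(y,t)} \, G \, \mathscr{H}(t) a(y,t) }{y}.
\end{equation*}
Using self-adjointness of $\mathscr{H}(t)$ on the first term to move $\mathscr{H}(t)$ across the $L^2_y$ inner product yields
\begin{equation*}
\inty{\field{R}^d}{}{ \overline{\mathscr{H}(t) a(y,t)} \, G \, a(y,t) }{y} = \inty{\field{R}^d}{}{ \overline{a(y,t)} \, \mathscr{H}(t) G \, a(y,t) }{y}.
\end{equation*}
Subtracting the two resulting integrals collapses the right-hand side into $i \ipy{a(y,t)}{[\mathscr{H}(t),G] a(y,t)}$, which is the desired identity.

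There is essentially no obstacle here; the only point meriting care is that $\mathscr{H}(t)$ and $G$ are typically unbounded, so one must verify that both $a(\cdot,t)$ and $G a(\cdot,t)$ lie in the domain of $\mathscr{H}(t)$ (equivalently in a suitable $\Sigma^l$ space) for the self-adjointness manipulation above to be literally an $L^2_y$ inner product identity. In every application of the lemma in this paper, $G$ is a first- or second-order polynomial in $y$ and $-i\nabla_y$ and $a$ stays in $\cap_{l} \Sigma^l(\field{R}^d) = \mathcal{S}(\field{R}^d)$ by Lemma~\ref{lem:kitada} and Lemma~\ref{bounds_in_sigma_l}, so this technical requirement is automatic.
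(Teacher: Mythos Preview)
Your proof is correct and is precisely the standard Ehrenfest-type computation the paper has in mind; indeed the paper does not spell out a proof at all, merely declaring the lemma ``elementary to prove,'' so your argument is exactly the intended one.
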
 
\begin{lemma} \label{lem:less_basic_lemma}
Let $a(y,t)$ satisfy (\ref{eq:equation_for_psi}), and $b(y,t)$ satisfy:
\begin{equation}
	i \de_t b = \mathscr{H}(t) b + \mathscr{I}(t) a
\end{equation}
where $\mathscr{I}(t)$ is self-adjoint for every $t$. Define:
\begin{equation}
	\mathcal{K}(t) := \inty{\field{R}^d}{}{ \overline{ b(y,t) } {K} a(y,t) }{y} + \inty{\field{R}^d}{}{ \overline{ a(y,t) } {K} b(y,t) }{y}
\end{equation}
where ${K}$ is self-adjoint. Then:
\begin{equation}
\begin{split}
	&\dot{\mathcal{K}}(t) = i \inty{\field{R}^d}{}{ \overline{b(y,t)} [\mathscr{H}(t),{K}] a(y,t) }{y} + i \inty{\field{R}^d}{}{ \overline{a(y,t)} [\mathscr{H}(t),{K}] b(y,t)}{y} \\  
	&+ i \inty{\field{R}^d}{}{ \overline{a(y,t)} [\mathscr{I}(t),{K}] a(y,t) }{y}
\end{split}
\end{equation}
\end{lemma}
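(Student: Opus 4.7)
The plan is to compute $\dot{\mathcal{K}}(t)$ by differentiating directly under the integral. Applying the product rule to each of the two inner products defining $\mathcal{K}(t)$ produces four terms:
\begin{equation*}
\dot{\mathcal{K}}(t) = \ipy{\de_t b}{K a} + \ipy{b}{K \de_t a} + \ipy{\de_t a}{K b} + \ipy{a}{K \de_t b}.
\end{equation*}
I would then substitute the evolution equations $\de_t a = -i \mathscr{H}(t) a$ and $\de_t b = -i \mathscr{H}(t) b - i \mathscr{I}(t) a$, together with their complex conjugates. Each resulting term acquires a factor of $\pm i$, with the sign depending on whether the time derivative sits in the first or the second argument of the inner product.

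Next I would use self-adjointness of $\mathscr{H}(t)$, $\mathscr{I}(t)$, and $K$ on $L^2_y(\field{R}^d)$ to move the operator originally acting on $\de_t a$ or $\de_t b$ across the inner product, so that in every term the operator $K$ acts on the right factor and $\mathscr{H}$ or $\mathscr{I}$ acts from the left on whichever side it ends up. The four contributions involving $\mathscr{H}$ then combine naturally in pairs: the two terms with $\overline{b}$ on the left collapse to $i \ipy{b}{[\mathscr{H}(t), K] a}$, while the two terms with $\overline{a}$ on the left collapse to $i \ipy{a}{[\mathscr{H}(t), K] b}$. The two contributions from $\mathscr{I}$ both have $\overline{a}$ and $a$ as outer factors and collapse to $i \ipy{a}{[\mathscr{I}(t), K] a}$. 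Assembling these gives exactly the claimed identity.

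The argument is entirely algebraic once differentiation under the integral is justified, so the only subtlety is a domain issue: $\mathscr{H}(t)$, $\mathscr{I}(t)$ and $K$ are unbounded, and moving adjoints back and forth amounts to integration by parts that requires sufficient regularity and decay of $a,b$ to kill boundary terms. This is not a real obstacle in the uses made of the lemma in Section \ref{sec:center_of_mass_calculation}: Lemma \ref{bounds_in_sigma_l} together with Duhamel's formula applied to (\ref{eq:first_order_envelope_equation}) shows that $a(\cdot,t), b(\cdot,t) \in \Sigma^l(\field{R}^d)$ for every $l \geq 0$, while the relevant choices of $K$ (powers of $y$ and of $-i\nabla_y$) are polynomials of bounded degree, so every commutator and pairing makes sense classically in a dense invariant subspace of $\mathcal{S}(\field{R}^d)$. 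Finally, Lemma \ref{lem:basic_lemma} is the special case $b \equiv 0$ and follows from the same calculation with the $b$- and $\mathscr{I}$-terms omitted.
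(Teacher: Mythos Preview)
Your argument is correct and is exactly the elementary computation the paper has in mind; the paper itself omits the proof, stating only that the lemma is ``elementary to prove.'' The one minor imprecision is your closing remark: Lemma~\ref{lem:basic_lemma} is not literally the case $b\equiv 0$ of $\mathcal{K}$ (that would give zero), but rather the analogous computation for the single pairing $\ipy{a}{G a}$, proved by the same differentiate--substitute--commute routine.
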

\begin{lemma}
Let ${G}_1, {G}_2, {G}_3$ be operators. Then:
\begin{equation}
\begin{split}
	&[{G}_1 {G}_2,{G}_3] = {G}_1 [{G}_2,{G}_3] + [{G}_1,{G}_3] {G}_2	\\
	&[{G}_1,{G}_2 {G}_3] = [{G}_1,{G}_2] {G}_3 + {G}_2 [{G}_1,{G}_3]	
\end{split}
\end{equation} 
\end{lemma}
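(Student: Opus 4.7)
The plan is to prove both identities by direct expansion using only the definition of the commutator $[A,B] := AB - BA$. No operator-theoretic subtlety is needed: the identities hold at the level of formal algebra for any associative (but not necessarily commutative) multiplication, so the argument reduces to verifying that two purely algebraic expressions agree.

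For the first identity, I would begin by writing out the left-hand side as $[G_1 G_2, G_3] = G_1 G_2 G_3 - G_3 G_1 G_2$. Then I would expand the right-hand side:
\begin{equation*}
G_1[G_2,G_3] + [G_1,G_3]G_2 = G_1(G_2 G_3 - G_3 G_2) + (G_1 G_3 - G_3 G_1)G_2,
\end{equation*}
and observe that the two cross-terms $-G_1 G_3 G_2$ and $+G_1 G_3 G_2$ cancel, leaving $G_1 G_2 G_3 - G_3 G_1 G_2$, which matches the left-hand side. The second identity is handled by the same strategy: expand $[G_1, G_2 G_3] = G_1 G_2 G_3 - G_2 G_3 G_1$ and then expand $[G_1,G_2]G_3 + G_2[G_1,G_3]$, noting the cancellation of $-G_2 G_1 G_3$ and $+G_2 G_1 G_3$.

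Since the manipulations are formal and require only associativity of operator composition, there is no genuine obstacle; the only thing to be careful of is keeping the order of factors correct, since the operators do not commute. In particular, one must resist the temptation to move factors past each other, and one should note the pattern that the identities are precisely the \emph{Leibniz rule} for the derivation $[\,\cdot\,,G_3]$ (respectively $[G_1,\,\cdot\,]$) acting on a product. This structural remark also makes it clear that the two identities are equivalent by symmetry: applying the first with the roles of the two arguments of the outer commutator interchanged (and using $[A,B] = -[B,A]$) yields the second. One could therefore shorten the exposition by proving only one identity directly and deducing the other from antisymmetry.
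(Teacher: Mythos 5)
Your proof is correct, and since the paper leaves this lemma unproved (it is listed among the "simple lemmas which are each elementary to prove"), your direct expansion via the definition $[A,B]=AB-BA$ is exactly the intended argument. Both identities and the antisymmetry remark check out.
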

\begin{lemma}
\begin{equation}
	[(- i \de_{y_\alpha}),y_\beta] = - i \delta_{\alpha \beta}, [y_\alpha,(- i \de_{y_\beta})] = i \delta_{\alpha \beta}. 
\end{equation}
\end{lemma}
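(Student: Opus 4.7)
\medskip

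\noindent\textbf{Proof plan for assertion (1).} This is a direct application of Proposition \ref{prop:gaussian_wavepackets} to the envelope equation \eqref{eq:a_epsilon_equation} rather than \eqref{eq:envelope_equation}. The only structural difference between the two equations is that the time-dependent coefficients $D^2_p E_n(p(t)), D^2_q W(q(t))$ are replaced by $D^2_{\mathcal{P}^\epsilon} E_n(\mathcal{P}^\epsilon(t)), D^2_{\mathcal{Q}^\epsilon} W(\mathcal{Q}^\epsilon(t))$, which are likewise continuous, real, symmetric, and uniformly bounded in $t$ under Assumptions \ref{isolated_band_assumption_nonseparable_case}, \ref{W_assumption}. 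The plan is to substitute \eqref{eq:a_epsilon_gaussian} into \eqref{eq:a_epsilon_equation}, differentiate both the quadratic exponent and the prefactor $[\det A^\epsilon(t)]^{-1/2}$ in time, and match the terms quadratic in $y$ and the terms independent of $y$ separately. Matching the quadratic terms, together with the symplectic constraints, yields the linear ODE system \eqref{eq:equations_for_A_B}; matching the constant-in-$y$ terms is automatically satisfied because the prefactor time derivative $-\tfrac12\mathrm{Tr}((A^\epsilon)^{-1}\dot A^\epsilon)$ cancels the corresponding diagonal contribution from the complex-valued exponent, exactly as in the proof of Proposition \ref{prop:gaussian_wavepackets}. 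Preservation of \eqref{eq:conditions_on_A_B_again} along the flow follows directly from symmetry of $D^2_{\mathcal{P}^\epsilon}E_n$ and $D^2_{\mathcal{Q}^\epsilon}W$.

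\medskip

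\noindent\textbf{Proof plan for assertion (2).} The plan is to reduce the infinite-dimensional Hamiltonian \eqref{eq:periodic_hamiltonian} to the finite-dimensional Hamiltonian \eqref{eq:periodic_hamiltonian_gaussian} by evaluating the $\mathfrak{a}^\epsilon$-dependent expectations on Gaussians, and then verify that Hamilton's equations \eqref{eq:gaussian_eom} reproduce the ODEs \eqref{eq:equations_for_A_B} together with the $(\mathscr{Q}^\epsilon,\mathscr{P}^\epsilon)$-system of Theorem \ref{prop:center_of_mass_proposition}. The key computational input is the pair of Gaussian matrix element identities
\begin{equation*}
\ipy{\mathfrak{a}^\epsilon}{y_\alpha y_\beta \mathfrak{a}^\epsilon} = \tfrac{1}{2}\bigl(A^\epsilon \overline{A^\epsilon{}^T}\bigr)_{\alpha\beta}, \qquad \ipy{(-i\de_{y_\alpha})\mathfrak{a}^\epsilon}{(-i\de_{y_\beta})\mathfrak{a}^\epsilon} = \tfrac{1}{2}\bigl(B^\epsilon \overline{B^\epsilon{}^T}\bigr)_{\alpha\beta},
\end{equation*}
which I would derive as follows. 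By Remark \ref{matrices}(3), $|\mathfrak{a}^\epsilon|^2$ is a normalized Gaussian with covariance $\tfrac12 A^\epsilon\overline{A^\epsilon{}^T}$, giving the first identity immediately. For the second, use $(-i\nabla_y)\mathfrak{a}^\epsilon = B^\epsilon (A^\epsilon)^{-1}y\cdot\mathfrak{a}^\epsilon$ to reduce the integral to a second moment against $|\mathfrak{a}^\epsilon|^2$, yielding $\tfrac12 B^\epsilon A^{-1}\,(A\overline{A^T})\,(B A^{-1})^T$; the algebraic identity $\overline{A^T}B = 2iI+\overline{B^T}A$ from \eqref{eq:conditions_on_A_B_again} together with symmetry of $B A^{-1}$ collapses this to $\tfrac{1}{2}B^\epsilon\overline{B^\epsilon{}^T}$. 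Contracting these identities with $D^2_{\mathscr{P}^\epsilon}E_n$ and $D^2_{\mathscr{Q}^\epsilon}W$ and applying the cyclicity of trace converts the two expectation terms in \eqref{eq:periodic_hamiltonian} into the trace terms in \eqref{eq:periodic_hamiltonian_gaussian}.

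\medskip

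\noindent\textbf{Hamilton's equations and obstacle.} Having identified \eqref{eq:periodic_hamiltonian} with \eqref{eq:periodic_hamiltonian_gaussian} on the Gaussian submanifold, I would verify the equations of motion for $(A^\epsilon, B^\epsilon)$ using the complex matrix derivative rule $\partial_{\overline{B}}\mathrm{Tr}\bigl((\overline{B})^T M B\bigr) = MB$ for symmetric $M$; applied to \eqref{eq:periodic_hamiltonian_gaussian} this returns precisely \eqref{eq:equations_for_A_B} under the scaling convention of \eqref{eq:gaussian_eom}, the factor of $4$ being the correct coefficient for the pushforward of the infinite-dimensional symplectic form $i\,d\overline{\mathfrak{a}^\epsilon}\wedge d\mathfrak{a}^\epsilon$ to the coordinates $(A^\epsilon, \overline{A^\epsilon}, B^\epsilon, \overline{B^\epsilon})$ on the Gaussian submanifold. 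The $(\mathscr{Q}^\epsilon,\mathscr{P}^\epsilon)$-equations agree with those of Theorem \ref{prop:center_of_mass_proposition} because, under the constraint \eqref{eq:well_prepared_envelope} on $a_0$ and the symplectic identities of Remark \ref{matrices}, first moments of $\mathfrak{a}^\epsilon$ in $y$ and $-i\nabla_y$ vanish at all times. The main obstacle is the careful bookkeeping of complex matrix derivatives and the repeated algebraic use of the constraint $\overline{A^T}B - \overline{B^T}A = 2iI$; in particular one must verify that the pushforward symplectic structure on the Gaussian submanifold is consistent with the convention in \eqref{eq:gaussian_eom} so that the $(A^\epsilon, B^\epsilon)$ Hamilton equations match the evolution already derived directly from \eqref{eq:a_epsilon_equation} in assertion (1).
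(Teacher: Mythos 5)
Your proposal does not address the statement at hand. The lemma to be proved is the elementary canonical commutation relation
\begin{equation*}
[(- i \de_{y_\alpha}),y_\beta] = - i \delta_{\alpha \beta}, \qquad [y_\alpha,(- i \de_{y_\beta})] = i \delta_{\alpha \beta},
\end{equation*}
which the paper lists among the ``simple lemmas which are each elementary to prove'' in Appendix \ref{app:computation_of_dynamics_of_physical_observables} and uses to compute commutators such as $i[\mathscr{H}(t),y_\alpha]$ and $i[\mathscr{H}(t),(-i\de_{y_\alpha})]$. The proof is a one-line application of the product rule: for any test function $f$, $(-i\de_{y_\alpha})(y_\beta f) = -i\delta_{\alpha\beta} f + y_\beta(-i\de_{y_\alpha})f$, so $[(-i\de_{y_\alpha}),y_\beta]f = -i\delta_{\alpha\beta}f$, and the second identity is the negative of the first. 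None of this appears in your write-up.

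What you have instead written is a proof plan for Corollary \ref{cor:gaussian_case} (the Gaussian wavepacket corollary): assertion (1) on the propagation of Gaussian envelopes under \eqref{eq:a_epsilon_equation}, and assertion (2) on the reduction of the Hamiltonian \eqref{eq:periodic_hamiltonian} to \eqref{eq:periodic_hamiltonian_gaussian} via Gaussian second-moment identities. That material, whatever its merits, proves a different result; as a proof of the commutator lemma it is a complete miss. You should replace it with the direct product-rule computation above.
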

We will apply the Lemmas with:
\begin{equation}
	\mathscr{H}(t) = \frac{1}{2} \de_{p_\alpha} \de_{p_\beta} E_n(p(t)) (- i \de_{y_\alpha}) (- i \de_{y_\beta}) + \frac{1}{2} \de_{q_\alpha} \de_{q_\beta} W(q(t)) y_\alpha y_\beta \\
\end{equation}
For any operator ${G}$, we have: 
\begin{equation} \label{eq:commuting_G_with_H}
\begin{split}
	&i [ \mathscr{H}(t) , {G} ] = \frac{1}{2} i \de_{p_\alpha} \de_{p_\beta} E_n(p(t)) (- i \de_{y_\alpha}) [(- i \de_{y_\beta}),{G}] + \frac{1}{2} i \de_{p_\alpha} \de_{p_\beta} E_n(p(t)) [(- i \de_{y_\alpha}),{G}] (- i \de_{y_\beta}) \\
	&+ \frac{1}{2} i \de_{q_\alpha} \de_{q_\beta} W(q(t))(t) y_\alpha [ y_\beta , {G}] + \frac{1}{2} i \de_{q_\alpha} \de_{q_\beta} W(q(t))(t) [y_\alpha ,{G}] y_\beta		\\
\end{split}
\end{equation}
From (\ref{eq:commuting_G_with_H}), we have: 
\begin{equation}
\begin{split}
	&i [ \mathscr{H}(t) , y_\alpha ] = \de_{p_\alpha}\de_{p_\beta} E_n(p(t)) (- i \de_{y_\beta}) 	\\
	&i [ \mathscr{H}(t) , (- i \de_{y_\alpha}) ] = - \de_{q_\alpha}\de_{q_\beta} W(q(t)) y_\beta 
\end{split}
\end{equation}
So that, using Lemma \ref{lem:basic_lemma}:
\begin{equation} \label{eq:Q_dot_P_dot_a}
\begin{split}
	&\fdf{t} \ipy{a(y,t)}{y_\alpha a(y,t)} = \de_{p_\alpha} \de_{p_\beta} E_n(p(t)) \ipy{a(y,t)}{(- i \de_{y_\beta}) a(y,t)}	 \\
	&\fdf{t} \ipy{a(y,t)}{(- i \de_{y_\alpha}) a(y,t)} = - \de_{q_\alpha} \de_{q_\beta} W(q(t)) \ipy{a(y,t)}{y_\beta a(y,t)}.
\end{split}
\end{equation}
Note that it follows from (\ref{eq:Q_dot_P_dot_a}) that:
\begin{equation} \label{eq:Q_dot_P_dot_a_corollary}
\begin{split}
	&\ipy{a(y,0)}{y a(y,0)} = \ipy{a(y,0)}{(- i \nabla_y) a(y,0)} = 0	\\
	&\implies \text{for all } t \geq 0, \ipy{a(y,t)}{y a(y,t)} = \ipy{a(y,t)}{(- i \nabla_y) a(y,t)} = 0.
\end{split}
\end{equation}
We will then apply Lemma \ref{lem:less_basic_lemma} with:
\begin{equation}
\begin{split}
	&\mathscr{I}(t) = \frac{1}{6} \de_{p_\alpha} \de_{p_\beta} \de_{p_\gamma} E_n(p(t)) (- i \de_{y_\alpha})(- i \de_{y_\beta})( - i \de_{y_\gamma}) + \frac{1}{6} \de_{q_\alpha} \de_{q_\beta} \de_{q_\gamma} W(q(t)) y_\alpha y_\beta y_\gamma 	\\ 
	&+ \de_{p_\beta} \left[ \nabla_{q} W(q(t)) \cdot \mathcal{A}_{n}(p(t)) \right] (- i \de_{y_\beta}) + \de_{q_\beta} \left[ \nabla_{q} W(q(t)) \cdot \mathcal{A}_{n}(p(t)) \right] y_\beta \\
\end{split}
\end{equation}
Calculating the commutators:
\begin{equation}
\begin{split}
	&i [\mathscr{I}(t),1] = 0	\\
	&i [\mathscr{I}(t),y_\alpha] = \frac{1}{2} \de_{p_\alpha} \de_{p_\beta} \de_{p_\gamma} E_n(p(t)) (- i \de_{y_\beta}) (- i \de_{y_\gamma}) + \de_{p_\alpha} \left[ \nabla_{q} W(q(t)) \cdot \mathcal{A}_{n}(p(t)) \right] \\
	&i [\mathscr{I}(t),(- i \de_{y_\alpha})] = - \frac{1}{2} \de_{q_\alpha} \de_{q_\beta} \de_{q_\gamma} W(q(t)) y_\beta y_\gamma - \de_{q_\alpha} \left[ \nabla_{q} W(q(t)) \cdot \mathcal{A}_{n}(p(t))  \right] \\
\end{split}
\end{equation}
We have, by Lemma \ref{lem:less_basic_lemma}:
\begin{equation} \label{eq:de_b_a_1}
	\fdf{t} \left[ \ipy{b(y,t)}{a(y,t)} + \ipy{a(y,t)}{b(y,t)} \right] = 0	
\end{equation}
\begin{equation} \label{eq:a_b_more}
\begin{split}
	&\fdf{t} \left[ \ipy{b(y,t)}{y_\alpha a(y,t)} + \ipy{a(y,t)}{y_\alpha b(y,t)} \right]	\\
	&= \de_{p_\alpha} \de_{p_\beta} E_n(p(t)) \left[ \ipy{b(y,t)}{(- i \de_{y_\alpha}) a(y,t)} + \ipy{a(y,t)}{(- i \de_{y_\alpha}) b(y,t)} \right] \\ 
	&+ \frac{1}{2} \de_{p_\alpha} \de_{p_\beta} \de_{p_\gamma} E_n(p(t)) \ipy{ a(y,t) }{ (- i \de_{y_\beta}) (- i \de_{y_\gamma}) a(y,t) } \\
	&+ \de_{p_\alpha} \left[ \nabla_{q} W(q(t)) \cdot \mathcal{A}_{n}(p(t)) \right] \| a(y,t) \|^2_{L^2_y(\field{R}^d)} \\
	&\fdf{t} \left[ \ipy{b(y,t)}{(- i \de_{y_\alpha}) a(y,t)} + \ipy{a(y,t)}{(- i \de_{y_\alpha}) b(y,t)} \right] 	\\
	&= - \de_{q_\alpha} \de_{q_\beta} W(q(t)) \left[ \ipy{ b(y,t) }{ y_\beta a(y,t) } + \ipy{ a(y,t) }{ y_\beta b(y,t) } \right] \\ 
	&- \frac{1}{2} \de_{q_\alpha} \de_{q_\beta} \de_{q_\gamma} W(q(t)) \ipy{a(y,t)}{y_\beta y_\gamma a(y,t)} \\
	&- \de_{q_\alpha} \left[ \nabla_{q} W(q(t)) \cdot \mathcal{A}_{n}(p(t)) \right] \| a(y,t) \|^2_{L^2_y(\field{R}^d)}
\end{split}
\end{equation}

\printbibliography

\end{document}